\documentclass[12pt]{article}

\usepackage[T1]{fontenc}
\usepackage{lmodern}
\usepackage{amssymb}
\usepackage{amsfonts}
\usepackage{amsmath}
\usepackage[nohead]{geometry}
\usepackage[singlespacing]{setspace}
\usepackage[bottom]{footmisc}
\usepackage{indentfirst}
\usepackage{endnotes}
\usepackage[pdftex,xdvi]{graphicx}
\usepackage{epstopdf}
\usepackage{rotating}
\usepackage{verbatim}
\usepackage{setspace}
\usepackage{multirow}
\usepackage{latexsym}
\usepackage{tikz}
\usetikzlibrary{arrows.meta,
chains,
positioning,
shapes.geometric
}
\usepackage{color}
\usepackage{amsthm}
\usepackage{makeidx}
\usepackage{fancyhdr}
\usepackage{type1cm}
\usepackage{mathtools}
\usepackage{blindtext}
\usepackage{tocloft}

\usepackage{dcolumn}
\usepackage{pdfpages}

\usepackage{caption}
\usepackage{booktabs}
\usepackage{longtable}
\usepackage{array}
\usepackage{wrapfig}
\usepackage{float}
\usepackage{colortbl}
\usepackage{pdflscape}
\usepackage{threeparttable}
\usepackage{threeparttablex}
\usepackage[normalem]{ulem}
\usepackage{makecell}
\usepackage{xcolor}
\usepackage{subcaption}
\usepackage{morefloats}
\usepackage{placeins}
\usepackage{enumerate}
\usepackage{algorithm}
\usepackage{algpseudocode}
\usepackage{ragged2e}

\usepackage{textcomp}
\usepackage[hidelinks,hypertexnames=false]{hyperref}
\hypersetup{
pdftitle={Minimax Choice of Projection Geometry under Linear Inequality Constraints},
pdfauthor={Joachim Freyberger and Julius Kappenberg},
pdfkeywords={shape restrictions, inequality constraints, projection geometry, minimax risk}
}
\usepackage[round]{natbib}

\newcommand{\R}{\ensuremath{\mathbb{R}}}

\newcommand{\1}{\ensuremath{\mathbf{1}}}

\newtheoremstyle{dotless}{}{}{\itshape}{}{\bfseries}{}{ }{}

\theoremstyle{dotless}

\newtheorem{theorem}{Theorem}[section]
\newtheorem{corollary}{Corollary}[theorem]
\newtheorem{lemma}[theorem]{Lemma}
\newtheorem{proposition}[theorem]{Proposition}

\newtheorem{example}{Example}

\DeclareMathOperator*{\argmin}{arg\,min}

\long\def\symbolfootnote[#1]#2{\begingroup%
	\def\thefootnote{\fnsymbol{footnote}}\footnote[#1]{#2}\endgroup}

\begin{document}
	
\allowdisplaybreaks

\setstretch{1.5}

\begin{center}
	
	\quad
	\vspace{0mm}

\Large{\bf{Minimax Choice of Projection Geometry under Linear Inequality Constraints}}\symbolfootnote[1]{We thank Antonia Antweiler, Tim Armstrong, Stéphane Bonhomme, Bj\"orn H\"oppner, Kirill Ponomarev, Jan Scherer, and audiences at the IAAE Annual Conference 2026 and the 2026 European Meeting of the Econometric Society for very helpful comments and discussions. The research was supported by the European Research Council (ERC-2020-STG-949319).}

\vspace{7mm}

\normalsize 

Joachim Freyberger\symbolfootnote[2]{University of Bonn. Email: freyberger@uni-bonn.de} \hspace{20mm}Julius Kappenberg\symbolfootnote[3]{University of Bonn. Email: julius.kappenberg@uni-bonn.de}

\vspace{7mm}
	

\today
	
\vspace{1cm}

\end{center}

\normalsize

\setstretch{1.25}

\noindent \textbf{Abstract} \medskip

\noindent Economic theory frequently implies linear inequality restrictions on parameters or functions of interest. A common way to impose such restrictions is to project an unrestricted estimator onto the feasible set. Projection estimators arise naturally from constrained least squares, instrumental variables, generalized method of moments, maximum likelihood, and related extremum procedures. When the sampling covariance, loss function, and projection criterion induce different geometries, the choice of projection geometry can substantially affect risk. We study this choice in a fixed-dimensional local Gaussian experiment under quadratic loss. At exact-boundary configurations where only one maintained inequality binds, inverse-covariance projection is pointwise optimal. When at most two inequalities are locally relevant, it weakly improves on the unrestricted estimator throughout the corresponding local experiment and is minimax over exact-boundary configurations. For an arbitrary number of inequalities, we provide a sufficient condition for boundary minimaxity, but show by counterexample that inverse-covariance projection need not be boundary minimax once three inequalities can bind. Motivated by these results, we propose selecting the projection geometry to minimize worst-case boundary risk subject to a local no-harm condition relative to unrestricted estimation. We develop a feasible implementation and study its finite-sample performance in simulations and an application to gasoline demand.

\vspace{1cm}

\newpage

\setstretch{1.5}

\section{Introduction} \label{sec:introduction}
Economic theory frequently implies inequality restrictions on objects of interest. Demand functions are typically required to be decreasing in prices, production functions may satisfy monotonicity and concavity restrictions, treatment effects may be ordered across treatment intensities, estimated quantile curves should not cross, and probability weights are subject to non-negativity restrictions, often together with adding-up equalities. Such restrictions can contain useful information and improve estimation precision.

A common way to incorporate these restrictions is through projection. Starting from an unrestricted estimator $\hat\theta_{ur}$, the researcher projects it onto the feasible parameter space whenever the unrestricted estimate violates the maintained inequalities. Projection-based estimators arise naturally from constrained versions of standard extremum procedures, including least squares, instrumental variables (IV), generalized method of moments (GMM), and maximum likelihood. They also have an appealing stability property: if the true parameter lies in the strict interior of the feasible set, the restricted and unrestricted estimators coincide with probability approaching one. Hence, imposing the restrictions does not alter the first-order behavior of the estimator when the restrictions are asymptotically irrelevant.

Projection estimators therefore form a natural and practically important class of constrained estimators. Within this class, the norm used for projection is an important choice. Different projection geometries impose the same restrictions but can move the same unrestricted estimate in different directions, generating substantially different restricted estimates and risks when the constraints bind. Given an unrestricted estimator and a set of maintained linear inequality constraints, we ask which projection geometry delivers desirable risk properties and how the answer depends on the sampling covariance, the loss function, and the geometry of the constraints.

To make the distinction explicit, suppose the unrestricted estimator is asymptotically normal with covariance matrix $\Sigma$, and let $H$ denote the quadratic loss used to evaluate estimation error. A projection estimator additionally requires a positive-definite matrix $\Omega$ that determines how an infeasible unrestricted estimate is moved toward the constraint set. Thus there are three potentially distinct objects: the sampling geometry $\Sigma$, the loss geometry $H$, and the projection geometry $\Omega$. Their interaction is central to the paper. We use the term \emph{projection direction} when the local problem reduces to one locally relevant inequality and \emph{projection geometry} for the general problem with several constraints. As illustrated in Section~\ref{subsec:examples}, the projection geometry may be inherited from the estimation criterion, as in constrained maximum likelihood or GMM, or chosen as a separate post-estimation step. For exact linear equality restrictions, inverse-covariance projection is classically optimal within this class \citep{Theil1971}. Under inequalities, the relevant face depends on the realization of the unrestricted estimator, so the geometry choice becomes more subtle.

We first consider one locally relevant inequality. In the Gaussian local experiment, inverse-covariance projection is optimal at the exact boundary and weakly improves on unrestricted estimation throughout the corresponding local experiment. Unless the covariance and loss projection directions coincide, it is neither pointwise optimal at finite nonzero local slack nor minimax over any local neighborhood of positive width.

When several inequalities can bind locally, no single projection geometry need be pointwise optimal at every configuration, even on the boundary. We therefore study minimax risk over exact binding configurations, where each inequality is either binding or asymptotically irrelevant. This criterion is tractable and, unlike minimaxity over a bounded local neighborhood, does not require a researcher-chosen neighborhood. Minimaxity over the enlarged local-slack set is not informative for choosing projection geometry because sufficiently interior configurations pin down the minimax value at unrestricted risk. Minimaxity over all boundary configurations with arbitrary finite local slack in nonbinding inequalities would also be natural, but our analytical results do not characterize that problem. We use exact binding configurations as a transparent, tuning-free benchmark.

When at most two inequalities can bind locally, inverse-covariance projection is boundary minimax. For an arbitrary number of inequalities, the same result holds if the worst boundary risk under inverse covariance is attained at a one-binding configuration. This condition holds in many of our numerical examples, but we also construct a three-constraint example in which it fails and another projection geometry has strictly lower worst-case boundary risk. With several constraints, inverse-covariance projection can also have greater risk than unrestricted estimation. Thus neither boundary minimaxity nor the one-inequality guarantee of weak improvement over unrestricted estimation extends automatically.

These results motivate a practical selection problem. We propose choosing the projection geometry to minimize worst-case boundary risk subject to a \emph{local no-harm condition}, which requires that the restricted estimator have no larger asymptotic risk than the unrestricted estimator over the enlarged local-slack set. When inverse covariance satisfies both our sufficient condition for boundary minimaxity and the local no-harm condition, it solves this problem over all projection geometries generated by positive-definite metrics. More generally, direct optimization over the positive-definite cone is difficult, so we develop a feasible implementation along a one-dimensional path between the projection based on the loss geometry and inverse-covariance projection. If a researcher instead wants to use a prespecified projection geometry without implementing these diagnostics, inverse covariance remains a well-motivated default. It has especially strong analytical support for configurations with one or two inequalities that can bind locally and performs well in the economically motivated simulations and application below. The procedure provides a safeguard when interactions among several constraints make that default less attractive.

Simulations illustrate the finite-sample importance of projection geometry. Gaussian designs with interacting constraints show how the proposed selector behaves when inverse covariance can fail. Ordered-treatment simulations study an economically motivated setting in which the diagnostic supports inverse covariance, while a flexible IV design compares constrained criteria that yield different restricted estimators from the same unrestricted estimator. Section~\ref{sec:application} applies the framework to gasoline demand using \citet{blundell2012measuring}.

\textbf{Related literature:} This paper connects to several strands of work on shape restrictions, projection estimators, and minimax decision theory.

\emph{Shape-restricted and order-restricted estimation.}
There is a long literature on estimation under order and shape restrictions. Early contributions include \citet{Hildreth:54}, \citet{Brunk:55}, and \citet{ayer1955empirical}, with systematic treatments in \citet{BarlowEtAl1972} and \citet{RobertsonEtAl1988}. See also \citet{chetverikov2018econometrics} and \citet{johnson2018shape}. Weighted projection is well established \citep{OonoShinozaki2005}. For particular order-restricted models, \citet{Lee1988} studies domination under alternative projection weights, while \citet{GargMisra2025} derive parameter-dependent risk-minimizing weights and admissibility results within isotonic families under componentwise squared-error loss. Neither characterizes the general exact-boundary minimax problem studied here. Closely related to our no-harm comparisons, \citet{RuedaSalvador1995} establish MSE domination of the restricted Gaussian MLE for arbitrary linear functionals with one or two inequalities, while \citet{ShinozakiChang1999} characterize failures under positive-orthant restrictions. Related work studies admissibility, minimaxity, domination, and shrinkage under restricted parameter spaces \citep{SilvapulleSen2005,vanEeden2006,MarchandStrawderman2004,TsukumaKubokawa2008}.

\emph{Risk of projection estimators and tangent-cone geometry.}
A substantial literature studies risk of projection estimators under convex and shape restrictions. \citet{Chatterjee2014}, \citet{CGS:15}, \citet{Bellec2018}, and \citet{GuntuboyinaSen2018} relate risk to local facial or tangent-cone structure and provide sharp adaptive and oracle bounds. In isotonic regression, Euclidean least squares adapts to unknown monotone complexity, with much smaller risk for few constant blocks than in worst-case settings. Related results allow dependence, heteroskedasticity, and separable criteria \citep{Zhang:02}; see also \citet{AmelunxenEtAl2014} and \citet{OymakHassibi2016}.

These results can analyze a specified projection procedure. However, changing the projection geometry also changes the geometry entering the risk problem, so even sharp characterizations for one geometry do not generally rank competing projection geometries under a common loss. We instead allow the sampling covariance $\Sigma$, loss geometry $H$, and projection geometry $\Omega$ to differ and treat $\Omega$ itself as the decision variable. Adapting existing tangent-cone and isotonic risk methods to projection-geometry choice in nonparametric monotone regression would require analyzing that choice in an increasing-dimensional setting. Our results instead use fixed-dimensional arguments for general linear inequalities.

\citet{fang2014optimal} provides a close decision-theoretic connection. He develops a local asymptotic minimax approach to plug-in estimation of directionally differentiable functionals and considers metric projection onto shape-restricted convex sets. The metric used for projection is fixed in his analysis, and the decision concerns the plug-in procedure. In his projection example, the underlying regular parameter may lie outside the restricted set, so varying the projection geometry would generally change the target under misspecification. We instead maintain the inequalities and hold the unrestricted estimator fixed, so alternative projection geometries target the same parameter.

\emph{Minimaxity and adaptation under restrictions.}
Related work studies minimax estimation and inference over restricted parameter spaces \citep{donoho1994,armstrong2018,CaiLow2004Minimax}. In our problem, however, a full-local minimax comparison is not informative for choosing projection geometry because sufficiently interior configurations pin down the minimax value at unrestricted risk. A finite neighborhood can restore discrimination, but our finite-neighborhood result shows that the solution then depends on the researcher-chosen neighborhood. We therefore use exact-boundary configurations only to rank projection geometries, not as prior information that the true parameter lies on the boundary. The estimator continues to impose the full inequality system, while the enlarged local-slack set provides the domain for our no-harm requirement. This separation between a favorable performance target and a broader robustness requirement is related to work on adaptation \citep{CaiLow2004Adaptation,CLX:13,Armstrong2015,KwonKwon2020}. Characterizing minimax procedures over broader estimator classes is outside our scope.

\emph{Constrained asymptotics and econometric shape restrictions.}
Projection is generally nondifferentiable at boundary points. \citet{Wright:81}, \citet{Geyer:94}, and \citet{Andrews:99} derive nonstandard limits for constrained estimators, while \citet{FangSantos2019} study inference for directionally differentiable functionals. Econometric work studies shape-restricted identification, estimation, and inference, including nonparametric IV and related models \citep{chetverikov2017nonparametric,FH:15,HL:17} and settings in which constraints may bind \citep{Dumbgen:03,freybergerreeves2018,chernozhukov2023constrained,cox2024}. These papers study what restrictions imply for estimation or inference; we study the geometry used to impose $A\theta\geq b$.

\emph{Alternative ways of imposing restrictions.}
Projection is not the only way to enforce restrictions. Alternatives include rearrangement and other shape-enforcing operators \citep{CFG:09,chernozhukov2010quantile,chen2021shape}, Bayesian procedures and posterior projection \citep{astfalck2024posteriorprojectioninferenceconstrained}, and procedures that first estimate which constraints bind \citep{HL:17}. These methods use restrictions differently from a metric projection of a fixed unrestricted estimator. We focus on projection estimators because they leave feasible unrestricted estimates unchanged and let the correction geometry be chosen by risk.

Taken together, these literatures motivate projection as an important estimator class without implying its optimality among all estimators. Our contribution is to characterize exact-boundary minimax choice of projection geometry under general linear inequalities and quadratic loss, and to combine this criterion with a local no-harm requirement while allowing sampling covariance, loss, and projection geometries to differ.

\textbf{Structure:} Section~\ref{sec:optimality} introduces the framework and motivating examples. Section~\ref{sec:results} develops the risk criteria and main theoretical results. Section~\ref{sec:selection} presents the practical selection procedure, Section~\ref{sec:simulations} studies its finite-sample performance, and Section~\ref{sec:application} contains the gasoline-demand application. Section~\ref{sec:summary} concludes. The appendices contain proofs, computational details, and additional simulations.

\section{Projection geometry under inequality constraints} \label{sec:optimality}
\subsection{General setup and notation}
\label{subsec:setup}

Let $\theta_0\in\mathbb{R}^k$ denote the finite-dimensional parameter of interest and suppose that it satisfies the linear inequalities
\[
\Theta_R=\{\theta\in\mathbb{R}^k:A\theta\geq b\},
\]
where $A\in\mathbb{R}^{d\times k}$ and $b\in\mathbb{R}^d$. We assume throughout that $\Theta_R$ is nonempty, has nonempty interior in $\mathbb R^k$, that the maintained representation does not contain redundant constraints, and that $\theta_0\in\Theta_R$. Rank conditions are imposed locally whenever a particular active-row inverse is used. Maintained equality restrictions are absorbed before defining $\theta$, so the inequalities are written in reduced coordinates.

Let $\hat\theta_{ur}$ be an unrestricted estimator satisfying
\[
\sqrt n(\hat\theta_{ur}-\theta_0)\xrightarrow{d}\mathcal V_{ur},
\qquad
\mathcal V_{ur}\sim N(0,\Sigma),
\]
where $\Sigma$ is positive definite, denoted $\Sigma\succ0$. For a positive-definite matrix $\Omega$, define $\|x\|_\Omega^2=x'\Omega x$. We study projection estimators
\[
\hat\theta_r(\hat\Omega)
=
\argmin_{\theta\in\Theta_R}\|\theta-\hat\theta_{ur}\|_{\hat\Omega}^2,
\qquad
\hat\Omega\xrightarrow{p}\Omega,
\]
as well as constrained extremum estimators that are asymptotically equivalent to such projections.
As shown in Appendix~\ref{appendix:proj_details}, restricting attention to symmetric positive-definite projection metrics is without loss of generality. Multiplying $\Omega$ by a positive scalar does not change the projection, so only the geometry induced by $\Omega$ matters. When the local problem reduces to a single relevant inequality with normal vector $a$, $\Omega$ affects the estimator only through $d_\Omega=\Omega^{-1}a/(a'\Omega^{-1}a)$. We therefore use the term \emph{projection direction} in the single-halfspace analysis and \emph{projection geometry} more generally.

We evaluate estimators under quadratic loss
\[
L_H(\hat\theta,\theta_0)
=(\hat\theta-\theta_0)'H(\hat\theta-\theta_0),
\qquad H\succ0.
\]
It is useful to keep $H$, $\Sigma$, and $\Omega$ conceptually distinct. The matrix $\Sigma$ describes sampling uncertainty, $H$ describes the loss used to evaluate estimation error, and $\Omega$ describes the geometry used to impose the restrictions. Appendix~\ref{appendix:proj_details} shows that the theoretical analysis can normalize $H=I$ using the transformation $\tilde u=H^{1/2}u$.
We retain $H$ in the statements and interpretation of the results, but normalize the loss when convenient in the proofs.

To describe behavior at and near the boundary, consider a sequence of true parameters $\theta_{0,n}\in\Theta_R$ satisfying
\[
\sqrt n(b-A\theta_{0,n})\rightarrow c,
\qquad
c_j\in(-\infty,0]\cup\{-\infty\}.
\]
We assume that the centered Gaussian limit above and convergence of $\hat\Omega$ hold along the local sequences considered. We suppress the subscript $n$ when there is no ambiguity. A coordinate $c_j=0$ corresponds to an exactly binding inequality, a finite $c_j<0$ to an inequality that is locally inside the feasible set at the $n^{-1/2}$ scale, and $c_j=-\infty$ to an asymptotically irrelevant inequality. Coordinates with $c_j=-\infty$ are omitted from the local inequality system. Define
\[
U(c)=\{u\in\mathbb{R}^k:Au\geq c\}.
\]
Under these assumptions, the local feasible sets converge and standard constrained argmin arguments \citep{Geyer:94,Andrews:99} imply that
\[
\sqrt n(\hat\theta_r(\hat\Omega)-\theta_{0,n})
\xrightarrow{d}
\mathcal V_r(\Omega,c),
\]
where
\[
\mathcal V_r(\Omega,c)
=
\argmin_{u\in U(c)}\|u-\mathcal V_{ur}\|_\Omega^2.
\]
We define local asymptotic risk as risk in the limiting Gaussian experiment,
\[
R_H(\Omega,c)
=
E[\mathcal V_r(\Omega,c)'H\mathcal V_r(\Omega,c)].
\]
Interpreting this quantity as the limit of expected scaled finite-sample loss additionally requires uniform integrability. Unrestricted Gaussian risk is
\[
R_{ur}
=E[\mathcal V_{ur}'H\mathcal V_{ur}]
=\operatorname{tr}(H\Sigma).
\]

We use two types of local parameter sets. The enlarged local-slack set is
\[
C=[-\infty,0]^d.
\]
This set is deliberately broad: for a general system of inequalities, not every vector in $C$ has to arise from a feasible sequence of parameters in the original parameter space. For the criteria used below, this broad definition matters for the local no-harm requirement because it makes the requirement conservative. The exact-boundary criterion instead uses only boundary faces that are actually attainable. To define those faces, let $\mathcal J_{\mathrm{att}}$ be the collection of nonempty subsets $J\subseteq\{1,\ldots,d\}$ for which there exists $\theta\in\Theta_R$ satisfying
\[
A_J\theta=b_J,
\qquad
A_{J^c}\theta>b_{J^c}.
\]
For $J\in\mathcal J_{\mathrm{att}}$, define $c(J)\in\{0,-\infty\}^d$ by $c_j(J)=0$ for $j\in J$ and $c_j(J)=-\infty$ otherwise. For $t=1,\ldots,d$, let
\[
\bar C_t
=
\left\{
 c(J):J\in\mathcal J_{\mathrm{att}},\ |J|\leq t
\right\}.
\]
Thus $\bar C_t$ contains the attainable exact-boundary configurations with at least one and at most $t$ binding inequalities. In particular, $\bar C_1$ contains the attainable single-binding configurations and $\bar C_d$ is the complete attainable exact boundary. Positive row rescaling of an inequality, together with the same rescaling of its bound, leaves $\Theta_R$, attainability, and the projection estimator unchanged.

Projection estimators arise naturally as local approximations to constrained extremum estimators. Let $Q_n:\mathbb{R}^k\rightarrow\mathbb{R}$ be twice continuously differentiable and define
\[
\hat\theta_{ur}=\argmin_{\theta\in\mathbb{R}^k}Q_n(\theta).
\]
A second-order expansion around $\hat\theta_{ur}$ yields, uniformly over $\|\theta-\theta_0\|=O(n^{-1/2})$ under standard regularity conditions,
\[
Q_n(\theta)
=
Q_n(\hat\theta_{ur})
+
\frac12(\theta-\hat\theta_{ur})'
\hat\Omega
(\theta-\hat\theta_{ur})
+o_p(n^{-1}),
\]
where $\hat\Omega$ is the local curvature of the criterion. Hence the constrained extremum estimator is locally equivalent to projection using this geometry. Whenever the researcher has flexibility in specifying the criterion, for example through a GMM weighting matrix, the choice can affect the constrained estimator. This is the case even when it has no effect on the unrestricted estimator, as in the just-identified GMM setting.

\subsection{Motivating examples}
\label{subsec:examples}

We next illustrate several settings in which the projection geometry either arises from a familiar estimation criterion or can be chosen directly after estimating an economic object.

\begin{example}[Maximum likelihood estimation]
\label{ex:ml}
Let $\{Y_i\}_{i=1}^n$ be a random sample with density $f_Y(y,\theta_0)$ and
\[
Q_n(\theta)=-\frac1n\sum_{i=1}^n\log f_Y(Y_i,\theta).
\]
The local projection metric is the observed information,
\[
\hat\Omega
=-\frac1n\sum_{i=1}^n
\frac{\partial^2}{\partial\theta\partial\theta'}
\log f_Y(Y_i,\hat\theta_{ur}).
\]
Under standard information-equality conditions, $\hat\Omega$ converges to the inverse of the asymptotic covariance matrix of the unrestricted MLE. Thus constrained maximum likelihood naturally generates inverse-covariance projection in regular models.
\end{example}

\begin{example}[Flexible parametric instrumental variable estimation]
\label{ex:fnc_iv}
Motivated by the setup of \citet{chetverikov2017nonparametric}, suppose
\[
Y=g(X)+\varepsilon,
\qquad
E[\varepsilon\mid Z]=0.
\]
For a vector of basis functions, $b^m(x) = (b_1(x), \ldots, b_m(x))'$, suppose we can write $g(x)=b^m(x)'\theta_0$. Let
\[
B_X=(b^m(X_1),\ldots,b^m(X_n))',
\qquad
B_Z=(b^p(Z_1),\ldots,b^p(Z_n))',
\]
with $p\geq m$. For a positive-definite $p\times p$ matrix $W$, a quadratic IV/GMM criterion takes the form
\[
Q_n(\theta)
=
(Y-B_X\theta)'B_ZWB_Z'(Y-B_X\theta),
\]
which generates the projection metric
\[
\hat\Omega=B_X'B_ZWB_Z'B_X
\]
up to an irrelevant scalar normalization. Thus the GMM weighting matrix determines the geometry used to enforce restrictions on $g$. When the model is just identified, alternative positive-definite choices of $W$ generate the same unrestricted estimator but generally different constrained estimators. With the efficient GMM weight, the local curvature is the inverse asymptotic covariance geometry under the usual regularity conditions.

If performance is evaluated by integrated squared error with nonnegative weight $w(x)$,
\[
\int_0^1(\hat g(x)-g(x))^2w(x)\,dx
=(\hat\theta-\theta_0)'H(\hat\theta-\theta_0),
\]
where
\[
H_{j\ell}=\int_0^1b_j(x)b_\ell(x)w(x)\,dx.
\]
This example makes the three geometries particularly transparent: the IV criterion determines $\Omega$, sampling uncertainty determines $\Sigma$, and integrated squared error determines $H$. We return to this setting in the flexible-IV simulations and the gasoline-demand application.
\end{example}

\begin{example}[Monotone treatment effects]
\label{ex:treatment}
Suppose individuals are assigned to a control group $D=0$ or one of $k$ ordered treatment intensities $D=1,\ldots,k$. Let
\[
\theta_{0,d}=E[Y(d)]-E[Y(0)],
\qquad d=1,\ldots,k,
\]
and impose the ordered-average-treatment-effect restrictions
\[
0\leq\theta_{0,1}\leq\theta_{0,2}\leq\cdots\leq\theta_{0,k}.
\]
These restrictions are linear in $\theta_{0}=(\theta_{0,1},\ldots,\theta_{0,k})'$.

The unrestricted estimator can be written as an exactly identified GMM estimator. With known assignment probabilities $\pi_d=P(D=d)$, define the $k$-vector $\psi_i$ by
\[
\psi_{i,d}
=
\frac{\1\{D_i=d\}Y_i}{\pi_d}
-
\frac{\1\{D_i=0\}Y_i}{\pi_0},
\qquad d=1,\ldots,k.
\]
Under random assignment, $E[\psi_i-\theta_0]=0$. Hence,
\[
\hat\theta_{ur}=\frac1n\sum_{i=1}^n\psi_i
\]
solves the unrestricted sample moments, and for any positive-definite GMM weight $W$ the constrained criterion is exactly
\[
(\theta-\hat\theta_{ur})'W(\theta-\hat\theta_{ur})
\]
up to a constant. The constrained GMM estimator is therefore the $W$-projection of the unrestricted treatment-effect estimator onto the monotonicity cone. Replacing assignment probabilities by sample shares gives the usual difference-in-means estimator, which fits the same projection framework with its corresponding covariance matrix. Under random group counts, its first-order expansion generally differs from that of the estimator using known assignment probabilities.

This design also generates nontrivial covariance geometry for a simple economic reason. Every treatment effect uses the same estimated control mean, so the coordinates of $\hat\theta_{ur}$ share a common sampling component. Efficient GMM therefore uses $W=\Sigma^{-1}$, while other GMM weights generate alternative projection geometries without changing the exactly identified unrestricted estimator. Section~\ref{subsec:sim_treatment} studies this example in detail.
\end{example}

\begin{example}[Shape-restricted impulse responses]
Let
\[
\theta_0=(\beta_0,\beta_1,\ldots,\beta_{k-1})'
\]
collect an impulse-response function estimated by local projections, a VAR, or another
time-series procedure, and suppose
\[
\sqrt{n}\bigl(\hat\theta_{ur}-\theta_0\bigr)
\overset{d}{\rightarrow}N(0,\Sigma).
\]
Economic considerations can imply a variety of linear shape restrictions across horizons.
For example, sign restrictions take the form
\[
\beta_h\geq 0,
\qquad h=0,\ldots,k-1,
\]
while monotone decay can be imposed through
\[
\beta_h-\beta_{h+1}\geq 0,
\qquad h=0,\ldots,k-2.
\]
A hump-shaped response with a prespecified peak horizon $h^*$ can be imposed by requiring
the response to increase up to $h^*$ and decrease afterward. Bounds on cumulative responses,
such as
\[
\underline b_h
\leq
\sum_{j=0}^{h}\beta_j
\leq
\overline b_h,
\]
are linear restrictions as well. Each of these restrictions, and combinations of them, can
therefore be written in the form $A\theta\geq b$.

A restricted impulse-response estimate can then be defined directly by
\[
\hat\theta_r(\Omega)
=
\arg\min_{\theta:A\theta\geq b}
\|\theta-\hat\theta_{ur}\|_\Omega^2.
\]
Here the projection need not be inherited from the criterion that produced
$\hat\theta_{ur}$, but it can be a separate post-estimation choice. Because impulse-response
estimates are typically correlated across horizons and the loss may assign different weights
to different horizons, the sampling covariance $\Sigma$, the loss matrix $H$, and the
projection geometry $\Omega$ can all be distinct. This example illustrates that the choice
of projection geometry also arises naturally outside constrained extremum estimation.
\end{example}

The examples emphasize two complementary sources of projection geometry. In likelihood and GMM problems, the geometry can be inherited from the estimation criterion itself. In other applications, such as shape-restricted impulse responses, projection is a separate post-estimation step. In both cases, the same feasible set can be imposed using different projection geometries, and the resulting risk can differ substantially when projection is active.

To illustrate the effects of the projection geometry, Figure~\ref{fig:linear_iv_projection_geometries} provides a simple two-dimensional example. The data are generated from a just-identified linear IV model with two endogenous regressors and two excluded instruments. The true value is $\theta_0=(0,0)'$, and the maintained restriction is $\theta_2\geq0$.

\begin{figure}[t!]
	\centering
	\caption{Projection geometry in a two-dimensional just-identified IV design}
	\label{fig:linear_iv_projection_geometries}
	\includegraphics[width=0.84\textwidth]{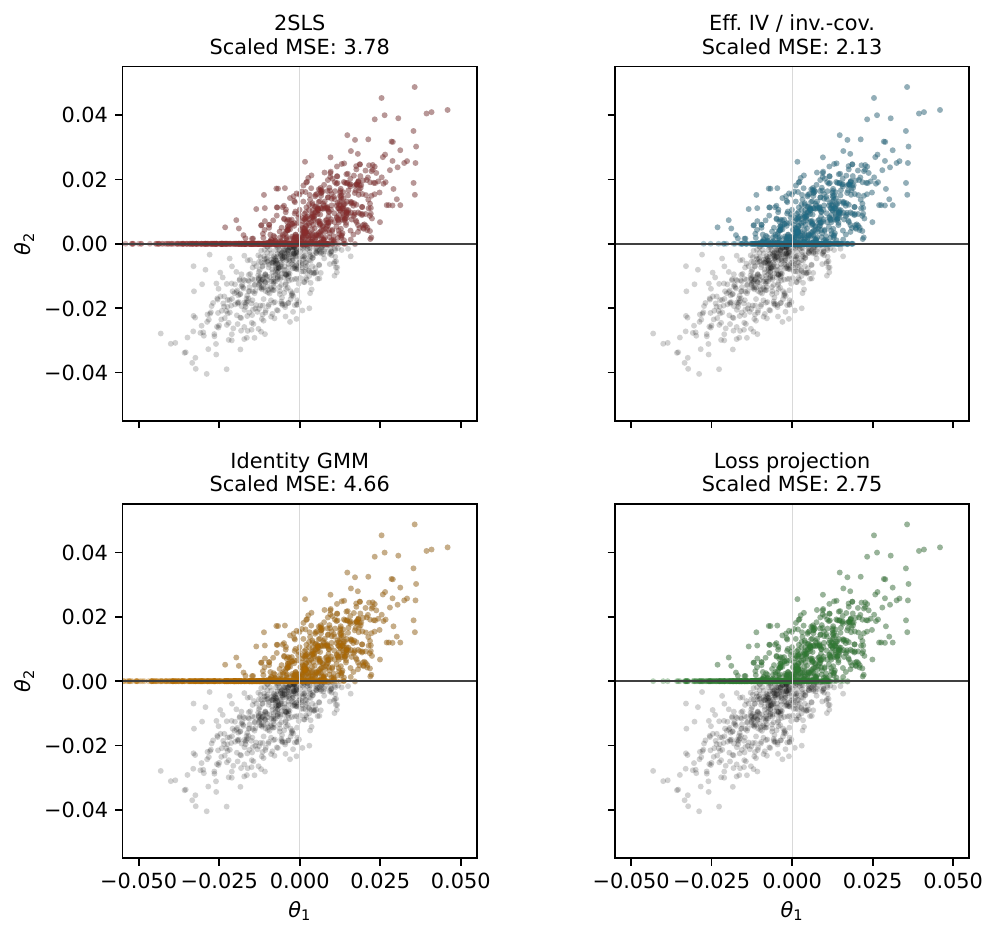}
	\begin{minipage}{0.90\linewidth}
		\vspace{1mm}
		{\footnotesize\begin{singlespace}
				Notes: Black points are unrestricted IV estimates from the same simulated samples; colored points are their projections onto $\{\theta:\theta_2\geq0\}$. The panels differ only in the projection geometry named in the panel title. Scaled MSE is MSE multiplied by $n$.
		\end{singlespace}}
	\end{minipage}
\end{figure}

Since the model is just identified, changing the GMM weight matrix leaves the unrestricted IV estimator unchanged but changes the geometry used to project infeasible estimates back to the feasible set. The figure projects the same unrestricted estimates onto this half-space using the geometry implied by 2SLS, the inverse-covariance geometry implied by efficient GMM, and the geometry implied by the identity GMM weight matrix. The figure also reports Euclidean projection, $\Omega=I$, which coincides with the loss geometry in this example. The plots show draws of the unrestricted estimator, the different restricted estimators, and the MSE for each geometry, scaled by the sample size. This example illustrates that different projection geometries can yield very different restricted estimates and MSEs.

\section{Boundary risk and analytical results} \label{sec:results}
This section develops the criterion used to compare projection geometries and presents the main analytical results. We first explain why minimaxity over the enlarged local-slack set is not informative for this purpose and why a finite neighborhood of the boundary introduces an additional tuning choice. We then study exact-boundary risk, starting with one or two binding inequalities before moving to the general case.

\subsection{Boundary minimaxity}
\label{subsec:boundary_criterion}

Recall from Section~\ref{subsec:setup} that $C=[-\infty,0]^d$ denotes the enlarged local-slack set and that $\bar C_t$ contains the attainable exact-boundary configurations with at most $t$ binding inequalities. In particular, $\bar C_d$ is the complete attainable exact boundary.

To develop a minimax criterion, a first possibility would be to minimize worst-case risk over the enlarged local-slack set. This criterion is well defined but does not identify projection geometry in an informative way. When all constraints are asymptotically irrelevant and $
c=(-\infty,\ldots,-\infty)$, we have $\mathcal V_r(\Omega,c)=\mathcal V_{ur}$ for every $\Omega$, so the worst-case risk of every projection geometry is at least $R_{ur}$. At the same time, projection in the loss metric $H$ satisfies the pathwise contraction property
\[
\|\mathcal V_r(H,c)\|_H^2\leq\|\mathcal V_{ur}\|_H^2
\]
for every realization. Hence the minimax value over $C$ is exactly $R_{ur}$.

\begin{lemma}
	\label{lemma:minimax_entire}
Under the setup in Section~\ref{subsec:setup},
\[
\inf_{\Omega\succ0}\sup_{c\in C}R_H(\Omega,c)=R_{ur},
\]
and $\Omega=H$ is minimax. Any other projection geometry whose risk never exceeds $R_{ur}$ over $C$ is minimax as well.
\end{lemma}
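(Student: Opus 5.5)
The argument has two parts: a lower bound that holds for every projection geometry, and an upper bound that is attained by $\Omega=H$. The lower bound uses the fully irrelevant configuration $c^\ast=(-\infty,\ldots,-\infty)\in C$. By convention, coordinates equal to $-\infty$ are dropped from the local system, so $U(c^\ast)=\mathbb R^k$ and $\mathcal V_r(\Omega,c^\ast)=\mathcal V_{ur}$ for every $\Omega\succ0$. Hence $R_H(\Omega,c^\ast)=\operatorname{tr}(H\Sigma)=R_{ur}$, and therefore
\[
\sup_{c\in C}R_H(\Omega,c)\geq R_{ur}
\]
for all $\Omega$. Taking the infimum over $\Omega$ gives $\inf_{\Omega\succ0}\sup_{c\in C}R_H(\Omega,c)\geq R_{ur}$.

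For the upper bound, fix any $c\in C$. The set $U(c)=\{u:A_{J}u\geq c_{J}\}$, with $J$ the finite coordinates, is closed, convex and nonempty. The key observation is that $0\in U(c)$, because every finite $c_j\le 0$. Let $P$ denote the $H$-metric projection onto $U(c)$. In the Hilbert space $(\mathbb R^k,\langle\cdot,\cdot\rangle_H)$ it is well defined and nonexpansive. The variational inequality $\langle v-P v,\,w-Pv\rangle_H\le 0$ for all $w\in U(c)$ yields $\|Pv-Pw\|_H\le\|v-w\|_H$. Applying this with $w=0$, and using $P0=0$, gives the pathwise bound
\[
\|\mathcal V_r(H,c)\|_H^2\le\|\mathcal V_{ur}\|_H^2 .
\]
The same bound also follows directly: since $0\in U(c)$, we have $\|v-Pv\|_H\le\|v\|_H$, and combining this with the obtuse-angle inequality gives $\|Pv\|_H^2\le\|v\|_H^2$. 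Taking expectations under the Gaussian law gives $R_H(H,c)\le R_{ur}$ for every $c\in C$. Together with the lower bound, this shows $\sup_{c\in C}R_H(H,c)=R_{ur}$, so the infimum equals $R_{ur}$ and is attained at $\Omega=H$.

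The final claim is immediate. If some $\Omega$ satisfies $R_H(\Omega,c)\le R_{ur}$ for all $c\in C$, then its worst-case risk is at most $R_{ur}$. By the lower bound it is also at least $R_{ur}$, so it equals the minimax value and $\Omega$ is minimax.

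The only real care point is checking that $0\in U(c)$ for all $c$ in the enlarged set, including configurations that are not generated by feasible parameter sequences, together with the handling of the $-\infty$ coordinates. Both are direct from $c_j\le0$. Measurability and integrability of $\|\mathcal V_r\|_H^2$ follow from continuity of the projection and the domination by $\|\mathcal V_{ur}\|_H^2$, which has finite mean.
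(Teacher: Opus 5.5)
Your proof is correct and follows the same route as the paper: the all-interior configuration $c=(-\infty,\ldots,-\infty)$ gives the lower bound $R_{ur}$ for every $\Omega$, and the pathwise contraction $\|\mathcal V_r(H,c)\|_H\le\|\mathcal V_{ur}\|_H$ supplies the matching upper bound attained by $\Omega=H$; this contraction follows from $0\in U(c)$ and the $H$-metric projection inequality. As a minor aside, the obtuse-angle inequality $\langle v-Pv,Pv\rangle_H\ge0$ already gives $\|Pv\|_H\le\|v\|_H$ on its own, so your extra bound $\|v-Pv\|_H\le\|v\|_H$ is not needed.
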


The lemma explains why minimaxity over the enlarged local-slack set is not our primary criterion. The minimax value is pinned down by states in which the constraints have no first-order effect. A criterion designed to choose how violations should be corrected should instead give weight to configurations in which projection matters.

A second possibility is to restrict attention to a one-sided finite band inside the feasible side of the boundary. Even in the single-halfspace problem, however, the resulting minimax problem depends on the width of that band, as illustrated in the following lemma. 

\begin{lemma}
\label{lemma:minimax_local}
Consider one inequality $a'u\geq c$ under quadratic loss $u'Hu$, where $H\succ0$, and normalize $a'\Sigma a=1$. Write $c=-\delta$, with $\delta\geq0$, and for $\bar\delta>0$ consider
\[
\inf_{\Omega\succ0}\max_{\delta\in[0,\bar\delta]}R_H(\Omega,-\delta).
\]
\begin{enumerate}
\item For every projection geometry $\Omega$, the worst-case risk over $\delta\in[0,\bar\delta]$ is attained at one of the endpoints of the interval, that is, at $\delta=0$ or $\delta=\bar\delta$.
\item There exists a minimax projection geometry for which $\delta=\bar\delta$ is a worst-case point.
\item Unless $\Sigma a$ is proportional to $H^{-1}a$, inverse-covariance projection is not minimax for any $\bar\delta>0$.
\item For all $\delta$ in a sufficiently small right-neighborhood of zero, a pointwise optimal inverse projection metric can be chosen such that
\[
\Omega^{-1}(\delta)
=
\frac{w_1(\delta)}{w_1(\delta)+w_2(\delta)}\Sigma
+
\frac{w_2(\delta)}{w_1(\delta)+w_2(\delta)}
\frac{H^{-1}}{a'H^{-1}a},
\]
where $w_1(\cdot)$ and $w_2(\cdot)$ are continuous functions satisfying $w_1(0)=1$ and $w_2(0)=0$.
\end{enumerate}
\end{lemma}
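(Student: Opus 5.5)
The plan is to reduce everything to an explicit scalar formula for the risk. The formula depends on $\Omega$ only through the projection direction $d_\Omega$, and on the slack $\delta$ only through two Gaussian truncated moments. By the transformation in Appendix~\ref{appendix:proj_details} we normalize $H=I$, so that $H^{-1}a/(a'H^{-1}a)$ becomes $a/(a'a)$; at the end we undo the transformation.

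Write $Z=a'\mathcal V_{ur}\sim N(0,1)$, using $a'\Sigma a=1$. When $Z<-\delta$ the projection is $\mathcal V_r=\mathcal V_{ur}-d\,(Z+\delta)$ with $d=d_\Omega$ and $a'd=1$; otherwise $\mathcal V_r=\mathcal V_{ur}$. Conditioning on $Z$ gives $E[\mathcal V_{ur}\mid Z]=\Sigma a\,Z$. Hence
\[
R(d,\delta)=\operatorname{tr}(\Sigma)-2s\,m_1(\delta)+q\,m_2(\delta),\qquad s=d'\Sigma a,\quad q=\|d\|^2,
\]
where $m_2(\delta)=E[(Z+\delta)^2\1\{Z<-\delta\}]$ and $m_1(\delta)=E[Z(Z+\delta)\1\{Z<-\delta\}]$. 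Put $p(\delta)=E[(-Z-\delta)_+]=\phi(\delta)-\delta\Phi(-\delta)$. Then $m_1=m_2+\delta p$, $m_2'=-2p$, $p'=-\Phi(-\delta)$, and $m_1(0)=m_2(0)=1/2$.

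We also need that every $d$ with $a'd=1$ is attained by some $\Omega\succ0$. For this we construct a positive definite $\Omega^{-1}$ with $\Omega^{-1}a=d$; this is possible because $a'd>0$, for instance $\Omega^{-1}=dd'+P$ with $P$ positive definite on the orthogonal complement of $a$ and zero on $a$. Consequently all four parts become statements about $d$ ranging over the affine set $\{a'd=1\}$.

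\emph{Part 1.} For fixed $d$ we differentiate in $\delta$:
\[
R'(\delta)=2p(\delta)\bigl[(s-q)+s\,r(\delta)\bigr],\qquad r(\delta)=\frac{\delta\Phi(-\delta)}{p(\delta)}=\frac{\delta M(\delta)}{1-\delta M(\delta)},
\]
where $M(\delta)=\Phi(-\delta)/\phi(\delta)$ is the Mills ratio. The function $\delta M(\delta)$ is increasing, a standard Mills-ratio fact that we would verify or cite, so $r$ is increasing from $r(0)=0$. We then split into two cases.
\begin{itemize}
\item If $s>0$, the bracket is increasing in $\delta$. So $R$ first decreases and then increases, and its maximum over $[0,\bar\delta]$ is at an endpoint.
\item If $s\le0$, the bracket is at most $s-q<0$. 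So $R$ is decreasing and its maximum is at $\delta=0$.
\end{itemize}

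\emph{Part 4.} Minimizing the convex quadratic $q\,m_2-2s\,m_1$ subject to $a'd=1$ gives
\[
d^*=w_1\Sigma a+w_2\frac{a}{a'a},\qquad w_1=\frac{m_1}{m_2},\quad w_2=1-\frac{m_1}{m_2}.
\]
Both weights are continuous, with $w_1(0)=1$ and $w_2(0)=0$. The matrix $\Omega^{-1}=w_1\Sigma+w_2I/(a'a)$ sends $a$ to $d^*$ and satisfies $a'\Omega^{-1}a=1$. Because $w_2\le0$, positive definiteness only holds while $|w_2|$ is small. This is why the claim is restricted to a right-neighborhood of zero. Undoing the normalization replaces $I/(a'a)$ by $H^{-1}/(a'H^{-1}a)$.

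\emph{Parts 2 and 3.} At $\delta=0$ the risk equals $\operatorname{tr}\Sigma-s+q/2$. Its unique minimizer on $\{a'd=1\}$ is $d=\Sigma a$, which is inverse-covariance projection. At $d=\Sigma a$ we have $s=q$, so
\[
R-\operatorname{tr}\Sigma=-q\bigl(m_2+2\delta p\bigr),
\]
and the $\delta$-derivative of this expression is $2q\,\delta\Phi(-\delta)>0$. Hence $R(\Sigma a,\bar\delta)>R(\Sigma a,0)$, so for inverse covariance only the endpoint $\bar\delta$ is worst-case.

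For Part 3, suppose $\Sigma a$ is not proportional to $a$. Then $d^*(\bar\delta)\neq\Sigma a$, because $w_2(\bar\delta)\neq0$: since $m_1-m_2=\bar\delta p(\bar\delta)>0$. This uses only the first-order condition, so it holds for every $\bar\delta>0$, not just small ones. The gradient of $R(\cdot,\bar\delta)$ along the affine set is therefore nonzero at $\Sigma a$. A small move in a descent direction lowers $R(\cdot,\bar\delta)$, and by continuity $R(\cdot,0)$ stays strictly below it. This strictly lowers the maximum, so inverse covariance is not minimax.

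For Part 2, a minimax $d$ exists because $\max_\delta R$ is continuous in $d$ and coercive, as it grows like $q$. Suppose at that minimax $d$ the endpoint $\bar\delta$ were not worst-case. By Part 1 this means $R(d,0)>R(d,\bar\delta)$. Then $d$ would be a local, hence global, minimizer of the convex function $R(\cdot,0)$, forcing $d=\Sigma a$. That contradicts $R(\Sigma a,\bar\delta)>R(\Sigma a,0)$.

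The main obstacle I expect is Part 1's unimodality, specifically establishing monotonicity of $\delta M(\delta)$ cleanly. Everything else is finite-dimensional convex quadratic analysis.
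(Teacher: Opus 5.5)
Your proposal is correct and follows essentially the same route as the paper. Both reduce to directions $d$ with $a'd=1$ and use the conditional-normal risk formula. Part~1 uses the $\delta$-derivative with at most one sign change, Part~4 uses constrained quadratic minimization, and Parts~2 and~3 use local perturbation arguments around $\Sigma a$. The Mills-ratio step you flag is handled in the paper as you anticipate: $\frac{d}{d\delta}\{\delta M(\delta)\}=(1+\delta^2)M(\delta)-\delta>0$ by the standard lower bound $M(\delta)>\delta/(1+\delta^2)$.
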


The lemma explains why minimaxity over a finite neighborhood does not provide a tuning-free criterion. Part 1 reduces the worst-case comparison to the two endpoints, while part 2 shows that the researcher-chosen outer edge $\bar\delta$ can be directly relevant for the minimax solution. Part 3 sharpens this point: except when the covariance and loss projection directions coincide, inverse-covariance projection ceases to be minimax as soon as the neighborhood has positive width.

Part 4 gives additional structure to the pointwise problem. A pointwise optimal inverse projection metric can be chosen as an affine combination of the covariance geometry $\Sigma$ and the normalized inverse loss geometry $H^{-1}/(a'H^{-1}a)$. At the exact boundary the second weight vanishes, so this choice becomes inverse-covariance projection, $\Omega^{-1}=\Sigma$.

The coefficients in part 4 sum to one, but they need not both be positive. The result nevertheless shows that the two geometries $\Sigma$ and $H^{-1}$ govern the single-halfspace pointwise optimum. Section~\ref{subsec:path} uses these two theoretically distinguished objects to construct a simple positive-definite path for the multi-constraint selection problem.

Although Lemma~\ref{lemma:minimax_local} is stated in the local Gaussian experiment used throughout the paper, it also has an exact finite-sample interpretation under normality. If $\hat\theta_{ur}\sim N(\theta_0,\Sigma/n)$ and the restricted estimator is its exact quadratic projection, the same calculation applies after expressing the width of a one-sided finite parameter strip in root-$n$, standard-error units. Thus the dependence of the optimal projection geometry on the chosen neighborhood is not an artifact of the asymptotic notation.

The lemma shows that a finite-band criterion requires a choice of distance to the boundary that affects the projection geometry itself. With several inequalities, one must additionally choose the shape of the neighborhood and potentially relative scales across constraints. Exact binding instead identifies a face of the feasible set without introducing a distance-to-boundary tuning parameter. We therefore define boundary risk by
\[
B(\Omega)=\max_{c\in\bar C_d}R_H(\Omega,c)
\]
and call a projection geometry boundary minimax if it attains $\inf_{\Omega\succ0}B(\Omega)$ over the class of positive-definite metrics.
We use an infimum because positive scalar multiples of $\Omega$ induce the same projection and the positive-definite cone is not compact. We therefore call a geometry boundary minimax only when it attains this infimum.

This criterion should not be interpreted as assuming that the researcher knows the true parameter lies on a particular boundary face. The estimator remains the inequality projection defined over the full feasible parameter space. Boundary configurations are used only to rank projection geometries. This distinction is important: if a binding equality itself were known, an unrestricted decision rule could exploit that equality directly and need not belong to the projection class considered here.

Appendix~\ref{appendix:geometry_illustrations} gives a two-dimensional visualization of this issue: a geometry that is attractive when one face is active can be unattractive at a corner, and conversely.

\subsection{One locally relevant inequality}
\label{subsec:one_const}

We first consider local configurations in which at most one inequality is relevant. Part~1 concerns attainable exact-boundary configurations with one binding inequality, while Part~2 allows the single relevant inequality to have arbitrary finite local slack.

\begin{theorem}
\label{thm:one_const}
For any maintained system of linear inequalities satisfying the setup assumptions,
\begin{enumerate}
\item For every $c\in\bar C_1$,
\[
\Sigma^{-1}\in\argmin_{\Omega\succ0}R_H(\Omega,c).
\]
\item For every $c\in C$ with at most one finite coordinate,
\[
R_H(\Sigma^{-1},c)\leq R_{ur}.
\]
\end{enumerate}
\end{theorem}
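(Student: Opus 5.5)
The plan is to reduce both parts to a single-halfspace Gaussian calculation in which the risk is an explicit quadratic function of the projection direction $d_\Omega$. By Appendix~\ref{appendix:proj_details} we may take $H=I$; alternatively we keep $\|\cdot\|_H$ throughout, which changes nothing below.

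\textbf{Reduction to one halfspace.} In both parts at most one coordinate of $c$ is finite, and coordinates equal to $-\infty$ are dropped from the local system. Hence $U(c)$ is either $\mathbb R^k$ or a single halfspace $\{u:a'u\geq c\}$, where $a$ is the relevant row of $A$ and $c\leq0$ is its coordinate. If $U(c)=\mathbb R^k$, then $\mathcal V_r=\mathcal V_{ur}$ for every $\Omega$, and both claims are trivial. In the halfspace case, the $\Omega$-projection has the closed form
\[
\mathcal V_r(\Omega,c)=\mathcal V_{ur}+d_\Omega\,(c-Z)_+,
\qquad Z=a'\mathcal V_{ur},
\qquad d_\Omega=\Omega^{-1}a/(a'\Omega^{-1}a).
\]
The direction satisfies $a'd_\Omega=1$, and conversely every $d$ with $a'd=1$ arises as some $d_\Omega$ with $\Omega\succ0$. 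For example, take $\Omega^{-1}=dd'+P$, where $P\succ0$ on the complement and $Pa=0$ up to a small positive-definite perturbation; or simply note that the minimizer we need is attained by $\Sigma^{-1}$.

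\textbf{Risk formula.} Write $s=a'\Sigma a$ and $m=\Sigma a/s$, so $a'm=1$ and $m=d_{\Sigma^{-1}}$. Decompose $\mathcal V_{ur}=mZ+W$, where $W$ is Gaussian and independent of $Z\sim N(0,s)$, since $\operatorname{Cov}(W,Z)=\Sigma a-ms=0$. Expanding $\|\mathcal V_r\|_H^2$ and using independence of $W$ from $Z$ together with $E[W]=0$ gives
\[
R_H(\Omega,c)=\operatorname{tr}(H\Sigma)+2\,d_\Omega'Hm\;E[(c-Z)_+Z]+\|d_\Omega\|_H^2\,E[(c-Z)_+^2].
\]
The main care point is this decomposition step. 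It ensures that the cross term depends on $d_\Omega$ only through $d_\Omega'Hm$, with $W$ contributing nothing.

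\textbf{Part 1.} At $c=0$, symmetry of $Z$ gives $E[(-Z)_+Z]=-s/2$ and $E[(-Z)_+^2]=s/2$. Therefore
\[
R_H(\Omega,0)=\operatorname{tr}(H\Sigma)+\tfrac s2\bigl(\|d_\Omega-m\|_H^2-\|m\|_H^2\bigr).
\]
This is minimized over all directions, in particular over all $d$ with $a'd=1$, exactly at $d=m$. That direction is attained by $\Omega=\Sigma^{-1}$, so $\Sigma^{-1}\in\argmin_{\Omega\succ0}R_H(\Omega,c)$ for every $c\in\bar C_1$. Attainability of the face plays no role beyond guaranteeing that the configuration is a genuine single halfspace.

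\textbf{Part 2.} Substitute $d=m$ into the risk formula for general finite $c\leq0$:
\[
R_H(\Sigma^{-1},c)-R_{ur}=\|m\|_H^2\,E\bigl[(c-Z)_+\bigl((c-Z)+2Z\bigr)\bigr]=\|m\|_H^2\,E\bigl[(c-Z)_+(c+Z)\bigr].
\]
On the event $\{Z<c\}$ we have $c+Z<2c\leq0$, so the integrand is nonpositive everywhere. Hence $R_H(\Sigma^{-1},c)\leq R_{ur}$, which is the claim. The case with no finite coordinate was already covered in the reduction step.
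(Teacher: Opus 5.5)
Your proof is correct and follows essentially the paper's argument. Your decomposition $\mathcal V_{ur}=mZ+W$, with $W$ independent of $Z$, is equivalent to the paper's conditioning on $a'\mathcal V_{ur}$. Completing the square at $c=0$ reproduces the paper's exact penalty $\frac{a'\Sigma a}{2}\|d_\Omega-d^*\|_H^2$. Your observation that $(c-Z)_+(c+Z)=c^2-Z^2\le0$ on $\{Z<c\}$ is the paper's comparison of conditional squared biases, $\gamma^2\le x^2$, on the violation region.
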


Part~1 states that inverse-covariance projection is pointwise optimal at every attainable one-binding exact-boundary configuration. Because the statement holds pointwise over $\bar C_1$, it also implies the boundary-minimax result
\[
\Sigma^{-1}\in\argmin_{\Omega\succ0}\max_{c\in\bar C_1}R_H(\Omega,c).
\]
Part~2 provides a robustness statement that inverse-covariance projection has no larger $H$-risk than unrestricted estimation whenever at most one local slack coordinate is finite.

The exact penalty from using a different projection direction is also face-specific. Fix $c\in\bar C_1$, let $j$ denote its unique binding coordinate, and let $a_j$ be the corresponding constraint normal. Define
\[
d_{\Omega,j}=\frac{\Omega^{-1}a_j}{a_j'\Omega^{-1}a_j},
\qquad
d_j^*=\frac{\Sigma a_j}{a_j'\Sigma a_j}.
\]
Then
\[
R_H(\Omega,c)-R_H(\Sigma^{-1},c)
=
\frac{a_j'\Sigma a_j}{2}(d_{\Omega,j}-d_j^*)'H(d_{\Omega,j}-d_j^*).
\]
Thus the relevant object on each one-binding face is the projection direction $d_j^*$, rather than the full metric. Any positive-definite projection metric that generates this direction has the same risk at that configuration, and $\Sigma^{-1}$ is the canonical metric that does so simultaneously for every $c\in\bar C_1$.

To see the mechanism, fix a local configuration with one finite coordinate $j$, suppress the asymptotically irrelevant inequalities, write $a=a_j$, and denote the scalar finite slack by $\gamma\leq0$. The reduced local restriction is $a'u\geq\gamma$. Conditional normality gives
\[
E[\mathcal V_{ur}\mid a'\mathcal V_{ur}=x]
=
\frac{\Sigma a}{a'\Sigma a}x,
\]
while the conditional covariance does not depend on $x$. Thus, conditional on the scalar index $a'\mathcal V_{ur}$, the mean of the unrestricted estimator moves only along the projection direction generated by inverse covariance. At exact binding, $\gamma=0$, projecting violations in this direction centers the restricted conditional mean at the boundary value. The proof in Appendix~\ref{appendix:proofs} formalizes the resulting risk comparison and the local no-harm result.

There is also a classical route to Part~1. For $c\in\bar C_1$ with unique binding coordinate $j$, Gaussian symmetry gives
\[
R_H(\Omega,c)=\frac12R_{ur}+\frac12R_{H,j}^{=}(\Omega),
\]
where $R_{H,j}^{=}(\Omega)$ is the risk from projecting onto the corresponding equality restriction $a_j'u=0$. Hence Part~1 can alternatively be obtained face by face from the classical optimality of inverse-covariance projection under linear equality restrictions \citep{Theil1971}. We retain the direct conditional-normal argument because it yields the exact penalty above and also proves the finite-slack result in Part~2.

In the canonical Gaussian sequence model with Euclidean loss, $\Sigma\propto I$ and $H\propto I$, so the inverse-covariance, loss, and Euclidean projection benchmarks coincide up to scale. Other projection geometries can still differ. In particular, identity covariance alone does not equate these benchmarks when the loss matrix is non-Euclidean.

\subsection{Up to two locally relevant inequalities}
\label{subsec:two_const}

We next consider local configurations with at most two relevant inequalities. Part~1 concerns the exact-boundary set $\bar C_2$, while Part~2 allows arbitrary finite local slacks in up to two coordinates. At a two-binding corner, inverse-covariance projection need not be pointwise optimal, but it remains minimax over the exact-boundary configurations in $\bar C_2$.

\begin{theorem}
\label{thm:two_const}
For any maintained system of linear inequalities satisfying the setup assumptions,
\begin{enumerate}
\item
\[
\Sigma^{-1}\in\argmin_{\Omega\succ0}\max_{c\in\bar C_2}R_H(\Omega,c).
\]
\item For every $c\in C$ with at most two finite coordinates,
\[
R_H(\Sigma^{-1},c)\leq R_{ur}.
\]
\end{enumerate}
\end{theorem}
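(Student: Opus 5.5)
The plan is to reduce both parts to a two-dimensional Gaussian problem in whitened coordinates, and then to handle that problem by a Stein-type identity combined with an explicit computation over the wedge.

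\emph{Reduction.} By Appendix~\ref{appendix:proj_details} I may transform coordinates. I whiten by $v=\Sigma^{-1/2}u$, so that $V=\Sigma^{-1/2}\mathcal V_{ur}\sim N(0,I_k)$. Projection in the metric $\Sigma^{-1}$ then becomes Euclidean projection onto $\tilde U(c)=\{v:\tilde A v\geq c\}$ with $\tilde A=A\Sigma^{1/2}$. The loss becomes $v'Gv$ with $G=\Sigma^{1/2}H\Sigma^{1/2}\succ0$, and $R_{ur}=\operatorname{tr}(G)$.

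Fix a configuration with finite coordinates in $J$, $|J|\leq2$. Let $S=\operatorname{span}\{\tilde a_j:j\in J\}$ and decompose $V=(V_S,V_{S^\perp})$; the two components are independent standard normals. Euclidean projection onto $\tilde U(c)$ leaves $V_{S^\perp}$ unchanged and replaces $V_S$ by its projection $P(V_S)$ onto a polyhedron in $S$. Expanding the loss, the cross term $2E[V_{S^\perp}'G_{S^\perp S}P(V_S)]$ vanishes by independence. This leaves
\[
R_H(\Sigma^{-1},c)-R_{ur}=\operatorname{tr}\bigl(G_{SS}(M(c)-I)\bigr),
\qquad
M(c)=E[P(V_S)P(V_S)'].
\]
The same formula holds when $|J|=1$, with $S$ one-dimensional.

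\emph{Part 2.} Since $G_{SS}$ ranges over positive-definite matrices, Part 2 is equivalent to the matrix inequality $M(c)\preceq I_S$. Equivalently, $E[(x'P(V_S))^2]\leq1$ for every unit $x\in S$. Because $P$ is Lipschitz with a.e.\ Jacobian $J(\cdot)$ that is an orthogonal projection, Stein's identity gives $E[x'V_S\,x'P(V_S)]=E[x'Jx]$. Hence
\[
E[(x'P)^2]-1=E[(x'(V_S-P))^2]-2E[x'(I-J)x].
\]
This identity does not close in general dimension, which is consistent with the failure noted in the abstract once more constraints are relevant. In $\dim S\leq2$, however, the plane splits into four regions of the polyhedron $\{w:\tilde A_Jw\geq c_J\}$.
\begin{itemize}
\item On the interior region, $P=V_S$ and $J=I$, so this region contributes nothing to either term.
\item On the region of face $i$, $I-J=n_in_i'$ and $V_S-P=n_i(n_i'V_S-c_i)$, where $n_i$ is the unit normal.
\item On the normal-cone region of the vertex, $I-J=I$ and $V_S-P=V_S-w^*$, where $w^*$ is the vertex.
\end{itemize}
I would then verify the resulting inequality directly. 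On each face region, I condition on the tangential coordinate and compare $E[(n_i'V_S-c_i)^2\mathbf 1\{\cdot\}]$ with $2P(\cdot)$, using $c_i\leq0$ and the one-dimensional fact from Theorem~\ref{thm:one_const} Part 2. On the vertex region, I use that $w^*$ lies in the feasible set, together with the obtuse-angle geometry of the normal cone. This explicit two-dimensional wedge computation, uniformly in the angle between $\tilde a_i$ and $\tilde a_j$ and in $c_J\leq0$, is the step I expect to be the main obstacle.

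\emph{Part 1.} For any $\Omega\succ0$, $\bar C_1\subseteq\bar C_2$ and Theorem~\ref{thm:one_const} Part 1 give
\[
\max_{\bar C_2}R_H(\Omega,\cdot)\geq\max_{\bar C_1}R_H(\Omega,\cdot)\geq\max_{\bar C_1}R_H(\Sigma^{-1},\cdot).
\]
It therefore suffices to show that every attainable corner $J=\{i,j\}$ satisfies
\[
R_H(\Sigma^{-1},c(J))\leq\max\{R_H(\Sigma^{-1},c(\{i\})),R_H(\Sigma^{-1},c(\{j\}))\}.
\]
I first check that $\{i\}$ and $\{j\}$ are then attainable. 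Non-redundancy excludes $a_j=\lambda a_i$ with $\lambda>0$, and nonempty interior excludes $\lambda<0$. So one can move off $a_j'\theta=b_j$ while keeping constraint $i$ binding and the others slack.

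By the reduction above, with $c=0$ the set $\{w:\tilde A_Jw\geq0\}$ is a wedge $K$ in $S$ of some angle $\alpha$. I compute $M_K$ explicitly by rotational symmetry: the interior contributes $(\alpha/2\pi)$ times an isotropic-type second moment over the wedge, the two ray regions contribute rank-one terms along the edges, and the polar region contributes zero. The single-face matrices are $M_i=I-\tfrac12 n_in_i'$.

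I would then aim to prove $M_K\preceq\lambda M_i+(1-\lambda)M_j$ for some $\lambda\in[0,1]$; by symmetry of the wedge, $\lambda=\tfrac12$ is the first choice to try. Taking the trace against $G_{SS}$ then yields the corner bound. If the matrix inequality fails for $\lambda=\tfrac12$, I would instead choose $\lambda$ depending on $G_{SS}$ via a minimax (S-lemma-type) argument in two dimensions.
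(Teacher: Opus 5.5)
Your route is correct and genuinely different from the paper's. The paper stays in the original coordinates with general $\Sigma$.

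\begin{itemize}
\item \emph{Part~1 in the paper.} It partitions the sample space by the active set of the corner projection. Conditional-normal symmetry gives $P(B_1)=P(B_2)=\tfrac14$ and $E[\|V^{12}\|^2\1(B_1)]=\tfrac12E[\|V^1\|^2\1(a_1'V\le0)]$. It also gives $P(B_{12})=P(a_1'V>0,\,a_2'V\le0)$. A conditional second-moment comparison then bounds the corner risk by the \emph{larger} of the two single-binding risks.
\item \emph{Part~2 in the paper.} It cites \citet{RuedaSalvador1995} for independent normals and treats opposite normals by a scalar contraction.
\end{itemize}

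Your whitening and splitting off of $S^\perp$ reduce everything to a canonical planar polyhedron. That costs explicit computation, but it gives sharper and self-contained conclusions. Both steps you leave open do go through.

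\textbf{Part~1.} Let $\alpha\in(0,\pi)$ be the opening angle of $K$, and use coordinates along and orthogonal to its bisector. Then
\[
M_K=\operatorname{diag}\left(\frac{\alpha+\sin\alpha}{2\pi}+\frac{1+\cos\alpha}{4},\ \frac{\alpha-\sin\alpha}{2\pi}+\frac{1-\cos\alpha}{4}\right).
\]
Since $\frac12(M_i+M_j)=I-\frac14(n_in_i'+n_jn_j')$ is diagonal in the same basis,
\[
\frac{M_i+M_j}{2}-M_K=\frac{1}{2\pi}\operatorname{diag}\left(\pi-\alpha-\sin\alpha,\ \pi-\alpha+\sin\alpha\right)\succ0,
\]
because $\sin\alpha<\pi-\alpha$ on $(0,\pi)$. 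So $\lambda=\tfrac12$ works, and the S-lemma fallback is never needed. In any case, by the minimax theorem a $G$-dependent $\lambda$ could succeed for every $G$ only if some single $\lambda$ already does. You also obtain a stronger statement than the paper: for every $H$, the corner risk is strictly below the \emph{average} of the two single-binding risks, not just their maximum.

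\textbf{Part~2, vertex region.} The operative fact is that the origin is feasible, not that $w^*$ is.

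\begin{itemize}
\item Write the constraints as $n_j'w\ge\gamma_j$ with $\gamma_j=c_j/\|\tilde a_j\|\le0$. Your face display should use $\gamma_i$ rather than $c_i$.
\item For $u=-\beta_in_i-\beta_jn_j$ in the normal cone $N$, with $\beta_i,\beta_j\ge0$, one has $u'w^*=-\beta_i\gamma_i-\beta_j\gamma_j\ge0$.
\item Write $U=V_S-w^*=\rho\vartheta$ in polar form. On $N$, the density of $U$ carries the factor $e^{-u'w^*}=e^{-\rho\,\vartheta'w^*}$ with $\vartheta'w^*\ge0$. So its radial law is proportional to $\rho e^{-\rho^2/2}e^{-\rho\,\vartheta'w^*}$, which is stochastically smaller than $\chi_2$.
\item Hence $E[\rho^2\mid\vartheta]\le2$, and $E[(x'U)^2\1\{U\in N\}]\le2P(U\in N)$ for unit $x$.
\end{itemize}

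Combine this with your face-by-face reduction. That reduction is valid because each face region is a product set in the independent coordinates $(n_i'V_S,e_i'V_S)$. Together they close Part~2 without the citation.

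You should also state the opposite-normal interval case explicitly. It is the $\dim S=1$ instance of the same argument, with each end region a one-dimensional normal cone.

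The computation also shows where the restriction to two inequalities is used. On an $m$-dimensional normal-cone region the radial bound becomes $m$. That fits the factor $2$ from Stein's identity only when $m\le2$.
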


Part 1 is our geometry-choice result. For linearly independent relevant normals, Part 2 follows from the classical restricted-normal domination result of \citet{RuedaSalvador1995}. If the relevant normals are linearly dependent, the local feasible set reduces to a halfspace or an interval, as shown in Appendix~\ref{appendix:proofs}.

The intuition for Part 1 is that a two-binding corner does not create a new worst-case configuration under inverse-covariance projection. The one-binding cases already give a lower bound on the minimax risk, because inverse covariance is pointwise optimal for each single binding inequality. The proof shows that when both inequalities bind, the risk under inverse covariance is no larger than the larger of the two corresponding one-binding risks.

Thus the two-binding face can be controlled by the one-binding benchmarks. Because each one-binding benchmark is already minimized by inverse covariance, any competing geometry must weakly increase at least one of these lower-bound risks. Inverse covariance is therefore minimax over exact-boundary configurations with at most two binding inequalities. The full proof is given in Appendix~\ref{appendix:proofs}.

\subsection{A sufficient condition for boundary minimaxity}
\label{subsec:suff_general}

The two-binding argument suggests a more general sufficient condition. The attainable one-binding configurations continue to give a lower bound on the boundary-minimax value because inverse covariance is pointwise optimal for each of them. It is therefore sufficient that the worst attainable exact-boundary risk under inverse covariance is already attained in a one-binding configuration.

\begin{theorem}
\label{thm:suff_cond}
Under the setup assumptions, if
\[
\max_{c\in\bar C_d}R_H(\Sigma^{-1},c)
=
\max_{c\in\bar C_1}R_H(\Sigma^{-1},c),
\]
then
\[
\Sigma^{-1}\in\argmin_{\Omega\succ0}\max_{c\in\bar C_d}R_H(\Omega,c).
\]
\end{theorem}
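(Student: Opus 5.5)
The plan is a sandwich argument. Theorem~\ref{thm:one_const}, Part~1, supplies a lower bound that holds for every metric. The hypothesis shows that inverse covariance attains that bound.

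Fix an arbitrary $\Omega\succ0$. Because $\bar C_1\subseteq\bar C_d$ (every one-binding attainable configuration is also an exact-boundary configuration), we have
\[
\max_{c\in\bar C_d}R_H(\Omega,c)\geq\max_{c\in\bar C_1}R_H(\Omega,c).
\]
For each $c\in\bar C_1$, Theorem~\ref{thm:one_const}, Part~1, gives $R_H(\Omega,c)\geq R_H(\Sigma^{-1},c)$. Taking the maximum over the finite set $\bar C_1$ then yields
\[
\max_{c\in\bar C_1}R_H(\Omega,c)\geq\max_{c\in\bar C_1}R_H(\Sigma^{-1},c).
\]
By the hypothesis of the theorem, the right-hand side equals $B(\Sigma^{-1})=\max_{c\in\bar C_d}R_H(\Sigma^{-1},c)$. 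Chaining these inequalities gives $B(\Omega)\geq B(\Sigma^{-1})$ for every $\Omega\succ0$. Hence $\Sigma^{-1}$ attains $\inf_{\Omega\succ0}B(\Omega)$, so it belongs to the argmin and is boundary minimax in the sense defined in Section~\ref{subsec:boundary_criterion}.

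A few bookkeeping points need checking.

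\emph{Maxima are well defined.} The sets $\bar C_1$ and $\bar C_d$ are finite, being indexed by subsets of $\{1,\ldots,d\}$. They are also nonempty: $\Theta_R$ is a nonempty polyhedron with nonempty interior and no redundant constraints, so every facet is attainable and $\bar C_1\neq\emptyset$. Consequently every maximum is over a finite nonempty set of finite numbers, since each risk is bounded by finite Gaussian second moments.

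\emph{Scalar invariance.} Positive scalar multiples of $\Omega$ induce the same projection, so they do not affect either side of the comparison.

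I expect no real obstacle. All of the substance is in Theorem~\ref{thm:one_const}, Part~1, namely pointwise optimality of $\Sigma^{-1}$ at each one-binding face, which may be assumed here. The only care needed is to confirm that Theorem~\ref{thm:one_const} applies to each $c\in\bar C_1$ inside the full maintained system. It does, because all other coordinates of such a $c$ equal $-\infty$ and are therefore dropped from the local problem.
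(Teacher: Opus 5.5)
Your proof is correct and is the same sandwich argument the paper uses: pointwise optimality of $\Sigma^{-1}$ on $\bar C_1$ from Theorem~\ref{thm:one_const} gives the lower bound $\max_{c\in\bar C_d}R_H(\Omega,c)\geq\max_{c\in\bar C_1}R_H(\Sigma^{-1},c)$ for every $\Omega\succ0$, and the hypothesis shows that $\Sigma^{-1}$ attains this bound. Your bookkeeping on finiteness and nonemptiness of $\bar C_1$ and finiteness of the risks is sound, and it makes explicit what the paper leaves implicit.
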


The condition can be checked for a given constraint system, covariance matrix, and loss matrix by evaluating the finitely many attainable exact-boundary configurations. It is automatically satisfied when no attainable exact-boundary configuration has more than two binding inequalities. In that case, $\bar C_d=\bar C_2$, and the proof of Theorem~\ref{thm:two_const} shows that every two-binding risk under inverse covariance is bounded by a corresponding one-binding risk. The next example shows that the condition can fail once three inequalities can bind.

\subsection{A three-constraint numerical counterexample}
\label{subsec:three_counterexample}

We next give a numerical counterexample showing that the sufficient condition in Theorem~\ref{thm:suff_cond} can fail when three inequalities bind. Let $\mathcal V_{ur}\sim N(0,I_3)$, so inverse covariance corresponds to the metric $I_3$. Consider the three inequality normals
\[
a_1=(\tau,0,\rho)',\qquad
a_2=\left(-\frac{\tau}{2},\frac{\sqrt3\tau}{2},\rho\right)',\qquad
a_3=\left(-\frac{\tau}{2},-\frac{\sqrt3\tau}{2},\rho\right)',
\]
where $\rho=0.9$ and $\tau=\sqrt{1-\rho^2}=\sqrt{0.19}$. Every nonempty subset of these constraints is attainable as an exact active set. We evaluate risk under
\[
H_\varepsilon=\operatorname{diag}(10^{-4},10^{-4},1),
\]
which places almost all weight on the third, vertical coordinate. We compare inverse covariance with the alternative metric $\operatorname{diag}(1,1,1.35)$. For each nonempty binding set $S$, define the exact-boundary risk after projecting $\mathcal V_{ur}$ onto the cone defined by the binding inequalities as
\[
R_S(\Omega)
=
E\left[
\Pi_{\{u:a_j'u\geq0,\ j\in S\}}^{\Omega}(\mathcal V_{ur})'
H_\varepsilon
\Pi_{\{u:a_j'u\geq0,\ j\in S\}}^{\Omega}(\mathcal V_{ur})
\right].
\]
Table~\ref{tab:three_constraint_counterexample} reports Monte Carlo estimates of these risks, with simulation standard errors for the fixed-size face comparisons. Those differences are much larger than their Monte Carlo standard errors.

\begin{table}[!htbp]
\centering
\caption{Three-constraint counterexample}
\label{tab:three_constraint_counterexample}
\footnotesize
\begin{tabular}{lccc}
\toprule
Risk summary & \makecell{Inverse covariance\\$I_3$} & \makecell{Alternative metric\\$\operatorname{diag}(1,1,1.35)$} & Difference\\
\midrule
One-binding risk & \makecell{0.592699\\(0.001975)} & \makecell{0.594320\\(0.001974)} & \makecell{-0.001621\\(0.000044)}\\
Two-binding risk & \makecell{0.581214\\(0.001992)} & \makecell{0.569673\\(0.001997)} & \makecell{0.011541\\(0.000039)}\\
Three-binding risk & \makecell{0.612548\\(0.001987)} & \makecell{0.593276\\(0.001994)} & \makecell{0.019272\\(0.000059)}\\
\midrule
Worst boundary risk & 0.612548 & 0.594320 & 0.018228\\
\bottomrule
\end{tabular}
\begin{minipage}{0.92\linewidth}
\vspace{2mm}
{\footnotesize\begin{singlespace}
Notes: The difference column reports the first risk minus the second. By symmetry, all one-binding configurations share a population risk, as do all two-binding configurations. We estimate each common risk by averaging draw-level losses over the corresponding faces. Parentheses report ordinary Monte Carlo standard errors from 300,000 iid Gaussian draws, paired across metrics. The last row takes the largest of the three size-specific estimates for each metric. We do not attach a selected mean's standard error to these maxima. Appendix~\ref{appendix:counterexample_details} reports all seven individual face estimates.
\end{singlespace}}
\end{minipage}
\end{table}

The table shows that inverse covariance is best on the one-binding faces, as required by Theorem~\ref{thm:one_const}, but its worst boundary risk occurs when all three inequalities bind. Hence
\[
\max_{c\in\bar C_3}R_{H_\varepsilon}(I_3,c)
>
\max_{c\in\bar C_1}R_{H_\varepsilon}(I_3,c),
\]
so the sufficient condition in Theorem~\ref{thm:suff_cond} fails. The alternative metric accepts a small increase in one-binding risk but reduces the three-binding risk enough to lower the worst boundary risk. Thus inverse-covariance projection is not boundary minimax in this example.

The geometry explains the failure. The three constraint planes are tilted toward the vertical direction that receives almost all weight under $H_\varepsilon$. With all three inequalities imposed, increasing the vertical projection weight changes the active-face projection directions and reduces multi-binding risk enough to offset the small one-binding loss. The minimax geometry can therefore depend on the interaction among the covariance, loss, and constraint geometries once three or more inequalities bind.

This example shows that Part~1 of Theorem~\ref{thm:two_const} is sharp with respect to the number of binding inequalities on an attainable exact-boundary face: without additional conditions, inverse-covariance boundary minimaxity does not extend from $\bar C_2$ to $\bar C_3$. At the same time, Theorem~\ref{thm:suff_cond} remains useful because it identifies higher-dimensional problems in which the one- and two-binding logic continues to apply.

\subsection{Inverse-covariance projection can increase risk}
\label{subsec:no_harm_example}

The local no-harm guarantee in Part~2 of Theorem~\ref{thm:two_const} does not extend automatically once several inequalities interact. This point is related to \citet{ShinozakiChang1999}, who show that the restricted Gaussian MLE under positive-orthant restrictions need not dominate the unrestricted estimator for every linear functional once the number of means reaches five. The following exact calculation gives the corresponding failure in our quadratic-risk framework.

Let $k=d=5$, $\Sigma=I_5$, impose the coordinatewise restrictions $\theta_j\geq0$, and evaluate risk using
\[
H=0.05I_5+\mathbf 1_5\mathbf 1_5'.
\]
At the boundary configuration $c=\mathbf 0_5$, inverse-covariance projection is Euclidean projection onto the positive orthant. Thus, for $\mathcal V_{ur}\sim N(0,I_5)$,
\[
\mathcal V_r(I_5,\mathbf 0_5)=(\mathcal V_{ur,1,+},\ldots,\mathcal V_{ur,5,+})',
\qquad \mathcal V_{ur,j,+}=\max\{\mathcal V_{ur,j},0\}.
\]
The unrestricted risk is
\[
R_{ur}=\operatorname{tr}(H)=5.25.
\]
Using $E[\mathcal V_{ur,j,+}]=1/\sqrt{2\pi}$ and $E[\mathcal V_{ur,j,+}^2]=1/2$,
\begin{align*}
R_H(I_5,\mathbf 0_5)
&=0.05E\!\left[\sum_{j=1}^5\mathcal V_{ur,j,+}^2\right]
+E\!\left[\left(\sum_{j=1}^5\mathcal V_{ur,j,+}\right)^2\right]\\
&=0.05\frac52+\frac52+\frac{10}{\pi}
\approx5.808.
\end{align*}
Thus
\[
\frac{R_H(\Sigma^{-1},\mathbf 0_5)}{R_{ur}}\approx1.106.
\]
Inverse-covariance projection therefore increases asymptotic risk by about $10.6\%$ at this boundary configuration. The example is useful because the sampling covariance is proportional to the identity, so the risk increase comes from the interaction of several constraints with a non-Euclidean loss. It motivates the separate no-harm requirement used in Section~\ref{sec:selection}.

\section{Selecting projection geometry in practice} \label{sec:selection}
Under our boundary-risk and no-harm criteria, the results in Section~\ref{sec:results} determine the projection geometry when the maintained system contains at most two inequalities. For larger maintained systems, configurations with one or two relevant inequalities are still analytically controlled, but exact-boundary faces with three or more binding inequalities can overturn boundary minimaxity and local configurations with three or more finite slack coordinates can violate the no-harm condition, as Sections~\ref{subsec:three_counterexample} and~\ref{subsec:no_harm_example} show. This section develops a practical criterion for those cases. The main idea is to separate two objectives: efficiency at exact-boundary configurations and protection over the enlarged local-slack set.

\subsection{Boundary risk and the local no-harm condition}
\label{subsec:population_selection}

Recall the local asymptotic risk
\[
R_H(\Omega,c)=E[\mathcal V_r(\Omega,c)'H\mathcal V_r(\Omega,c)]
\]
and unrestricted risk
\[
R_{ur}=\operatorname{tr}(H\Sigma).
\]
Recall from Section~\ref{subsec:setup} that $C$ is the enlarged local-slack set and $\bar C_d$ is the complete attainable exact boundary. We define worst-case exact-boundary risk by
\[
B(\Omega)=\max_{c\in\bar C_d}R_H(\Omega,c)
\]
and the local no-harm ratio by
\[
G(\Omega)=\sup_{c\in C}\frac{R_H(\Omega,c)}{R_{ur}}.
\]
We say that a projection geometry satisfies the \emph{local no-harm condition} if
\begin{equation}
G(\Omega)\leq1.
\tag{NH}\label{eq:no_harm}
\end{equation}
Because the all-interior configuration belongs to $C$ and gives $\mathcal V_r(\Omega,c)=\mathcal V_{ur}$, $G(\Omega)\geq1$ for every projection geometry. Thus condition~\eqref{eq:no_harm} is equivalent, at the population level, to $G(\Omega)=1$.
Our population selection problem is
\begin{equation}
\inf_{\Omega\succ0}B(\Omega)
\qquad\text{subject to}\qquad
G(\Omega)\leq1.
\tag{P}\label{eq:population_choice}
\end{equation}

The objective and constraint use different local comparisons. Exact-boundary configurations provide a tuning-free benchmark for the efficiency of the projection geometry. Finite local slacks enter through the no-harm condition, which rules out a geometry if it performs worse than ignoring the restrictions somewhere in the enlarged local-slack set. Because $C$ is enlarged, this is a conservative no-harm requirement. Thus $B$ ranks projection geometries on the boundary, while $G$ screens out geometries that sacrifice the unrestricted benchmark elsewhere in the local experiment.

When the maintained system has $d\leq2$, Theorem~\ref{thm:two_const} provides both ingredients of Problem~\eqref{eq:population_choice}: inverse covariance minimizes boundary risk and satisfies $G(\Sigma^{-1})=1$. Hence $\Sigma^{-1}$ solves Problem~\eqref{eq:population_choice} without further numerical verification.

Problem~\eqref{eq:population_choice} has a feasible projection geometry. Projection in the loss metric, $\Omega=H$, satisfies
\[
\|\mathcal V_r(H,c)\|_H^2\leq\|\mathcal V_{ur}\|_H^2
\]
draw by draw and therefore satisfies \eqref{eq:no_harm}. Inverse covariance, $\Omega=\Sigma^{-1}$, is the other natural benchmark due to the following result.

\begin{proposition}
\label{prop:population_choice}
Suppose the sufficient condition in Theorem~\ref{thm:suff_cond} holds. If $\Sigma^{-1}$ also satisfies the local no-harm condition, then
\[
\Sigma^{-1}\in\argmin_{\Omega:G(\Omega)\leq1}B(\Omega).
\]
Hence inverse-covariance projection solves Problem~\eqref{eq:population_choice}, not only the restricted path problem introduced below.
\end{proposition}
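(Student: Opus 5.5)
The argument is a constrained version of Theorem~\ref{thm:suff_cond}: the unconstrained boundary-minimax property passes to the feasible subset $\{\Omega:G(\Omega)\leq1\}$, provided $\Sigma^{-1}$ belongs to that subset. We assume, as in the proposition, that $\Sigma^{-1}$ satisfies \eqref{eq:no_harm}.

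\emph{Step 1 (lower bound).} Fix any $\Omega\succ0$. Since $\bar C_1\subseteq\bar C_d$, we have $B(\Omega)\geq\max_{c\in\bar C_1}R_H(\Omega,c)$. By Part~1 of Theorem~\ref{thm:one_const}, $R_H(\Omega,c)\geq R_H(\Sigma^{-1},c)$ for every $c\in\bar C_1$, so
\[
B(\Omega)\geq\max_{c\in\bar C_1}R_H(\Sigma^{-1},c).
\]
\emph{Step 2 (attainment).} The sufficient condition of Theorem~\ref{thm:suff_cond} states that the right-hand side equals $\max_{c\in\bar C_d}R_H(\Sigma^{-1},c)=B(\Sigma^{-1})$. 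Hence $B(\Omega)\geq B(\Sigma^{-1})$ for every $\Omega\succ0$. This is exactly the conclusion of Theorem~\ref{thm:suff_cond}, which could be cited directly instead.

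\emph{Step 3 (feasibility).} The inequality in Step~2 holds in particular for every $\Omega$ with $G(\Omega)\leq1$. By assumption $G(\Sigma^{-1})\leq1$, so $\Sigma^{-1}$ is feasible for Problem~\eqref{eq:population_choice}. Together with Step~2, it attains the infimum over the feasible set, which gives
\[
\Sigma^{-1}\in\argmin_{\Omega:G(\Omega)\leq1}B(\Omega).
\]
The only substantive point to check is that an infimum taken over a smaller set is not lowered. This holds because the lower bound in Step~1 is valid for all positive-definite $\Omega$. The maxima over $\bar C_1$ and $\bar C_d$ are maxima over finite sets, so attainment there is automatic.

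Because the feasible set of Problem~\eqref{eq:population_choice} is the full constraint set $\{\Omega\succ0: G(\Omega)\leq 1\}$, this shows that $\Sigma^{-1}$ solves Problem~\eqref{eq:population_choice}. It follows that it also solves any restricted path version whose path contains $\Sigma^{-1}$. I do not expect a real obstacle in this argument; all the work is done by Theorem~\ref{thm:one_const} and the hypothesis of Theorem~\ref{thm:suff_cond}.
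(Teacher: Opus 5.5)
Your proof is correct and is essentially the paper's argument. Theorem~\ref{thm:suff_cond}, which your Steps~1--2 simply re-derive from Part~1 of Theorem~\ref{thm:one_const}, shows that $\Sigma^{-1}$ minimizes $B$ over all $\Omega\succ0$, and the no-harm hypothesis makes it feasible, so it also minimizes $B$ over the smaller constraint set.
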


The proposition follows immediately because Theorem~\ref{thm:suff_cond} establishes that $\Sigma^{-1}$ minimizes $B(\Omega)$ without the no-harm constraint, while the additional assumption makes it feasible. This result is useful for maintained systems with more than two inequalities: if the sufficient condition and local no-harm condition can both be verified, no further search is needed for Problem~\eqref{eq:population_choice}. In the ordered-treatment simulations and gasoline-demand application below, these diagnostics support inverse covariance. At the same time, the deliberately constructed counterexamples in Sections~\ref{subsec:three_counterexample} and~\ref{subsec:no_harm_example} show that the diagnostics are not automatic.

\subsection{A one-dimensional implementation}
\label{subsec:path}

Directly solving Problem~\eqref{eq:population_choice} over the positive-definite cone is generally computationally demanding. Lemma~\ref{lemma:minimax_local} gives a useful heuristic for choosing a low-dimensional search family. In the single-halfspace problem, the inverse of a pointwise optimal projection metric can be chosen in the affine span of $\Sigma$ and a normalized version of $H^{-1}$. Because the coefficients depend on the local slack and constraint and need not be nonnegative, this formula does not define a canonical multi-constraint rule, but it identifies $\Sigma$ and $H^{-1}$ as natural endpoints.

Since positive scalar multiples of an inverse metric generate the same projection geometry, we normalize the two path endpoints before interpolation. Let
\[
\bar K_H=\frac{H^{-1}}{\operatorname{tr}(H^{-1})},
\qquad
\bar K_\Sigma=\frac{\Sigma}{\operatorname{tr}(\Sigma)}.
\]
We use the convex path
\begin{equation}
K_s=\Omega_s^{-1}=(1-s)\bar K_H+s\bar K_\Sigma,
\qquad 0\leq s\leq1.
\label{eq:path}
\end{equation}
Thus the endpoint geometries are the loss projection at $s=0$ and the inverse-covariance projection at $s=1$. The compact interval $[0,1]$ is convenient for numerical search and guarantees that all matrices along the path are positive-definite. The trace normalization removes arbitrary positive scaling of the two endpoint matrices and makes the reported value of $s$ reproducible for the maintained parameterization. Thus $s$ is a coordinate on this normalized path, not an invariant feature of the problem.

Along this path, the population boundary-risk selector is
\begin{equation}
 s_{BR}\in\argmin_{s\in[0,1]} B(\Omega_s)
 \qquad\text{subject to}\qquad
 G(\Omega_s)\leq1.
\label{eq:sbr}
\end{equation}
The path is a tractable bridge between interpretable endpoints, not a search over the full positive-definite cone. 

\subsection{Feasible risk evaluation}
\label{subsec:feasible_selector}

In applications, both the covariance matrix and the set of inequalities that are sufficiently close to binding to matter for metric selection are unknown. We use a plug-in covariance matrix $\hat\Sigma$ and a conservative selection of potentially binding inequalities. For the restriction $a_j'\theta\geq b_j$, define
\[
T_j
=
\frac{\sqrt n(a_j'\hat\theta_{ur}-b_j)}
{\sqrt{a_j'\hat\Sigma a_j}}
\]
and set
\[
\hat J=\{j:T_j\leq\kappa\}.
\]
For asymptotic interpretation, one may use a deterministic sequence $\kappa_n\to\infty$ with $\kappa_n/\sqrt n\to0$. Inequalities with finite local slack then satisfy $T_j=O_p(1)$ and are retained, while inequalities fixed in the interior have $T_j$ diverging at the $\sqrt n$ rate and are not included, both with probability approaching one. The simulations use the conservative choice $\kappa=5$.

In feasible calculations, the covariance endpoint in \eqref{eq:path} is formed by replacing $\Sigma$ with $\hat\Sigma$. Thus the reported path value refers to the same normalized interpolation, with the inverse-covariance endpoint estimated from the data.

The selection is used only to reduce the dimension of the risk calculation used to choose the geometry. It is not a pretest for whether a maintained inequality is imposed: after the geometry is chosen, the final estimator still projects onto the full maintained feasible set.

If no inequality is retained, the metric-selection step is skipped and the unrestricted estimator is reported. Since $\kappa\geq0$, an empty retained set implies that the unrestricted estimate satisfies every maintained inequality, so projection leaves it unchanged. False positives increase computation but do not remove restrictions from the estimator. False negatives are more consequential because they omit a local constraint from metric selection.

Conditional on the selected set $\hat J$, standardize the retained local slacks by writing
\[
c_{\hat J}=-D\mu,
\qquad
D=\operatorname{diag}\left(
\sqrt{a_j'\hat\Sigma a_j}:j\in\hat J
\right),
\qquad \mu\geq0.
\]
Thus $\mu_j$ measures distance from the boundary in standard-deviation units. The no-harm criterion allows $\mu_j\in[0,\infty]$. For computation, finite coordinates are searched up to $M$, while infinite coordinates are handled by omitting the corresponding inequalities as described below. Our baseline uses $M=8$. Sensitivity checks with other values of $M$ can be carried out if the maximizer lies on, or very near, the boundary of the reported finite-box calculation.

The boundary criterion only requires enumerating exact faces. Let $\widehat{\mathcal J}_{\mathrm{att}}$ be the collection of nonempty subsets $L\subseteq\hat J$ for which there exists $\theta\in\Theta_R$ such that $a_j'\theta=b_j$ for $j\in L$ and $a_j'\theta>b_j$ for $j\in\hat J\setminus L$. Thus all maintained inequalities remain imposed, while nonretained inequalities are treated as asymptotically irrelevant in the local risk calculation. The estimated boundary criterion is
\[
\hat B(s)
=
\max_{L\in\widehat{\mathcal J}_{\mathrm{att}}}
\frac{\hat R_H(s,\mathbf 0_L;L)}
{\operatorname{tr}(H\hat\Sigma)}.
\]
Here $\hat R_H(s,\mu_L;L)$ is the plug-in risk under $\Omega_s$ with raw slack $-D_L\mu_L$ for the inequalities in $L$, while the other retained inequalities are asymptotically irrelevant. The normalization reports risk relative to unrestricted risk and does not affect the selected path value.

The estimated finite-box local no-harm diagnostic maximizes over lower-dimensional faces of the retained local slack set,
\[
\hat G_M(s)
=
\max\left\{1,\;
\max_{\emptyset\neq L\subseteq\hat J}
\max_{\mu_L\in[0,M]^{|L|}}
\frac{\hat R_H(s,\mu_L;L)}{\operatorname{tr}(H\hat\Sigma)}
\right\}.
\]
Here the inequalities in $L$ are assigned finite standardized slacks and the other retained inequalities are treated as asymptotically irrelevant. Attainability is imposed for the boundary criterion, whereas the no-harm diagnostic keeps the conservative enlarged-set convention.

For modest $|\hat J|$, both $\hat B(s)$ and $\hat G_M(s)$ can be computed efficiently using the dual representation of the projection problem. Exact-boundary enumeration is exponential in $|\hat J|$, and the finite-slack search for each face has dimension $|L|$. The implementation is therefore intended for settings in which the conservative screen leaves a moderate number of potentially relevant inequalities. The dual variables depend on the draw only through $A_L\mathcal V_{ur}$. Conditional Gaussian moments integrate out the remaining components, leaving Gaussian dimension $\operatorname{rank}(A_L)\leq |L|$, with equality when the rows have full rank. Our implementation combines active-set enumeration for the resulting nonnegative quadratic program with scrambled Sobol integration. The use of common draws is described in Appendix~\ref{appendix:selector_details}.

\subsection{Plug-in covariance and metric estimation}
\label{subsec:cov_estimation}

The preceding criterion is implemented with estimated covariance and projection metrics. We first record a general plug-in stability result. We then specialize it to the inverse-covariance boundary diagnostic from Proposition~\ref{prop:population_choice}. For notational simplicity, we work in normalized coordinates in which $H=I$; the general weighted-loss version follows from the coordinate transformation in Section~\ref{subsec:setup}.

Fix positive-definite matrices $\Sigma$ and $\Omega$, and let $\hat\Sigma$ and $\hat\Omega$ be symmetric estimators. On the event that $\hat\Sigma$ and $\hat\Omega$ are positive definite, define the plug-in Gaussian projection
\[
\hat{\mathcal V}_r(\hat\Sigma,\hat\Omega,c)
=
\argmin_{u\in U(c)}
\|u-\hat\Sigma^{1/2}Z\|_{\hat\Omega}^2,
\qquad Z\sim N(0,I_k),
\]
where $Z$ is independent of $(\hat\Sigma,\hat\Omega)$. Outside this event, set the plug-in projection equal to zero. This convention does not affect the probability limits below.
Let $\mathcal V_r(\Omega,c)$ denote the analogous population projection of $\Sigma^{1/2}Z$ using the metric $\Omega$, and let
\[
R_{\Sigma,\Omega}(c)
=
E[\|\mathcal V_r(\Omega,c)\|^2],
\qquad
\hat R_n(c)
=
E[\|\hat{\mathcal V}_r(\hat\Sigma,\hat\Omega,c)\|^2\mid\hat\Sigma,\hat\Omega].
\]

\begin{theorem}
\label{thm:plugin_boundary_stability}
Work in normalized coordinates in which $H=I$. Suppose $\hat\Sigma\xrightarrow{p}\Sigma$, $\hat\Omega\xrightarrow{p}\Omega$, and
$P(\hat\Sigma\succ0,\hat\Omega\succ0)\to1$, where $\Sigma\succ0$ and $\Omega\succ0$. Then
\[
\sup_{c\in C}
\left|\hat R_n(c)-R_{\Sigma,\Omega}(c)\right|
=o_p(1).
\]
\end{theorem}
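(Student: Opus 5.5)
The plan is to prove uniform convergence by a deterministic Lipschitz-type bound on the projection map, uniform in $c$, and then apply dominated convergence conditional on $(\hat\Sigma,\hat\Omega)$. We work on the event $E_n=\{\hat\Sigma\succ0,\hat\Omega\succ0,\ \|\hat\Sigma-\Sigma\|\le\eta_n,\ \|\hat\Omega-\Omega\|\le\eta_n\}$ for some deterministic $\eta_n\to0$. Such a sequence exists by the consistency assumptions, and $P(E_n)\to1$. It therefore suffices to show that
\[
\sup_{c\in C}\bigl|E\|\Pi^{\Omega_1}_{U(c)}(\Sigma_1^{1/2}Z)\|^2-E\|\Pi^{\Omega}_{U(c)}(\Sigma^{1/2}Z)\|^2\bigr|\to0
\]
as $(\Sigma_1,\Omega_1)\to(\Sigma,\Omega)$ deterministically, where $\Pi^{\Omega}_{U}$ denotes $\Omega$-projection onto $U$. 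Coordinates with $c_j=-\infty$ simply drop rows, so $U(c)$ ranges over closed convex sets containing $0$, since $c\le0$. The argument below is uniform over all such sets, so the infinite coordinates need no separate treatment.

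The key step is a deterministic perturbation bound. Fix a closed convex set $U\ni0$. First, the change in the point being projected: for a fixed metric, $\Pi^{\Omega_1}_U$ is $1$-Lipschitz in $\|\cdot\|_{\Omega_1}$, so
\[
\|\Pi^{\Omega_1}_U(x)-\Pi^{\Omega_1}_U(y)\|\le\kappa(\Omega_1)^{1/2}\|x-y\|,
\]
where $\kappa$ is the condition number. Second, the change in metric, which I expect to be the main obstacle, because projections in different metrics onto a general convex set must be compared without any dependence on $U$. I would use the variational inequalities. With $p=\Pi^{\Omega}_U(x)$ and $q=\Pi^{\Omega_1}_U(x)$, we have $(x-p)'\Omega(q-p)\le0$ and $(x-q)'\Omega_1(p-q)\le0$. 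Adding them gives
\[
(p-q)'\Omega(p-q)\le (x-q)'(\Omega_1-\Omega)(p-q).
\]
Since $0\in U$, the bound $\|x-q\|_{\Omega_1}\le\|x\|_{\Omega_1}$ gives $\|x-q\|\le\kappa(\Omega_1)^{1/2}\|x\|$, and hence
\[
\|p-q\|\le\lambda_{\min}(\Omega)^{-1}\kappa(\Omega_1)^{1/2}\|\Omega_1-\Omega\|\,\|x\|.
\]
This bound is uniform in $U$. Combining the two estimates,
\[
\|\Pi^{\Omega_1}_U(\Sigma_1^{1/2}Z)-\Pi^{\Omega}_U(\Sigma^{1/2}Z)\|\le K\bigl(\|\Omega_1-\Omega\|+\|\Sigma_1^{1/2}-\Sigma^{1/2}\|\bigr)\|Z\|,
\]
with $K$ bounded for $(\Sigma_1,\Omega_1)$ in a neighborhood of $(\Sigma,\Omega)$. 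Continuity of the matrix square root on the positive-definite cone handles the $\Sigma$ term.

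To finish, note that both projections have norm at most a constant times $\|Z\|$, again because $0\in U$ and the condition numbers are bounded. Using $\bigl|\|v\|^2-\|w\|^2\bigr|\le\|v-w\|\,(\|v\|+\|w\|)$, the supremum over $c$ of the risk difference is at most a constant times $(\|\Omega_1-\Omega\|+\|\Sigma_1^{1/2}-\Sigma^{1/2}\|)\,E\|Z\|^2$. This tends to $0$ along $E_n$. Off $E_n$, whose probability tends to zero, the convention does not matter for an $o_p(1)$ statement, which concludes the proof.
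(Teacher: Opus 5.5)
Your proof is correct and is essentially the paper's argument. Variational inequalities for the metric projections, together with $0\in U(c)$ and eigenvalue control on a high-probability event, give a bound of the form $C\|Z\|\bigl(\|\hat\Omega-\Omega\|+\|\hat\Sigma^{1/2}-\Sigma^{1/2}\|\bigr)$ uniformly in $c$, and $\bigl|\|v\|^2-\|w\|^2\bigr|\le\|v-w\|(\|v\|+\|w\|)$ finishes. The only structural difference is that the paper perturbs the projected point and the metric in one combined variational inequality, whereas you split the two changes via nonexpansiveness plus a fixed-point metric comparison.

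One harmless slip: adding your two variational inequalities actually gives $(p-q)'\Omega(p-q)\le(x-q)'(\Omega-\Omega_1)(p-q)$, the opposite sign to what you wrote. This is immaterial once you apply Cauchy--Schwarz.
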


\begin{corollary}
\label{cor:plugin_boundary_stability}
In the same normalized coordinates, suppose $\hat\Sigma\xrightarrow{p}\Sigma\succ0$ and $P(\hat\Sigma\succ0)\to1$. Set $\Omega=\Sigma^{-1}$ and $\hat\Omega=\hat\Sigma^{-1}$ on the event that $\hat\Sigma$ is positive definite, using the zero convention above outside this event. Define the population gap for the inverse-covariance boundary diagnostic by
\[
\Delta
=
\max_{c\in\bar C_d}R_{\Sigma,\Sigma^{-1}}(c)
-
\max_{c\in\bar C_1}R_{\Sigma,\Sigma^{-1}}(c)
\geq0
\]
and its plug-in analogue
\[
\hat\Delta_n
=
\max_{c\in\bar C_d}\hat R_n(c)
-
\max_{c\in\bar C_1}\hat R_n(c).
\]
Then
\[
|\hat\Delta_n-\Delta|=o_p(1).
\]
In particular, if the sufficient condition in Theorem~\ref{thm:suff_cond} fails strictly, so that $\Delta>0$, then $\hat\Delta_n\xrightarrow{p}\Delta$ and the strict failure is detected with probability approaching one.
\end{corollary}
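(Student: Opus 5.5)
The corollary follows from Theorem~\ref{thm:plugin_boundary_stability} applied with $\Omega=\Sigma^{-1}$ and $\hat\Omega=\hat\Sigma^{-1}$. The first task is to check that theorem's hypotheses. Matrix inversion is continuous on the open positive-definite cone, so $\hat\Sigma\xrightarrow{p}\Sigma\succ0$ together with $P(\hat\Sigma\succ0)\to1$ gives $\hat\Sigma^{-1}\xrightarrow{p}\Sigma^{-1}$ by the continuous mapping theorem. The zero convention off the event $\{\hat\Sigma\succ0\}$ only matters on a set of vanishing probability. Also $P(\hat\Sigma\succ0,\hat\Omega\succ0)=P(\hat\Sigma\succ0)\to1$. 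The theorem then yields
\[
\sup_{c\in C}\bigl|\hat R_n(c)-R_{\Sigma,\Sigma^{-1}}(c)\bigr|=o_p(1).
\]

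Next, pass from uniform convergence to convergence of the two maxima. Both $\bar C_d$ and $\bar C_1$ are finite subsets of $C$, so the maxima are attained. For any real functions $f,g$ on a set $S$,
\[
\Bigl|\max_S f-\max_S g\Bigr|\le \sup_S|f-g|.
\]
Applying this with $S=\bar C_d$ and $S=\bar C_1$ bounds each maximum's error by $\sup_{c\in C}|\hat R_n(c)-R_{\Sigma,\Sigma^{-1}}(c)|$. By the triangle inequality,
\[
|\hat\Delta_n-\Delta|\le 2\sup_{c\in C}\bigl|\hat R_n(c)-R_{\Sigma,\Sigma^{-1}}(c)\bigr|=o_p(1).
\]
The sign claim $\Delta\ge0$ is immediate from $\bar C_1\subseteq\bar C_d$, which holds because $\bar C_t$ is nondecreasing in $t$.

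Finally, consider the detection statement. If $\Delta>0$, then $P(|\hat\Delta_n-\Delta|<\Delta/2)\to1$, so $\hat\Delta_n>\Delta/2>0$ with probability approaching one. More generally, any fixed detection threshold in $(0,\Delta)$ is exceeded with probability approaching one.

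The main obstacle is already absorbed into Theorem~\ref{thm:plugin_boundary_stability}: uniform control over the noncompact set $C$, which includes infinite slacks. Here it is only invoked. The only point needing care is verifying consistency of $\hat\Sigma^{-1}$ under the zero convention. For that I would work on the event $\{\hat\Sigma\succ0\}$, whose probability tends to one, and note that statements holding with probability approaching one are unaffected by what happens on its complement.
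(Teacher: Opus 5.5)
Your proposal is correct and matches the paper's proof essentially step for step. You apply Theorem~\ref{thm:plugin_boundary_stability} with $\Omega=\Sigma^{-1}$ and $\hat\Omega=\hat\Sigma^{-1}$ via continuity of inversion, then use $|\max_S f-\max_S g|\le\max_S|f-g|$ on $S=\bar C_d$ and $S=\bar C_1$ to get $|\hat\Delta_n-\Delta|\le 2\cdot o_p(1)$. Your additions (that $\Delta\ge0$ because $\bar C_1\subseteq\bar C_d$, the explicit $\Delta/2$ detection threshold, and noting that only pointwise convergence at finitely many configurations is actually needed) are harmless refinements.
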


The theorem justifies plug-in Gaussian risk calculations for any fixed limiting projection geometry. The corollary applies this result to the inverse-covariance endpoint $s=1$ and its boundary diagnostic. The same argument gives uniform convergence over the compact path in \eqref{eq:path}, because the population and feasible path matrices are uniformly positive definite and converge uniformly in $s$. This uniformity justifies plug-in risk evaluation along the path. Consistency of the constrained selector itself would also require stability of the retained constraints and estimated feasible sets, together with separation of the population minimizer. Feasible-set stability is nontrivial because population feasibility requires $G=1$, with no strict $G<1$ margin. The same conclusions for risk evaluation apply to normalized ratios because $\operatorname{tr}(\hat\Sigma)\to\operatorname{tr}(\Sigma)$ in normalized coordinates and $\operatorname{tr}(H\hat\Sigma)\to\operatorname{tr}(H\Sigma)$ after transforming back.

The gap $\Delta$ is zero exactly when the sufficient condition in Theorem~\ref{thm:suff_cond} holds, in which case $\hat\Delta_n \xrightarrow{p} 0$. If $\Delta=0$, finite-sample calculations should therefore report the numerical plug-in gap rather than treat equality as a formal decision. Passing the boundary diagnostic is not by itself the final recommendation. We use it together with the local no-harm calculation: if inverse covariance is boundary minimax and also satisfies local no-harm, Proposition~\ref{prop:population_choice} establishes that it solves the population selection problem. Otherwise, we search along the path in \eqref{eq:path}. The simulations in Section~\ref{sec:simulations} illustrate selected-set screening, covariance estimation, finite-$M$ approximation, and the resulting feasible selector.

\subsection{Inverse-covariance projection as a default}
The selection criterion is useful when the researcher wants to adapt the projection geometry to the particular combination of sampling uncertainty, loss, and constraints. If one instead wants to use a prespecified geometry without running the diagnostic search, inverse covariance is the natural default.

In the exact Gaussian experiment $\hat\theta_{ur}\mid\theta\sim N(\theta,\Sigma/n)$, projection using $\Sigma^{-1}$ coincides with constrained maximum likelihood. Under a prior that is flat on $\Theta_R$ and zero outside it, the same projection is also the posterior mode. More generally, regular constrained maximum likelihood is locally equivalent to projecting the unrestricted MLE using the inverse asymptotic covariance matrix, and efficient GMM leads to the same inverse-covariance geometry in the corresponding local approximation.

These likelihood and extremum-estimation arguments are reinforced by the risk results in Section~\ref{sec:results}. Inverse covariance is pointwise optimal at every attainable one-binding exact-boundary configuration, is boundary minimax over $\bar C_2$, and is locally no-harm whenever at most two inequalities have finite local slack. Hence, when the maintained system has at most two inequalities, it solves Problem~\eqref{eq:population_choice}. With more inequalities, the same conclusion holds if the sufficient condition in Theorem~\ref{thm:suff_cond} and the population no-harm condition hold. The reported plug-in boundary and finite-box calculations generally support inverse covariance in the ordered-treatment simulations and gasoline-demand application. The counterexamples in Sections~\ref{subsec:three_counterexample} and~\ref{subsec:no_harm_example} show why these diagnostics are still useful: inverse covariance is well motivated as a default, but it is not universally optimal when several inequalities interact.

\section{Monte Carlo simulations} \label{sec:simulations}
This section studies the quantitative importance of projection geometry and the behavior of the practical selector. We organize the simulations around three settings. The first is a deliberately artificial Gaussian design constructed to show how the boundary-risk/no-harm rule behaves when inverse covariance is and is not supported by the sufficient condition. The next two are economically motivated, namely an ordered-treatment-effects design with a common control group and a flexible parametric IV design based on \citet{chetverikov2017nonparametric}.

\subsection{Illustrative boundary-risk selector}
\label{subsec:sim_selector}

We begin with an analytically transparent Gaussian design that illustrates the feasible selector when the boundary-risk and no-harm diagnostics do and do not support inverse covariance. Let $k=10$ and observe
\[
X_i\stackrel{iid}{\sim}N(\theta_0,I_{10}),
\qquad i=1,\ldots,n,
\]
with $\hat\theta_{ur}=\bar X$ and coordinatewise restrictions
\[
\theta_j\geq0,
\qquad j=1,\ldots,10.
\]
We set $n=250$. We start with exact-boundary designs in which the first $m\in\{1,3,5\}$ inequalities bind. In these designs $\theta_{0,j}=0$ for $j\leq m$, while the remaining coordinates are fixed at $0.75$ in the interior, making their inequalities asymptotically irrelevant. The true covariance of $\sqrt n(\hat\theta_{ur}-\theta_0)$ is $I_{10}$, but the feasible procedure uses the sample covariance matrix in every replication.

The loss matrix is
\[
H=
\begin{pmatrix}
H_5&0\\
0&0.05I_5
\end{pmatrix},
\qquad
H_5=0.05I_5+\mathbf 1\mathbf 1'.
\]
The block $H_5$ places substantial weight on common movements of the first five coordinates, so risk depends strongly on how errors interact across locally relevant inequalities. Thus the loss and covariance geometries differ even though sampling covariance is proportional to the identity. The feasible selector uses the conservative slack screen with $\kappa=5$, the normalized path in \eqref{eq:path}, and the finite-box diagnostic with $M=8$. The results use 1,000 outer replications. In the designs reported below the screen retains the true locally relevant set in every replication with no false positives or false negatives.

Table~\ref{tab:selector_illustrative} summarizes the exact-boundary designs. Selection is performed separately in each replication on the grid $\{0,0.05,\ldots,1\}$, using a $0.004$ numerical allowance for the finite-box diagnostic and always admitting the analytically safe loss projection. Appendix~\ref{appendix:replication_details} explains the allowance. With $m=1$ and $m=3$, the sufficient condition in Theorem~\ref{thm:suff_cond} holds in the population, and the selector chooses $s=1$ in every replication. With $m=5$, the sufficient condition fails in the population and every replication, and normalized worst boundary risk exceeds one at $s=1$. The selector moves toward loss projection, choosing $\bar s_{BR}=0.722$ on average and reducing estimated worst boundary risk from $1.094$ to $0.933$.

\begin{table}[!htbp]
\centering
\caption{Illustrative boundary-risk selector}
\label{tab:selector_illustrative}
\footnotesize
\begin{tabular}{cccccc}
\toprule
$m$ & \makecell{Suff.\\condition} & $\bar s_{BR}$ & $\overline{\hat B(0)}$ & $\overline{\hat B(1)}$ & $\overline{\hat B(\hat s_{BR})}$\\
\midrule
1 & 1.00 & 1.000 & 0.994 & 0.903 & 0.903\\
3 & 1.00 & 1.000 & 0.995 & 0.922 & 0.922\\
5 & 0.00 & 0.722 & 0.995 & 1.094 & 0.933\\
\bottomrule
\end{tabular}
\begin{minipage}{0.92\linewidth}
\vspace{2mm}
{\footnotesize\begin{singlespace}
Notes: $n=250$ and 1,000 outer replications. \emph{Suff. condition} is the share satisfying the plug-in sufficient condition in Theorem~\ref{thm:suff_cond}. $\bar s_{BR}$ averages replication-specific selections. $\hat B$ is normalized by plug-in unrestricted risk, and the last column averages boundary risk at the selected values. Selection requires $\hat G_8(s)\leq1.004$ for $s>0$, with $\hat G_8(0)=1$ imposed analytically. Appendix~\ref{appendix:replication_details} specifies the numerical rule.
\end{singlespace}}
\end{minipage}
\end{table}

\begin{figure}[b!]
	\centering
	\caption{Illustrative selector when five inequalities are locally relevant}
	\label{fig:selector_m5}
	\includegraphics[width=\textwidth]{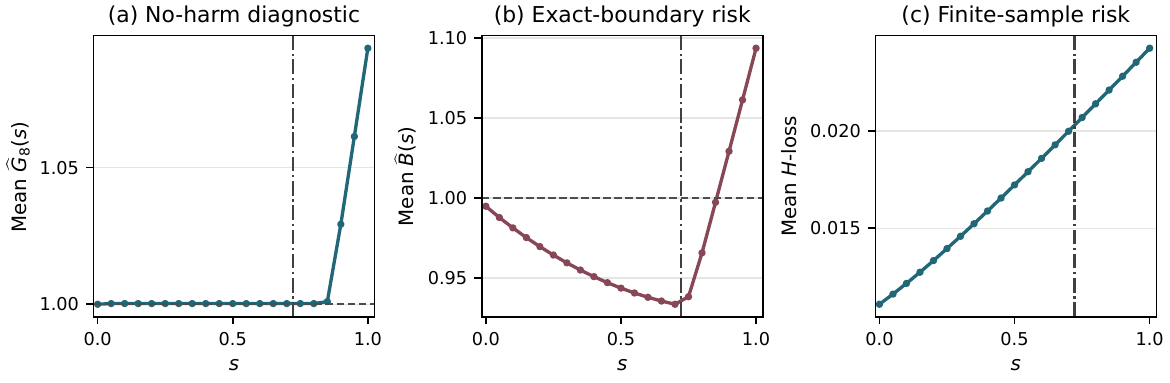}
	\begin{minipage}{0.92\linewidth}
		\vspace{1mm}
		{\footnotesize\begin{singlespace}
				Notes: Average finite-box diagnostic $\hat G_8(s)$, exact-boundary criterion $\hat B(s)$, and finite-sample $H$-risk estimated from the same 1,000 replications of the $n=250$, $m=5$ exact-boundary design. The vertical marker is the average replication-specific selection $\bar s_{BR}$, not the minimizer of the averaged curve. The horizontal dashed line is one, and $\hat G_8(0)=1$ is imposed analytically.
		\end{singlespace}}
	\end{minipage}
\end{figure}

Figure~\ref{fig:selector_m5} shows average boundary risk and the finite-box no-harm diagnostic along the path for the $m=5$ exact-boundary design. The key feature is that the boundary-risk curve has an interior minimum among the values admitted by the finite-box diagnostic. The third panel reports finite-sample loss at the particular data-generating process. This loss need not be minimized at $s_{BR}$ because the selector protects against all exact-boundary faces generated by the retained inequalities rather than optimizing risk at a known active face.

Mixed local-slack profiles likewise favor inverse covariance for $m=1,3$ and movement away from $s=1$ for $m=5$. Appendix~\ref{appendix:selector_details} reports these results and implementation details.

\FloatBarrier

\subsection{Ordered treatment effects}
\label{subsec:sim_treatment}

We next consider the ordered-treatment-effects design of Example~\ref{ex:treatment}, in which covariance across coordinates arises directly from an economically familiar setting. There is a control group $d=0$ and five ordered treatment intensities $d=1,\ldots,5$. Define the treatment effects
\[
\theta_d=E[Y(d)]-E[Y(0)],
\qquad d=1,\ldots,5,
\]
and impose ordered average treatment effects,
\[
0\leq\theta_1\leq\theta_2\leq\cdots\leq\theta_5.
\]
Writing $\theta=(\theta_1,\ldots,\theta_5)'$, the restrictions are $A\theta\geq0$ with
\[
A=
\begin{pmatrix}
1&0&0&0&0\\
-1&1&0&0&0\\
0&-1&1&0&0\\
0&0&-1&1&0\\
0&0&0&-1&1
\end{pmatrix}.
\]

Conditional on deterministic group counts, outcomes are independent Gaussian,
\[
Y_{ig}\sim N(m_g,\sigma_g^2),
\qquad m_0=0,\quad m_d=\theta_d,
\]
and the unrestricted estimator is $\hat\theta_{ur,d}=\bar Y_d-\bar Y_0$. Its scaled covariance matrix has entries
\[
\Sigma_{dd}=\frac{n\sigma_d^2}{n_d}+\frac{n\sigma_0^2}{n_0},
\qquad
\Sigma_{d\ell}=\frac{n\sigma_0^2}{n_0},\quad d\neq\ell.
\]
The common control therefore generates positive dependence across all five treatment-effect estimates. We use Euclidean loss, $H=I_5$, so the loss projection is ordinary Euclidean projection.

The baseline grid combines five shape configurations with three sampling designs and sample sizes $n\in\{100,400,1600\}$. The configurations are interior, one binding increment, three binding increments, and mixed local-slack profiles involving three or all five increments. Two additional stress designs have the first constraint binding and the remaining adjacent-difference restrictions eight standardized local units inside the feasible set. One has a small common control group, and the other has a noisy common control. Appendix~\ref{appendix:ordered_treatment_details} gives the full design grid.

Table~\ref{tab:treatment_selector_summary} summarizes the 51 design cells. The first five rows each aggregate nine baseline cells, corresponding to three sample sizes and three baseline sampling designs. The final row aggregates the six stress cells, corresponding to three sample sizes and two stress designs. Each cell uses 1,000 Monte Carlo replications, estimated covariance matrices, and retained inequalities selected by the conservative screen.

In every cell, the sufficient condition holds at the inverse-covariance endpoint and the average finite-box diagnostic is close to one, with maximum average excess about $0.00021$ across cells. Under the same replication-specific numerical rule as above, the selector chooses $s=1$ in all 51,000 replications. Appendix~\ref{appendix:replication_details} explains why the $0.004$ allowance accommodates the observed numerical integration error. The conservative screen retains all truly binding constraints in every replication. In the stress designs, it drops almost all inequalities with large local slack but keeps the binding inequality. Thus the reported calculations support inverse covariance, in contrast to the deliberately adverse Gaussian design.

\begin{table}[!htbp]
\centering
\caption{Boundary-risk selector in the ordered-treatment designs}
\label{tab:treatment_selector_summary}
\footnotesize
\setlength{\tabcolsep}{3pt}
\begin{tabular}{@{}lrrrrrrr@{}}
\toprule
Shape configuration & Cells & $\bar s_{BR}$ & $\overline{|\hat J|}$ & \makecell{Prob. binding\\retained} & $\overline{\hat B(\hat s_{BR})}$ & $L_{BR}/L_0$ & $L_{BR}/L_{ur}$\\
\midrule
Interior & 9 & 1.000 & 4.99--5.00 & -- & 0.940--0.961 & 0.930--0.997 & 0.448--0.920\\
One binding & 9 & 1.000 & 5.00 & 1.000 & 0.940--0.961 & 0.930--1.000 & 0.459--0.902\\
Three binding & 9 & 1.000 & 4.99--5.00 & 1.000 & 0.940--0.961 & 0.955--1.025 & 0.415--0.790\\
Three local mixed & 9 & 1.000 & 4.99--5.00 & 1.000 & 0.940--0.961 & 0.923--0.999 & 0.530--0.886\\
All local mixed & 9 & 1.000 & 4.99--5.00 & 1.000 & 0.940--0.961 & 0.920--0.988 & 0.630--0.729\\
Stress designs & 6 & 1.000 & 1.00--1.02 & 1.000 & 0.578--0.682 & 0.617--0.759 & 0.551--0.684\\
\bottomrule
\end{tabular}
\begin{minipage}{0.92\linewidth}
\vspace{2mm}
{\footnotesize\begin{singlespace}
Notes: Each cell uses 1,000 replications. Entries give ranges of cell averages. $\bar s_{BR}$ is the average selected path value and $\overline{|\hat J|}$ the average retained-set size. \emph{Prob. binding retained} is the share retaining all truly binding inequalities, which is not applicable in the interior row. $L_{BR}$, $L_0$, and $L_{ur}$ are mean Euclidean losses for the selector, loss projection, and unrestricted estimation. The plug-in sufficient condition holds throughout. Selection uses $\hat G_8(s)\leq1.004$ for $s>0$ and the analytical guarantee at $s=0$. Inverse covariance is selected in every replication.
\end{singlespace}}
\end{minipage}
\end{table}

The selected estimator lowers mean Euclidean loss relative to unrestricted estimation in every cell, with loss ratios from $0.415$ to $0.920$. Relative to loss projection, it has lower point-estimated risk in 46 of the 51 cells. The largest excess in the remaining cells is about $2.5\%$. The largest gains occur in the stress designs, where the common-control component dominates covariance and the selector reduces loss by roughly $24$--$38\%$ relative to $s=0$.

Appendix~\ref{appendix:ordered_treatment_details} reports the full cell-level table and a representative path for the $n=400$ balanced-homoskedastic design with standardized local slacks $(0,0.5,1,1.5,2)$.

\FloatBarrier

\subsection{Flexible IV design}
\label{subsec:sim_npiv}

We finally consider a flexible parametric IV design inspired by \citet{chetverikov2017nonparametric}. Consider
\[
Y=g(X)+\varepsilon,
\qquad E[\varepsilon\mid Z]=0,
\]
with $g$ approximated by the basis $(1,x,x^2)$ and instrument basis $(1,z,z^2)$. Complete details of the DGP are reported in Appendix~\ref{appendix:flexible_iv_details}. Because the model is just identified, changing the GMM weighting matrix leaves the unrestricted estimator unchanged while changing the geometry of the constrained projection.

Monotonicity of $g$ on $[0,1]$ implies
\[
\theta_2\geq0,
\qquad
\theta_2+2\theta_3\geq0,
\]
so the constraint matrix is
\[
A=\begin{pmatrix}0&1&0\\0&1&2\end{pmatrix}.
\]
Unlike the previous two simulations, this design does not require a numerical selector. There are only two maintained monotonicity inequalities, so Theorem~\ref{thm:two_const} implies that inverse-covariance projection solves Problem~\eqref{eq:population_choice}: it is boundary minimax and locally no-harm without a finite-box calculation. We use this simulation instead to compare the finite-sample consequences of alternative IV-induced projection geometries.

We compare 2SLS, efficient IV/inverse-covariance projection, identity-GMM-weight projection, loss projection, Euclidean projection, rearrangement, and the unrestricted estimator. The 2SLS entry projects using the geometry induced by the conventional 2SLS weighting matrix, and identity-GMM-weight projection uses the geometry induced by the identity GMM weighting matrix. Euclidean projection imposes the restrictions using $\Omega=I$. It is included only as an informal benchmark because this geometry has no direct interpretation as either efficient IV/inverse-covariance projection or loss projection in this example. Performance is measured by a scaled mean integrated squared error, which corresponds to the quadratic coefficient loss derived in Example~\ref{ex:fnc_iv}. Table~\ref{tab:npiv_main} summarizes the main results.

\begin{table}[!htbp]
\centering
\caption{Flexible IV projection geometries: MISE $\times 1000$}
\label{tab:npiv_main}
\small
\begin{tabular}{lrr}
\toprule
Estimator & $g(x)=x^2$ & $g(x)=0$\\
\midrule
2SLS & 24.86 & 14.46\\
Efficient IV / inverse covariance & 15.69 & 4.09\\
Rearrangement & 32.42 & 49.41\\
Identity GMM & 25.77 & 15.30\\
Loss projection & 26.69 & 17.75\\
Euclidean projection & 757.91 & 1041.56\\
Unrestricted & 47.05 & 49.42\\
\bottomrule
\end{tabular}
\begin{minipage}{0.92\linewidth}
\vspace{2mm}
{\footnotesize\begin{singlespace}
Notes: Entries are mean integrated squared errors multiplied by 1,000 and averaged over 1,000 Monte Carlo replications with $n=1{,}000$. Identity GMM uses the identity-GMM-weight projection geometry, and Euclidean projection is the informal $\Omega=I$ benchmark. Rearrangement is the procedure of \citet{CFG:09}, with its MISE evaluated using 10,000 auxiliary draws per replication.
\end{singlespace}}
\end{minipage}
\end{table}

When $g(x)=x^2$, the derivative binds at $x=0$, so the first monotonicity inequality binds. Efficient IV/inverse-covariance projection has scaled MISE $15.69$, compared with $24.86$ for 2SLS and $47.05$ for the unrestricted estimator. When $g(x)=0$, the derivative is zero throughout, so both endpoint inequalities bind and efficient IV/inverse-covariance projection has scaled MISE $4.09$. Thus the inverse-covariance geometry justified by Theorem~\ref{thm:two_const} also performs strongly in finite samples in this IV design.

Figure~\ref{fig:npiv_quadratic_sel} summarizes the graphical evidence by comparing efficient IV/inverse-covariance projection and 2SLS directly. The difference between efficient IV/inverse-covariance projection and 2SLS is substantial even though the unrestricted estimator is identical. Appendix~\ref{appendix:flexible_iv_details} reports additional implementation details and separate estimator panels for the quadratic and constant designs. Together, these figures illustrate the main point of the paper in an econometric IV setting: the restrictions define the feasible set, but the projection geometry determines how an unrestricted estimate is moved back to that set when it violates the restrictions.

\begin{figure}[!htbp]
\centering
\caption{Comparison of 2SLS and efficient IV/inverse-covariance projection in flexible IV}
\label{fig:npiv_quadratic_sel}
\includegraphics[width=0.92\textwidth]{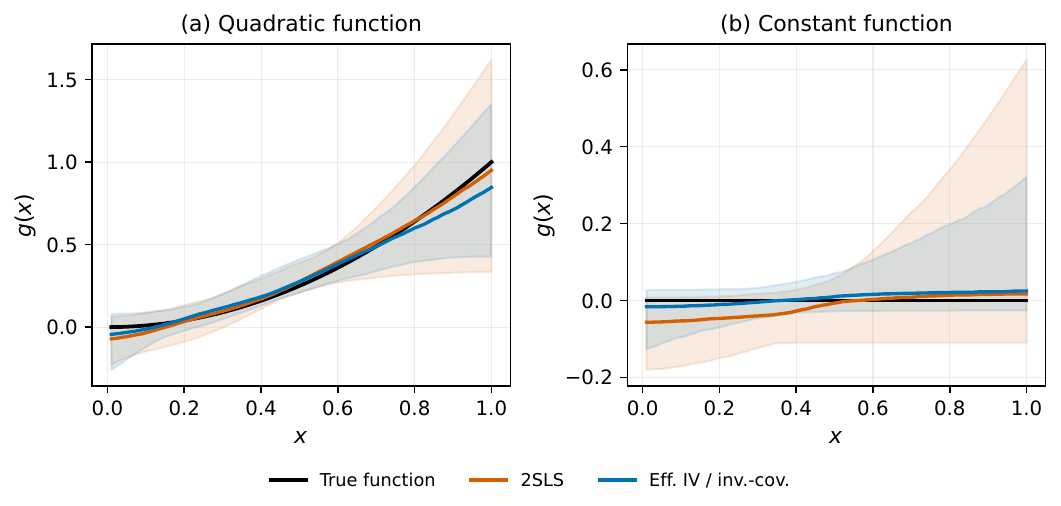}
\begin{minipage}{0.90\linewidth}
\vspace{1mm}
{\footnotesize\begin{singlespace}
Notes: Side-by-side comparison of 2SLS and efficient IV/inverse-covariance projection. Central curves are pointwise medians and bands span the pointwise 5th--95th percentiles across all 1,000 replications. The quadratic function has one truly binding monotonicity constraint and the constant function has two.
\end{singlespace}}
\end{minipage}
\end{figure}

Appendix~\ref{appendix:geometry_illustrations} reports additional two-dimensional illustrations that isolate the geometry behind these simulation results.

\FloatBarrier
\section{Application} \label{sec:application}
We apply the framework to the gasoline-demand application of \citet{blundell2012measuring}. The same application is used by \citet{chetverikov2017nonparametric} to compare unrestricted and shape-constrained IV estimators. Gasoline price is potentially endogenous. We maintain weakly decreasing uncompensated demand as an economically motivated restriction, rather than an unconditional implication of consumer theory. The application is therefore a natural setting for studying projection geometry.

We adapt the implementation to isolate projection geometry, using a just-identified spline IV specification with income interactions, a smaller set of controls, and a nonredundant monotonicity system for the displayed income range. Let
\[
Y=g(X_1,X_2)+\gamma'X_3+\varepsilon,
\]
where $Y$ is log gasoline consumption, $X_1$ is log price, $X_2$ is log household income, and $X_3$ collects additional controls. Price is instrumented using distance to a major oil platform, denoted by $Z$, under the conditional moment restriction $E[\varepsilon\mid Z,X_2,X_3]=0$.

The estimation sample contains 4,812 observations. We approximate the price and instrument components using quadratic spline bases with three internal knots at empirical quartiles. The regressor vector contains the six price spline terms, their interactions with income, and four controls. The instrument vector is formed analogously from the six distance spline terms, their interactions with income, and the same controls. The resulting 16-dimensional IV problem is just identified, so the unrestricted estimator is common across the GMM weighting matrices considered below. Changing the weighting matrix changes only the geometry used to impose monotonicity.

The controls $X_3$ are log household size, log number of drivers, log age of the household head, and the number of workers. We report curves at three log-income levels, $10.657$, $11.043$, and $11.408$, labeled low, medium, and high income. At each income, weakly decreasing demand is imposed through five derivative inequalities at the three internal knots and the two support endpoints, using one-sided derivatives at the endpoints. Since the derivative is piecewise linear in price, these inequalities enforce monotonicity throughout the price support. It is also affine in income, so the low- and high-income inequalities imply those at every income between them. All constrained curves in Figure~\ref{fig:gasoline} therefore impose the same 10 nonredundant restrictions jointly. The plotted curves fix the controls at sample means.

\begin{figure}[!htbp]
   \caption{Gasoline demand for different income groups}
   \label{fig:gasoline}
   \begin{center}
      \includegraphics[width=0.92\textwidth]{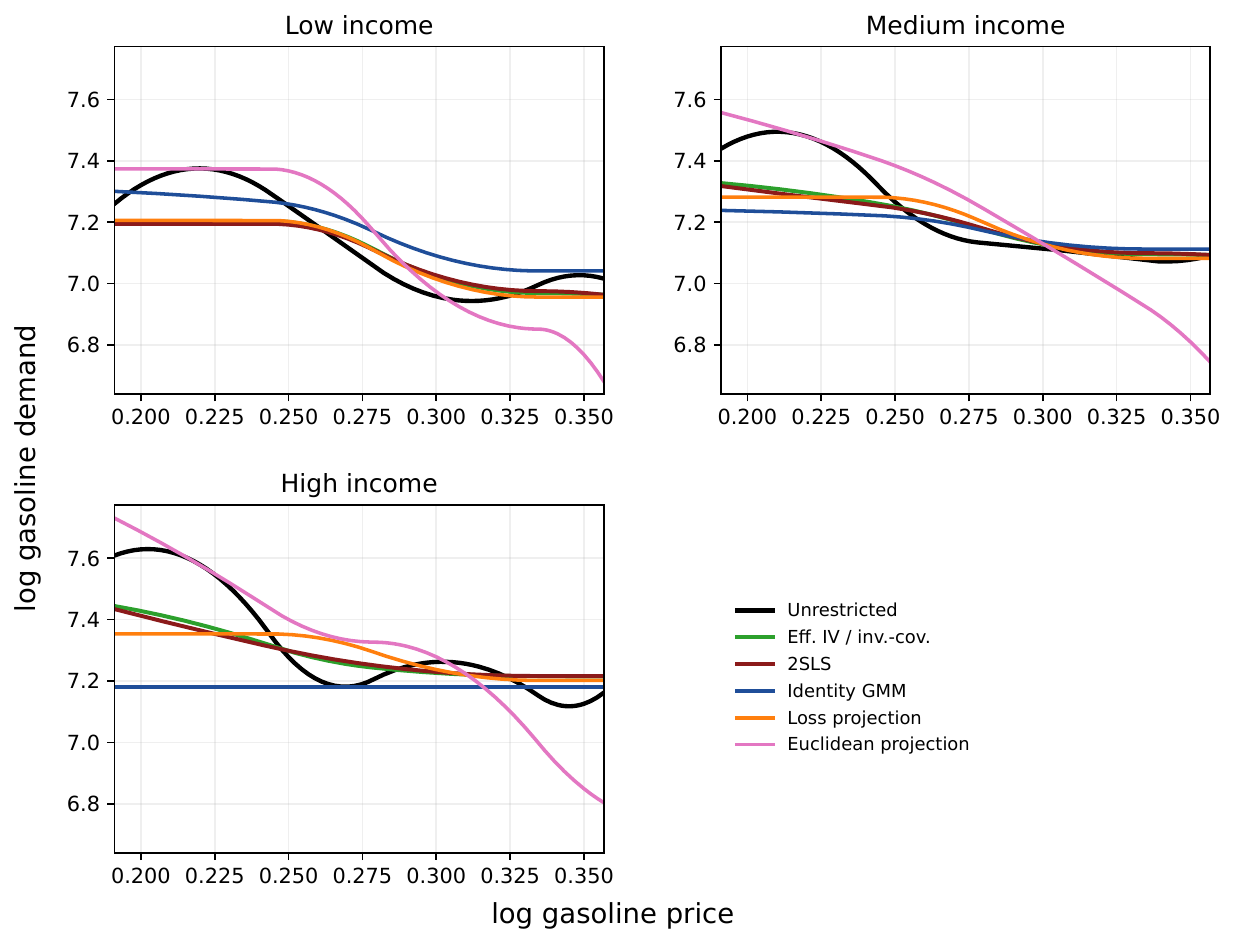}
   \end{center}
   {\footnotesize \begin{singlespace}
   Notes: Estimated gasoline-demand curves for three income groups. The black line is the unrestricted IV estimator. The constrained curves impose the 10 nonredundant low- and high-income monotonicity restrictions using efficient IV/inverse-covariance projection, 2SLS projection, identity-GMM-weight projection, loss projection, and Euclidean projection.
   \end{singlespace}}
\end{figure}

The projection-choice diagnostic uses the same 10 restrictions and a common loss criterion for the three income groups. For log income $y$, let $r_y(x,w)$ be the regressor vector mapping the full coefficient vector into fitted log demand at log price $x$ and controls $w$. Define
\[
H_y=\int r_y(x,w)r_y(x,w)'\,d\nu(x,w),
\]
where $\nu$ is the empirical distribution of log prices and controls in the estimation sample. We use
\[
H_{\mathrm{all}}=\frac{1}{3}\left(H_{\mathrm{low}}+H_{\mathrm{medium}}+H_{\mathrm{high}}\right)
\]
for both loss projection and the diagnostic. Quadratic loss under $H_{\mathrm{all}}$ is average squared prediction error in log demand, giving equal weight to the three income groups. It averages over observed prices and controls rather than fixing controls at their means as in the figure. This evaluates the fitted demand functions rather than the scale-dependent coefficient errors. The empirical $H_{\mathrm{all}}$ is positive definite in the full 16-dimensional parameterization, so loss projection is well defined.

The unrestricted curves have regions in which estimated demand increases with price. The constrained curves are all weakly decreasing, but imposing the same inequalities yields visibly different estimates. Efficient IV/inverse-covariance and 2SLS projection produce similar curves. Identity-GMM-weight projection is flatter, particularly at high income, while Euclidean projection produces a much steeper decline. Loss projection also differs from the IV-based geometries. These differences arise entirely from the projection step applied to the same unrestricted estimator.

We apply the plug-in diagnostic from Section~\ref{sec:selection}, using the heteroskedasticity-robust covariance estimate $\hat\Sigma$. The conservative screen with $\kappa=5$ retains all 10 inequalities. The restriction matrix has full row rank, so all $2^{10}-1=1023$ nonempty binding sets are attainable. At the inverse-covariance endpoint, we estimate risk for each exact-boundary configuration and numerically search finite slacks in $[0,8]^{|L|}$ for every nonempty retained subset $L$. The latter calculation uses paired excess losses relative to unrestricted estimation and includes the all-interior benchmark. Table~\ref{tab:gasoline_all_diagnostic} reports the normalized risks, and Appendix~\ref{appendix:app_additional} provides results by the number of binding inequalities.

\begin{table}[!htbp]
\centering
\caption{Endpoint projection-choice diagnostic in the application}
\label{tab:gasoline_all_diagnostic}
\footnotesize
\begin{tabular}{rrrrr}
\toprule
Retained & Faces & $\hat B_1(1)$ & $\hat B_{\geq2}(1)$ & $\hat G_8(1)$\\
\midrule
10/10 & 1023 & 0.892 & 0.866 & 1.0000\\
\bottomrule
\end{tabular}
\begin{minipage}{0.90\linewidth}
\vspace{2mm}
{\footnotesize\begin{singlespace}
Notes: The endpoint diagnostic uses efficient IV/inverse-covariance projection, the joint low/high-income restriction system, and loss $H_{\mathrm{all}}$. $\hat B_1(1)$ and $\hat B_{\geq2}(1)$ are the largest estimated exact-boundary risks with one and at least two binding inequalities, respectively. $\hat G_8(1)$ searches all 1023 nonempty retained subsets and standardized slacks in $[0,8]^{|L|}$, including the all-interior benchmark. The largest estimated nonempty-subset excess risk is zero at the reported numerical precision. All risk quantities are normalized by $\operatorname{tr}(H_{\mathrm{all}}\hat\Sigma)$. Calculations use 8,192 scrambled Sobol draws per subset.
\end{singlespace}}
\end{minipage}
\end{table}

Both diagnostics support the inverse-covariance endpoint. The largest multi-binding boundary risk is below the largest one-binding risk,
\[
\hat B_{\geq2}(1)=0.865799<0.892252=\hat B_1(1),
\]
so the estimated risks satisfy the sufficient condition in Theorem~\ref{thm:suff_cond}. The subset-maximized diagnostic gives $\hat G_8(1)=1.0000$, without a positive numerical allowance. The endpoint-first rule therefore retains efficient IV/inverse-covariance projection without searching the path. These plug-in results do not establish population no-harm beyond the finite boxes or exclude maxima missed by numerical search.

\section{Conclusion} \label{sec:summary}
This paper studies the choice of projection geometry for estimators subject to linear inequality constraints. Projection-based estimators arise naturally from constrained least squares, IV, GMM, and maximum likelihood, but the geometry used to impose the restrictions is often chosen implicitly. We show that this choice can matter for quadratic risk when the sampling covariance, loss geometry, and projection geometry are not aligned.

The results give a qualified but useful case for inverse-covariance projection. It is pointwise optimal at exact-boundary configurations where only one maintained inequality binds and boundary minimax over exact-boundary configurations with at most two binding inequalities. It is also locally no-harm when at most two inequalities have finite local slack. For larger systems, boundary minimaxity continues to hold whenever the worst boundary risk under inverse covariance is attained at a one-binding configuration. At the same time, examples with several constraints show that inverse-covariance projection is not always boundary minimax and need not always reduce risk relative to unrestricted estimation.

These limitations motivate the practical selector. We choose a projection geometry by minimizing exact-boundary risk subject to a local no-harm condition relative to the unrestricted estimator. When the maintained system has at most two inequalities, inverse covariance solves this population problem analytically. For larger systems, the reported plug-in boundary and finite-box diagnostics often support inverse covariance without further search. This occurs in all ordered-treatment replications and in the gasoline-demand application, while the flexible-IV simulation shows sizable finite-sample gains from the same geometry in a two-constraint design. The deliberately adverse Gaussian design shows that the selector can also move away from inverse covariance. Thus inverse covariance remains a natural prespecified geometry, while the diagnostics and selector provide safeguards when interactions among constraints favor another geometry.

\newpage

\begin{singlespace}
\bibliographystyle{chicago}
\bibliography{bibliography}

@article{chetverikov2017nonparametric,
  title={{Nonparametric Instrumental Variable Estimation Under Monotonicity}},
  author={Chetverikov, Denis and Wilhelm, Daniel},
  journal={Econometrica},
  volume={85},
  number={4},
  pages={1303--1320},
  year={2017},
  publisher={Wiley Online Library}
}

@unpublished{fang2014optimal,
  title={{Optimal Plug-in Estimators of Directionally Differentiable Functionals}},
  author={Fang, Zheng},
  year={2014},
  note = {Working paper}
}

@unpublished{cox2024,
  title={{A Simple and Adaptive Confidence Interval when Nuisance Parameters Satisfy an Inequality}},
  author={Cox, Gregory Fletcher},
  year={2026},
  note = {Accepted at the Journal of the American Statistical Association; accepted author version posted online August 21, 2026; arXiv:2409.09962}
}

@article{armstrong2018,
author = {Armstrong, Timothy B. and Kolesár, Michal},
title = {Optimal Inference in a Class of Regression Models},
journal = {Econometrica},
volume = {86},
number = {2},
pages = {655 -- 683},
year = {2018}
}

@article{donoho1994,
author = {David L. Donoho},
title = {{Statistical Estimation and Optimal Recovery}},
volume = {22},
journal = {The Annals of Statistics},
number = {1},
publisher = {Institute of Mathematical Statistics},
pages = {238 -- 270},
year = {1994}}

@unpublished{freybergerreeves2018,
  title={Inference under shape restrictions},
  author={Freyberger, Joachim and Reeves, Brandon},
  year={2018},
  note = {Working paper, SSRN 3011474},
  doi={10.2139/ssrn.3011474}
}

@article{chetverikov2018econometrics,
  title={{The Econometrics of Shape Restrictions}},
  author={Chetverikov, Denis and Santos, Andres and Shaikh, Azeem M},
  journal={Annual Review of Economics},
  volume={10},
  pages={31--63},
  year={2018},
  publisher={Annual Reviews}
}

@Book{zbMATH03942782,
 Author = {Eaton, Morris L.},
 Title = {Multivariate statistics. {A} vector space approach},
 FSeries = {Wiley Series in Probability and Mathematical Statistics},
 Series = {Wiley Ser. Probab. Math. Stat.},
 Year = {1983},
 Publisher = {John Wiley \& Sons, Hoboken, NJ},
 Language = {English},
 zbMATH = {3942782},
 Zbl = {0587.62097}
}

@article{blundell2012measuring,
  title={Measuring the price responsiveness of gasoline demand: Economic shape restrictions and nonparametric demand estimation},
  author={Blundell, Richard and Horowitz, Joel L and Parey, Matthias},
  journal={Quantitative Economics},
  volume={3},
  number={1},
  pages={29--51},
  year={2012},
  publisher={Wiley Online Library}
}

@article{chernozhukov2023constrained,
  title={Constrained conditional moment restriction models},
  author={Chernozhukov, Victor and Newey, Whitney K and Santos, Andres},
  journal={Econometrica},
  volume={91},
  number={2},
  pages={709--736},
  year={2023},
  publisher={Wiley Online Library}
}

@unpublished{astfalck2024posteriorprojectioninferenceconstrained,
      title={Posterior Projection for Inference in Constrained Spaces}, 
      author={Lachlan Astfalck and Deborshee Sen and Sayan Patra and Edward Cripps and David Dunson},
      year={2026},
      eprint={1812.05741},
      archivePrefix={arXiv},
      primaryClass={stat.ME},
      url={https://arxiv.org/abs/1812.05741},
      doi={10.48550/arXiv.1812.05741},
      note={arXiv:1812.05741v6; submitted to SIAM/ASA Journal on Uncertainty Quantification}
}

@article{johnson2018shape,
  title={Shape constraints in economics and operations research},
  author={Johnson, Andrew L and Jiang, Daniel R},
  journal={Statistical Science},
  volume={33},
  number={4},
  pages={527--546},
  year={2018},
  publisher={JSTOR}
}

@article{chernozhukov2010quantile,
  title={Quantile and probability curves without crossing},
  author={Chernozhukov, Victor and Fern{\'a}ndez-Val, Iv{\'a}n and Galichon, Alfred},
  journal={Econometrica},
  volume={78},
  number={3},
  pages={1093--1125},
  year={2010},
  publisher={Wiley Online Library}
}

@article{chen2021shape,
  title={Shape-enforcing operators for generic point and interval estimators of functions},
  author={Chen, Xi and Chernozhukov, Victor and Fernandez-Val, Ivan and Kostyshak, Scott and Luo, Ye},
  journal={Journal of Machine Learning Research},
  volume={22},
  number={220},
  pages={1--42},
  year={2021}
}

@article{ayer1955empirical,
  title={An empirical distribution function for sampling with incomplete information},
  author={Ayer, Miriam and Brunk, H Daniel and Ewing, George M and Reid, William T and Silverman, Edward},
  journal={The Annals of Mathematical Statistics},
  volume={26},
  number={4},
  pages={641--647},
  year={1955},
  publisher={JSTOR}
}

@article{Andrews:99,
 author = {Donald W. K. Andrews},
 journal = {Econometrica},
 number = {6},
 pages = {1341--1383},
 title = {Estimation When a Parameter is on a Boundary},
 volume = {67},
 year = {1999}}

@article{Brunk:55,
title = {Maximum Likelihood Estimates of Monotone Parameters},
author = {Brunk, H. D.},
year = {1955},
journal = {The Annals of Mathematical Statistics},
volume = {26},
number = {4},
pages = {607--616}}

@article{CLX:13,
author = {Cai, T. T. and Low, M. G. and Xia, Y.},
journal = "The Annals of Statistics",
number = "2",
pages = "722--750",
title = "Adaptive confidence intervals for regression functions under shape constraints",
volume = "41",
year = "2013"
}

@article{CGS:15,
 author = {Sabyasachi Chatterjee and Adityanand Guntuboyina and Bodhisattva Sen},
 number = {4},
 pages = {1774--1800},
title = {On risk bounds in isotonic and other shape restricted regression problems},
 volume = {43},
 year = {2015},
journal = {The Annals of Statistics}
}

@article{CFG:09,
 author = {V. Chernozhukov and I. Fern\'andez-Val and A. Galichon},
 journal = {Biometrika},
 number = {3},
 pages = {559-575},
 title = {Improving point and interval estimators of monotone functions by rearrangement},
 volume = {96},
 year = {2009}}

@article{Dumbgen:03,
author = {D\"umbgen, Lutz},
journal = "Bernoulli",
number = "3",
pages = "423--449",
title = "Optimal confidence bands for shape-restricted curves",
volume = "9",
year = "2003"
}

@article{FH:15,
title = "Identification and shape restrictions in nonparametric instrumental variables estimation ",
journal = "Journal of Econometrics ",
volume = "189",
number = "1",
pages = "41--53",
year = "2015",
author = "Joachim Freyberger and Joel L. Horowitz"
}

@article{Geyer:94,
author = "Geyer, Charles J.",
journal = "The Annals of Statistics",
number = "4",
pages = "1993--2010",
title = "On the Asymptotics of Constrained $M$-Estimation",
volume = "22",
year = "1994"
}

@article{Hildreth:54,
 author = {Clifford Hildreth},
 journal = {Journal of the American Statistical Association},
 number = {267},
 pages = {598--619},
 title = {Point Estimates of Ordinates of Concave Functions},
 volume = {49},
 year = {1954}}

@article{HL:17,
title = "Nonparametric estimation and inference under shape restrictions",
journal = "Journal of Econometrics",
volume = "201",
number = "1",
pages = "108 -- 126",
year = "2017",
author = "Joel L. Horowitz and Sokbae Lee"
}

@article{Wright:81,
author = "Wright, F. T.",
journal = "The Annals of Statistics",
number = "2",
pages = "443--448",
title = "The Asymptotic Behavior of Monotone Regression Estimates",
volume = "9",
year = "1981",
}

@article{Zhang:02,
author = "Zhang, Cun-Hui",
journal = "The Annals of Statistics",
number = "2",
pages = "528--555",
title = "Risk bounds in isotonic regression",
volume = "30",
year = "2002"
}

@article{Chatterjee2014,
  author = {Chatterjee, Sourav},
  title = {A New Perspective on Least Squares under Convex Constraint},
  journal = {The Annals of Statistics},
  volume = {42},
  number = {6},
  pages = {2340--2381},
  year = {2014}
}

@article{Bellec2018,
  author = {Bellec, Pierre C.},
  title = {Sharp Oracle Inequalities for Least Squares Estimators in Shape Restricted Regression},
  journal = {The Annals of Statistics},
  volume = {46},
  number = {2},
  pages = {745--780},
  year = {2018}
}

@article{GuntuboyinaSen2018,
  author = {Guntuboyina, Adityanand and Sen, Bodhisattva},
  title = {Nonparametric Shape-Restricted Regression},
  journal = {Statistical Science},
  volume = {33},
  number = {4},
  pages = {568--594},
  year = {2018}
}

@article{OymakHassibi2016,
  author = {Oymak, Samet and Hassibi, Babak},
  title = {Sharp {MSE} Bounds for Proximal Denoising},
  journal = {Foundations of Computational Mathematics},
  volume = {16},
  number = {4},
  pages = {965--1029},
  year = {2016}
}

@article{AmelunxenEtAl2014,
  author = {Amelunxen, Dennis and Lotz, Martin and McCoy, Michael B. and Tropp, Joel A.},
  title = {Living on the Edge: Phase Transitions in Convex Programs with Random Data},
  journal = {Information and Inference},
  volume = {3},
  number = {3},
  pages = {224--294},
  year = {2014}
}

@book{BarlowEtAl1972,
  author = {Barlow, Richard E. and Bartholomew, David J. and Bremner, James M. and Brunk, H. D.},
  title = {Statistical Inference under Order Restrictions: The Theory and Application of Isotonic Regression},
  publisher = {Wiley},
  address = {New York},
  year = {1972}
}

@book{RobertsonEtAl1988,
  author = {Robertson, Tim and Wright, F. T. and Dykstra, Richard L.},
  title = {Order Restricted Statistical Inference},
  publisher = {Wiley},
  address = {New York},
  year = {1988}
}

@article{OonoShinozaki2005,
  author = {Oono, Youhei and Shinozaki, Nobuo},
  title = {Estimation of Two Order Restricted Normal Means with Unknown and Possibly Unequal Variances},
  journal = {Journal of Statistical Planning and Inference},
  volume = {131},
  number = {2},
  pages = {349--363},
  year = {2005}
}

@article{Lee1988,
  author = {Lee, Chu-In Charles},
  title = {Quadratic Loss of Order Restricted Estimators for Treatment Means with a Control},
  journal = {The Annals of Statistics},
  volume = {16},
  number = {2},
  pages = {751--758},
  year = {1988},
  doi = {10.1214/aos/1176350833}
}

@article{GargMisra2025,
  author = {Garg, Naresh and Misra, Neeraj},
  title = {Some Unified Results on Isotonic Regression Estimators of Order Restricted Parameters of a General Bivariate Location/Scale Model},
  journal = {Metrika},
  volume = {88},
  number = {8},
  pages = {1115--1149},
  year = {2025},
  doi = {10.1007/s00184-024-00978-w}
}

@article{CaiLow2004Minimax,
  author = {Cai, T. Tony and Low, Mark G.},
  title = {Minimax Estimation of Linear Functionals over Nonconvex Parameter Spaces},
  journal = {The Annals of Statistics},
  volume = {32},
  number = {2},
  pages = {552--576},
  year = {2004}
}

@article{CaiLow2004Adaptation,
  author = {Cai, T. Tony and Low, Mark G.},
  title = {An Adaptation Theory for Nonparametric Confidence Intervals},
  journal = {The Annals of Statistics},
  volume = {32},
  number = {5},
  pages = {1805--1840},
  year = {2004}
}

@article{Armstrong2015,
  author = {Armstrong, Timothy B.},
  title = {Adaptive Testing on a Regression Function at a Point},
  journal = {The Annals of Statistics},
  volume = {43},
  number = {5},
  pages = {2086--2101},
  year = {2015}
}

@unpublished{KwonKwon2020,
  author = {Kwon, Koohyun and Kwon, Soonwoo},
  title = {Inference in Regression Discontinuity Designs under Monotonicity},
  note = {arXiv:2011.14216},
  year = {2020}
}

@article{FangSantos2019,
  author = {Fang, Zheng and Santos, Andres},
  title = {Inference on Directionally Differentiable Functions},
  journal = {The Review of Economic Studies},
  volume = {86},
  number = {1},
  pages = {377--412},
  year = {2019}
}

@book{SilvapulleSen2005,
  author = {Silvapulle, Mervyn J. and Sen, Pranab K.},
  title = {Constrained Statistical Inference: Inequality, Order, and Shape Restrictions},
  publisher = {Wiley},
  address = {Hoboken, NJ},
  year = {2005}
}

@book{vanEeden2006,
  author = {van Eeden, Constance},
  title = {Restricted Parameter Space Estimation Problems: Admissibility and Minimaxity Properties},
  publisher = {Springer},
  address = {New York},
  year = {2006}
}

@article{MarchandStrawderman2004,
  author = {Marchand, {\'E}ric and Strawderman, William E.},
  title = {Estimation in Restricted Parameter Spaces: A Review},
  journal = {Institute of Mathematical Statistics Lecture Notes -- Monograph Series},
  volume = {45},
  pages = {1--24},
  year = {2004}
}

@article{TsukumaKubokawa2008,
  author = {Tsukuma, Hisayuki and Kubokawa, Tatsuya},
  title = {Stein's Phenomenon in Estimation of Means Restricted to a Polyhedral Convex Cone},
  journal = {Journal of Multivariate Analysis},
  volume = {99},
  number = {1},
  pages = {141--164},
  year = {2008}
}

@article{RuedaSalvador1995,
  author = {Rueda, Cristina and Salvador, Bonifacio},
  title = {Reduction of Risk Using Restricted Estimators},
  journal = {Communications in Statistics -- Theory and Methods},
  volume = {24},
  number = {4},
  pages = {1011--1022},
  year = {1995},
  doi = {10.1080/03610929508831536}
}

@article{ShinozakiChang1999,
  author = {Shinozaki, Nobuo and Chang, Yuan-Tsung},
  title = {A Comparison of Maximum Likelihood and Best Unbiased Estimators in the Estimation of Linear Combinations of Positive Normal Means},
  journal = {Statistics \& Decisions},
  volume = {17},
  number = {2},
  pages = {125--136},
  year = {1999},
  doi = {10.1524/strm.1999.17.2.125}
}

@book{Theil1971,
  author = {Theil, Henri},
  title = {Principles of Econometrics},
  publisher = {Wiley},
  address = {New York},
  year = {1971}
}
\end{singlespace}

\newpage

\appendix

\section{Auxiliary results}
\label{appendix:proj_details}

This section collects auxiliary results used in the proofs. We first record two normalizations used throughout the main text. Only the symmetric part of a quadratic form matters because, for any square matrix $M$,
\[
x'Mx=x'\left(\frac{M+M'}{2}\right)x.
\]
We therefore restrict attention to symmetric positive-definite projection metrics.

Second, general quadratic loss can be transformed to Euclidean loss. Let $H\succ0$ and set $\tilde u=H^{1/2}u$. Then
\begin{align*}
R_H(\Omega,c)
&=E\left[\left\|H^{1/2}\mathcal V_r(\Omega,c)\right\|^2\right]\\
&=E\left[
\left\|
\argmin_{\tilde u:\,AH^{-1/2}\tilde u\geq c}
\|\tilde u-H^{1/2}\mathcal V_{ur}\|_{H^{-1/2}\Omega H^{-1/2}}^2
\right\|^2
\right].
\end{align*}
Thus the transformed problem has Euclidean loss, covariance
\[
\tilde\Sigma=H^{1/2}\Sigma H^{1/2},
\]
constraint matrix $\tilde A=AH^{-1/2}$, and projection metric
\[
\tilde\Omega=H^{-1/2}\Omega H^{-1/2}.
\]
In particular, $\tilde\Omega=\tilde\Sigma^{-1}$ if and only if $\Omega=\Sigma^{-1}$. Hence results proved after normalizing $H=I$ translate directly back to the original coordinates.

The first lemma gives a closed-form expression for equality projections.

\begin{lemma}
    \label{lem:general_proj}
    Let $A \in \R^{d \times k}$ be a matrix with $\operatorname{rank}(A) = d \leq k$, let $b \in \R^d$, let $\Omega \in \R^{k \times k} $ be a symmetric, positive definite matrix, and let $x \in \mathbb{R}^k$. Then 
   \begin{align*}
\argmin_{p \in  \R^k: A p = b  } \lVert p - x\rVert_{\Omega}^2 = x - \Omega^{-1} A' (A \Omega^{-1} A' )^{-1}(A x - b).
\end{align*}
\end{lemma}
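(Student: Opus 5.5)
The proof is a Lagrangian computation for a strictly convex quadratic program with affine equality constraints. First I verify that the objective $p\mapsto (p-x)'\Omega(p-x)$ is strictly convex, because $\Omega\succ0$. Next I note that the feasible set $\{p:Ap=b\}$ is a nonempty affine subspace, since $\operatorname{rank}(A)=d$ makes $A$ surjective onto $\R^d$. Together these show that a minimizer exists and is unique. It is characterized by the first-order (KKT) conditions, which are necessary and sufficient here because the constraints are affine and the objective is convex.

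Next I form the Lagrangian
\[
\mathcal L(p,\lambda)=(p-x)'\Omega(p-x)-2\lambda'(Ap-b).
\]
Stationarity in $p$ gives $\Omega(p-x)=A'\lambda$, so $p=x+\Omega^{-1}A'\lambda$. Substituting into $Ap=b$ gives
\[
A\Omega^{-1}A'\lambda=b-Ax.
\]
The only point requiring care is that $A\Omega^{-1}A'$ is invertible. For $v\neq0$ we have $v'A\Omega^{-1}A'v=\|A'v\|^2_{\Omega^{-1}}>0$, because $A'$ has full column rank $d$ and $\Omega^{-1}\succ0$. Hence $A\Omega^{-1}A'$ is positive definite, so $\lambda=-(A\Omega^{-1}A')^{-1}(Ax-b)$. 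Plugging this back in yields the stated formula.

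As an alternative, or as a check that does not rely on KKT theory, I would verify optimality directly. Let $p^*$ denote the candidate. It satisfies $Ap^*=b$ by direct multiplication. For any feasible $p$, write $p=p^*+h$ with $Ah=0$. Then
\[
\|p-x\|_\Omega^2=\|p^*-x\|_\Omega^2+2h'\Omega(p^*-x)+\|h\|_\Omega^2 .
\]
The cross term is $2h'\Omega(p^*-x)=-2h'A'(A\Omega^{-1}A')^{-1}(Ax-b)=0$, since $Ah=0$. So $\|p-x\|_\Omega^2=\|p^*-x\|_\Omega^2+\|h\|_\Omega^2$, which is strictly larger unless $h=0$. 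This gives uniqueness at the same time.

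There is no real obstacle here. The only substantive step is the invertibility of $A\Omega^{-1}A'$, which follows from the rank assumption.
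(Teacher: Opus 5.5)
Your proposal is correct and follows essentially the same Lagrangian route as the paper: write the first-order conditions, premultiply the stationarity condition by $A\Omega^{-1}$ to solve for the multiplier, and substitute back. Your explicit checks that $A\Omega^{-1}A'$ is positive definite and that the stationary point is the unique minimizer (via $\|p-x\|_\Omega^2=\|p^*-x\|_\Omega^2+\|h\|_\Omega^2$ for feasible $p=p^*+h$) supply details that the paper's proof leaves implicit.
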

\begin{proof}
    We incorporate the equality restriction $Ap = b$ by forming the Lagrangian
\[
    (p-x)'\Omega(p-x)+\lambda'(Ap-b),
\]
    with Lagrange multiplier $\lambda=(\lambda_1,\ldots,\lambda_d)'$.
    The first-order conditions are
\[
  2 \Omega( p - x) + A' \lambda  = 0  \qquad 
 \text{and}  \qquad  Ap - b = 0.
\]
Premultiplying the first condition by $A\Omega^{-1}$ and using $Ap=b$ gives
\[
    2(b-Ax)+A\Omega^{-1}A'\lambda=0,
\]
so
\[
    \lambda=2(A\Omega^{-1}A')^{-1}(Ax-b).
\]
Substituting this expression into the first-order condition yields
\[
p=x-\Omega^{-1} A' (A \Omega^{-1} A')^{-1}(A x-b).
\]
\end{proof}

The next lemma characterizes a conditional normal distribution.

\begin{lemma}
	 \label{lem:cond_normal}
	Let $Z \in \R^k$ satisfy $Z \sim \mathcal{N}(\mu, \Gamma)$, where $\Gamma$ has full rank. Let $A \in \R^{d \times k}$ with $\operatorname{rank}(A) = d \leq k$, and let $h \in \R^d$. Then 
    \[
   Z \mid A   Z  = h  \sim \mathcal{N}(\mu +  \Gamma A' (A \Gamma A')^{-1}(h-A\mu), \Gamma -  \Gamma A' (A \Gamma A')^{-1} A \Gamma)
    \] 
    and
    
\[
    (Z-\mu)-\Gamma A'(A\Gamma A')^{-1}A(Z-\mu)\mid AZ=h
    \sim
    \mathcal N(0,\Gamma-\Gamma A'(A\Gamma A')^{-1}A\Gamma).
    \]
\end{lemma}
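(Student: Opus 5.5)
The plan is to reduce to a centered, full-rank Gaussian vector and work with the joint Gaussian law of $(Z,AZ)$. Write $W=AZ$. Since $\Gamma\succ0$ and $\operatorname{rank}(A)=d$, the covariance $A\Gamma A'$ is positive definite, so $W\sim\mathcal N(A\mu,A\Gamma A')$ is nondegenerate and conditioning on $W=h$ is well defined through the joint density. First I record that $(Z,W)$ is jointly Gaussian, being a linear image of $Z$, with $\operatorname{Cov}(Z,W)=\Gamma A'$.

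Next I define the residual
\[
E=(Z-\mu)-\Gamma A'(A\Gamma A')^{-1}A(Z-\mu).
\]
It is a linear function of $Z$, so $(E,W)$ is jointly Gaussian. A direct computation gives
\[
\operatorname{Cov}(E,W)=\Gamma A'-\Gamma A'(A\Gamma A')^{-1}A\Gamma A'=0 .
\]
For jointly Gaussian vectors, zero covariance implies independence, so $E$ is independent of $W$. Its unconditional law is $\mathcal N(0,\Gamma-\Gamma A'(A\Gamma A')^{-1}A\Gamma)$, obtained by expanding $\operatorname{Var}(E)=(I-P)\Gamma(I-P)'$ with $P=\Gamma A'(A\Gamma A')^{-1}A$ and simplifying. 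By independence, this is also its conditional law given $W=h$, which proves the second display.

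The first display then follows from the decomposition
\[
Z=\mu+\Gamma A'(A\Gamma A')^{-1}(W-A\mu)+E .
\]
Conditional on $W=h$, the middle term is the constant $\Gamma A'(A\Gamma A')^{-1}(h-A\mu)$ and $E$ keeps its unconditional law. This yields exactly the stated conditional mean and covariance.

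The only delicate point is the meaning of conditioning on a null event. I will handle it by stating that the conditional distribution is the regular version determined by the joint density of $(E,W)$. Independence gives a product density, and the resulting version is continuous in $h$, so the formula holds for every $h$. The rest is routine matrix algebra.
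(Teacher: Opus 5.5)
Your argument is correct but runs in the opposite direction from the paper's. The paper treats $(Z,AZ)$ as a degenerate joint normal and cites the textbook conditioning formula to get the first display. It then obtains the second display by noting that on $\{AZ=h\}$ the residual equals $Z$ minus its conditional mean.

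You instead show directly that the residual $E$ is uncorrelated with $W=AZ$, hence independent of it. You read off the second display from the unconditional law of $E$, and recover the first from $Z=\mu+\Gamma A'(A\Gamma A')^{-1}(W-A\mu)+E$. This is essentially the standard proof of the formula the paper cites, so your version is self-contained.

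It also makes explicit that the residual's conditional law does not depend on $h$. That fact is what is used later in Corollary~\ref{cor:cond_dist_proj} and in the proof of Theorem~\ref{thm:two_const}, which even invokes the same decomposition $\mathcal V_{ur}=d^*Z+\varepsilon$ with $\varepsilon$ independent of $Z$. The paper's version is shorter but outsources the work to the reference.

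One small correction to your last paragraph: $(E,W)$ has no joint Lebesgue density, because $\operatorname{Var}(E)$ has rank $k-d$, and $E\equiv0$ when $d=k$. Phrase it instead as follows. Independence makes the constant kernel $h\mapsto\mathcal N(0,\Gamma-\Gamma A'(A\Gamma A')^{-1}A\Gamma)$ a regular conditional distribution of $E$ given $W$, and this continuous version is the one the lemma refers to.
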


\begin{proof}
	Consider the (degenerate) multivariate normal distribution 
	\[
		\left( \begin{array}{c} Z\\ A Z \end{array} \right) \sim \mathcal{N}\left( \left( \begin{array}{c}  \mu \\ A\mu \end{array} \right),\left( \begin{matrix} \Gamma& \Gamma A' \\ A\Gamma & A \Gamma A' \end{matrix} \right) \right).
	\]
	The standard conditional normal formula \citep{zbMATH03942782} gives the stated conditional mean and covariance matrix. The second part follows because $AZ$ is constant conditional on $AZ=h$.
\end{proof}

The following corollary follows immediately from Lemmas~\ref{lem:general_proj} and \ref{lem:cond_normal}.

\begin{corollary}
\label{cor:cond_dist_proj}
   Let $Z \in \R^k$ satisfy $Z \sim \mathcal{N}(0, \Gamma)$, where $\Gamma$ has full rank. Let $A \in \R^{d \times k}$ with $\operatorname{rank}(A) = d \leq k$, and let $h \in \R^d$. Then
   
\[
   \argmin_{p\in\R^k: Ap=0}\lVert p-Z\rVert_{\Gamma^{-1}}^2
   \mid AZ=h
   \sim
   \mathcal N(0,\Gamma-\Gamma A'(A\Gamma A')^{-1}A\Gamma).
   \]
\end{corollary}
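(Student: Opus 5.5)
The plan is to combine Lemma~\ref{lem:general_proj} and Lemma~\ref{lem:cond_normal}, both applied with the same $\Gamma$.

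\emph{Step 1: closed form.} Apply Lemma~\ref{lem:general_proj} with $\Omega=\Gamma^{-1}$, $b=0$ and $x=Z$. Since $\Omega^{-1}=\Gamma$, the projection becomes
\[
P(Z)=Z-\Gamma A'(A\Gamma A')^{-1}AZ .
\]
This is a fixed linear map of $Z$. The matrix $A\Gamma A'$ is invertible because $A$ has full row rank $d$ and $\Gamma\succ0$.

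\emph{Step 2: conditional law.} Invoke the second display of Lemma~\ref{lem:cond_normal} with $\mu=0$. With $\mu=0$, the quantity in that display,
\[
(Z-\mu)-\Gamma A'(A\Gamma A')^{-1}A(Z-\mu),
\]
is exactly $P(Z)$. Conditional on $AZ=h$, it is distributed as $\mathcal N\bigl(0,\Gamma-\Gamma A'(A\Gamma A')^{-1}A\Gamma\bigr)$, which is the claim.

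As a cross-check, the first part of Lemma~\ref{lem:cond_normal} gives the conditional mean of $Z$ as $\Gamma A'(A\Gamma A')^{-1}h$. On the event $AZ=h$, the subtracted term in $P(Z)$ is the constant $\Gamma A'(A\Gamma A')^{-1}h$. Subtracting it shifts the conditional mean to $0$ and leaves the conditional covariance unchanged.

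No step is a real obstacle. The only points needing care are that the conditional law is independent of $h$, and that $P(Z)$ is a measurable function of $Z$, so conditioning on the degenerate event $AZ=h$ is interpreted through the regular conditional distribution of the joint Gaussian vector, as in Lemma~\ref{lem:cond_normal}.
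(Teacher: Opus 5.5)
Your proposal is correct and follows the paper's route. The paper says the corollary follows immediately from Lemma~\ref{lem:general_proj} (with $\Omega=\Gamma^{-1}$, $b=0$, giving $Z-\Gamma A'(A\Gamma A')^{-1}AZ$) and the second display of Lemma~\ref{lem:cond_normal} with $\mu=0$, which is exactly your two-step argument.
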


The next corollary shows that the conditional covariance matrix of a normal random vector and its projection coincide after conditioning on the scalar constraint score.  

  \begin{corollary} 
    \label{cor:var_preserve}
    Let $Z \in \R^k$ satisfy $Z\sim \mathcal{N}(\mu, \Gamma)$, where $\Gamma$ has full rank. Let $a \in \mathbb{R}^{k \times 1} \backslash \{0\}$, let $b \in \R$, let $h \in \R$, and let $\Omega \in \R^{k \times k} $ be a symmetric, positive definite matrix. Define
    \[
    P_\Omega(Z)
    =
    \argmin_{p \in  \R^k: a' p \geq b  } \lVert p - Z\rVert_{\Omega}^2 .
    \]
    Then
    
\[
    Z-E[Z\mid a'Z=h]\mid a'Z=h
    \sim
    \mathcal N(0,\Gamma-\Gamma a(a'\Gamma a)^{-1}a'\Gamma)
    \]
    and
    
\[
    P_\Omega(Z)-E[P_\Omega(Z)\mid a'Z=h]\mid a'Z=h
    \sim
    \mathcal N(0,\Gamma-\Gamma a(a'\Gamma a)^{-1}a'\Gamma).
    \]
 \end{corollary}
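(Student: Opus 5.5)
The plan is to reduce the statement to Lemma~\ref{lem:cond_normal} by showing that, once we condition on the scalar index $a'Z=h$, the halfspace projection $P_\Omega(Z)$ differs from $Z$ only by a deterministic shift.

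The first claim follows directly from Lemma~\ref{lem:cond_normal} with $d=1$ and $A=a'$, which has rank one because $a\neq0$. That lemma gives the conditional law
\[
Z\mid a'Z=h\sim\mathcal N\bigl(\mu+\Gamma a(a'\Gamma a)^{-1}(h-a'\mu),\,\Gamma-\Gamma a(a'\Gamma a)^{-1}a'\Gamma\bigr).
\]
Subtracting the conditional mean $E[Z\mid a'Z=h]$, which is a constant given $h$, yields a centered normal vector with the stated covariance.

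For the second claim, I will first write the halfspace projection in closed form. If $a'Z\geq b$, then $Z$ is feasible and $P_\Omega(Z)=Z$. If $a'Z<b$, convexity and the first-order (KKT) conditions show that the constraint is active at the minimizer. Lemma~\ref{lem:general_proj} with $A=a'$ then gives
\[
P_\Omega(Z)=Z-\frac{\Omega^{-1}a}{a'\Omega^{-1}a}(a'Z-b).
\]
Combining the two cases,
\[
P_\Omega(Z)=Z+g(a'Z),\qquad g(x)=\frac{\Omega^{-1}a}{a'\Omega^{-1}a}\,(b-x)_+ ,
\]
where $g$ is a deterministic function of the scalar $a'Z$ alone.

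Conditional on $a'Z=h$, the shift $g(h)$ is therefore a constant. This gives $E[P_\Omega(Z)\mid a'Z=h]=E[Z\mid a'Z=h]+g(h)$, and hence
\[
P_\Omega(Z)-E[P_\Omega(Z)\mid a'Z=h]=Z-E[Z\mid a'Z=h]
\]
conditionally on $a'Z=h$. The second claim then follows from the first.

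The only step requiring any care is justifying that the constraint binds whenever $a'Z<b$, so that Lemma~\ref{lem:general_proj} applies. This is standard: the strictly convex objective has a unique minimizer over the halfspace, and if that minimizer were interior it would equal the unconstrained minimizer $Z$, which is infeasible. No other obstacle is expected. In particular, conditioning on a measure-zero event is handled, as in Lemma~\ref{lem:cond_normal}, through the regular conditional normal distribution.
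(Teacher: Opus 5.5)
Your proposal is correct and follows essentially the same route as the paper. You apply Lemma~\ref{lem:cond_normal} with $A=a'$ for the first claim, then use Lemma~\ref{lem:general_proj} to write the halfspace projection as $Z$ plus a correction that depends only on $a'Z$, which is nonrandom given $a'Z=h$ and so disappears after centering; your argument that the constraint must bind when $a'Z<b$ supplies a step the paper leaves implicit.
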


\begin{proof}
The first part follows immediately from Lemma \ref{lem:cond_normal}. Moreover, if $h\geq b$, the second part holds since $Z$ is equal to its projection conditional on $a'Z = h$.

Lemma \ref{lem:general_proj} implies that for $Z$ with $a'Z < b$,

\[
\argmin_{p\in\R^k: a'p\geq b}\lVert p-Z\rVert_{\Omega}^2
=
Z-\Omega^{-1}a(a'\Omega^{-1}a)^{-1}(a'Z-b).
\]
Conditional on $a'Z = h$, the correction term is nonrandom. Centering therefore removes this term, and the conditional distribution of the centered projection is the same as the conditional distribution of the centered normal vector in the first part.
\end{proof}

The next lemma records a simple redundancy property that is useful in the
two-binding proof. For vector inequalities, all comparisons below are
understood componentwise.

\begin{lemma}
\label{lem:subset_constraints}
Let $c\in\{0,-\infty\}^d$ and define
\[
F_c=\{p\in\mathbb R^k:Ap\geq c\},
\qquad
F_0=\{p\in\mathbb R^k:Ap\geq0\}.
\]
Let $\bar A$ collect the rows of $A$ corresponding to coordinates for which
$c_j=-\infty$. For a symmetric positive-definite $\Omega$, let $p_c$ and
$p_0$ denote the $\Omega$-metric projections of $x$ onto $F_c$ and $F_0$,
respectively. If either $\bar A p_c>0$ or $\bar A p_0>0$, then $p_c=p_0$.
Consequently, $\bar A p_c>0$ if and only if $\bar A p_0>0$.
\end{lemma}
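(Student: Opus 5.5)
The plan is to use convexity of the $\Omega$-projection problem together with the fact that a projection onto a polyhedron is also the projection onto any larger convex set containing it, provided the point lies in the smaller set. Note first that $F_c\subseteq F_0$ fails in general. But $F_0\subseteq F_c$ does hold: rows with $c_j=0$ impose the same constraint in both sets, and rows with $c_j=-\infty$ impose $a_j'p\geq 0$ in $F_0$ and no constraint in $F_c$. So $F_c$ is exactly the set defined by the rows with $c_j=0$ alone, and $F_0=F_c\cap\{p:\bar A p\geq 0\}$.

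First case: $\bar A p_c>0$. Then $p_c\in F_c$ and $\bar A p_c\geq0$, so $p_c\in F_0$. Since $p_c$ minimizes $\|p-x\|_\Omega^2$ over the larger set $F_c\supseteq F_0$ and belongs to $F_0$, it also minimizes over $F_0$. Strict convexity of $\|\cdot-x\|_\Omega^2$ for $\Omega\succ0$ gives uniqueness of the projection onto the closed convex set $F_0$. Hence $p_0=p_c$.

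Second case: $\bar A p_0>0$. This is the step needing an actual argument, since now we only know $p_0$ is optimal over the smaller set. I would use the local nature of convex optimality. The constraints $\bar A p\geq0$ are strictly slack at $p_0$, so there is a neighborhood $N$ of $p_0$ with $N\cap F_c=N\cap F_0$. Thus $p_0$ is a local minimizer of the strictly convex objective over the convex set $F_c$. A local minimizer of a convex function over a convex set is global: for any $q\in F_c$, the segment points $p_0+t(q-p_0)$ lie in $N\cap F_c$ for small $t>0$, and convexity then forces the objective at $q$ to be at least its value at $p_0$. So $p_0$ is the projection onto $F_c$, and by uniqueness $p_c=p_0$. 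Alternatively, one could argue via KKT conditions, where the multipliers on the slack rows of $\bar A$ vanish, but the segment argument avoids any constraint qualification.

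Finally, the equivalence. If $\bar A p_c>0$, the first case gives $p_0=p_c$, so $\bar A p_0=\bar A p_c>0$. Symmetrically, the second case gives the reverse implication. The only delicate point is the second case, which rests on the local-to-global property; everything else is set inclusion and uniqueness of the strictly convex projection.
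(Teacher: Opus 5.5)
Your proof is correct and essentially the paper's. Case~1 is identical, and your Case~2 local-to-global segment argument is the paper's contradiction argument run directly for an arbitrary $q\in F_c$; the paper moves from $p_0$ toward $p_c$ specifically and uses the strict slack $\bar A p_0>0$ to stay inside $F_0$. The only blemish is your opening sentence, which states the principle backwards: it is the projection onto the \emph{larger} set that, if it lies in the smaller set, is also the projection onto the smaller set. Your Case~1 nonetheless uses it in the correct direction.
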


\begin{proof}
Because $c_j\leq0$ for every $j$, $F_0\subseteq F_c$. Suppose first that
$\bar A p_c>0$. The rows not contained in $\bar A$ have $c_j=0$, so
$p_c\in F_0$. Since $p_c$ minimizes the criterion over the larger set $F_c$
and is itself feasible for $F_0$, it must also be the unique minimizer over
$F_0$. Hence $p_c=p_0$.

Conversely, suppose $\bar A p_0>0$ and assume for contradiction that
$p_c\neq p_0$. Since $F_0\subseteq F_c$ and the objective is strictly
convex,
\[
\|p_c-x\|_\Omega^2<\|p_0-x\|_\Omega^2.
\]
For $t\in(0,1)$ define $p(t)=(1-t)p_0+tp_c$. The constraints whose
coordinates in $c$ equal zero are satisfied by both endpoints. The remaining
rows are strictly satisfied at $p_0$, so for all sufficiently small $t>0$,
$\bar A p(t)>0$ as well. Thus $p(t)\in F_0$. Strict convexity gives
\[
\|p(t)-x\|_\Omega^2
<
(1-t)\|p_0-x\|_\Omega^2+t\|p_c-x\|_\Omega^2
<
\|p_0-x\|_\Omega^2,
\]
contradicting optimality of $p_0$ over $F_0$. Thus $p_c=p_0$.
The final equivalence is immediate.
\end{proof}

\section{Proofs of main results}
\label{appendix:proofs}

\begin{proof}[Proof of Lemma \ref{lemma:minimax_entire}]
For the all-interior configuration $c=(-\infty,\ldots,-\infty)$, the feasible set in the local experiment is $\mathbb R^k$, so $\mathcal V_r(\Omega,c)=\mathcal V_{ur}$ for every $\Omega$. Hence
\[
\sup_{c\in C}R_H(\Omega,c)\geq R_{ur}
\]
for every projection geometry. On the other hand, $\mathcal V_r(H,c)$ is the $H$-metric projection of $\mathcal V_{ur}$ onto the closed convex set $U(c)$, which contains the origin for every $c\in C$. The projection inequality therefore gives
\[
\|\mathcal V_r(H,c)\|_H\leq\|\mathcal V_{ur}\|_H
\]
for every realization and every $c\in C$. Taking expectations yields $R_H(H,c)\leq R_{ur}$ uniformly over $C$. Thus the minimax value is $R_{ur}$. The final statement follows immediately for any other geometry whose risk is bounded by $R_{ur}$ throughout $C$.
\end{proof}

\begin{proof}[Proof of Lemma \ref{lemma:minimax_local}]
Write
\[
d=\frac{\Omega^{-1}a}{a'\Omega^{-1}a}
\]
for the projection direction generated by $\Omega$, so that $a'd=1$. Conversely, every vector $d$ satisfying $a'd=1$ can be generated by some symmetric positive-definite projection metric. Let
\[
P_a=I-\frac{aa'}{\|a\|^2}.
\]
For any $\tau>0$, set $K_d=dd'+\tau P_a$. Then $K_da=d$, and for any $x\neq0$,
\[
x'K_dx=(d'x)^2+\tau\|P_ax\|^2>0.
\]
Thus $K_d$ is symmetric positive definite and can be taken as $\Omega^{-1}$. We can therefore formulate the single-halfspace minimax problem directly over $d$ subject to $a'd=1$.

Let $\phi$ and $\Phi$ denote the standard normal density and distribution functions. The normalization $a'\Sigma a=1$ implies
\[
X=a'\mathcal V_{ur}\sim N(0,1).
\]
Let $v=\Sigma a$. Conditional normality gives
\[
\mathcal V_{ur}\mid X=x
\sim
N\left(vx,\,K\right),
\qquad
K=\Sigma-vv'.
\]
For local slack $c=-\delta$, the projection changes a draw only when $X<-\delta$. On that event,
\[
\mathcal V_r=\mathcal V_{ur}-d(X+\delta).
\]
Consequently,
\begin{align}
R_H(d,-\delta)
=&\ \operatorname{tr}(HK)
+\int_{-\infty}^{-\delta}
\{vx-d(x+\delta)\}'H\{vx-d(x+\delta)\}\phi(x)\,dx \notag\\
&+\int_{-\delta}^{\infty}x^2v'Hv\,\phi(x)\,dx.
\label{eq:finite_band_risk}
\end{align}

For part 1, the two integrands in \eqref{eq:finite_band_risk} agree at $x=-\delta$, so the boundary terms cancel when differentiating with respect to $\delta$. Hence
\begin{equation}
\frac{\partial R_H(d,-\delta)}{\partial\delta}
=
2\left[
\{d'Hv-d'Hd\}\phi(\delta)
+d'Hd\,\delta\Phi(-\delta)
\right].
\label{eq:finite_band_derivative}
\end{equation}
If $d'Hd-d'Hv\leq0$, this derivative is nonnegative for every $\delta\geq0$. Otherwise its sign is the sign of
\[
\frac{\delta\Phi(-\delta)}{\phi(\delta)}
-
\frac{d'Hd-d'Hv}{d'Hd}.
\]
Let $h(\delta)=\delta\Phi(-\delta)/\phi(\delta)$. For $\delta>0$,
\[
h'(\delta)
=
(1+\delta^2)\frac{\Phi(-\delta)}{\phi(\delta)}-\delta.
\]
The standard Mills lower bound $\Phi(-\delta)/\phi(\delta)>\delta/(1+\delta^2)$ implies $h'(\delta)>0$. Thus the derivative in \eqref{eq:finite_band_derivative} can change sign at most once, and if it changes sign it changes from negative to positive. Therefore $R_H(d,-\delta)$ has no interior maximum on $[0,\bar\delta]$, proving part 1.

We next derive the pointwise optimum, which also proves part 4. The terms in \eqref{eq:finite_band_risk} that depend on $d$ are
\[
A(\delta)d'Hd-2B(\delta)d'Hv,
\]
where
\begin{align*}
A(\delta)
&=\int_{-\infty}^{-\delta}(x+\delta)^2\phi(x)\,dx
=(1+\delta^2)\Phi(-\delta)-\delta\phi(\delta),\\
B(\delta)
&=\int_{-\infty}^{-\delta}x(x+\delta)\phi(x)\,dx
=\Phi(-\delta).
\end{align*}
Minimizing this strictly convex quadratic subject to $a'd=1$ gives
\begin{equation}
d_{pt}(\delta)
=
\frac{B(\delta)}{A(\delta)}\Sigma a
+
\left\{1-\frac{B(\delta)}{A(\delta)}\right\}
\frac{H^{-1}a}{a'H^{-1}a}.
\label{eq:pointwise_direction_H}
\end{equation}
Define
\[
w_1(\delta)=2\Phi(-\delta),
\qquad
w_2(\delta)=2\delta\{\delta\Phi(-\delta)-\phi(\delta)\}.
\]
Then $w_1(0)=1$, $w_2(0)=0$, and
\[
w_1(\delta)+w_2(\delta)=2A(\delta)>0.
\]
Therefore the matrix
\begin{equation}
\Omega^{-1}(\delta)
=
\frac{w_1(\delta)}{w_1(\delta)+w_2(\delta)}\Sigma
+
\frac{w_2(\delta)}{w_1(\delta)+w_2(\delta)}
\frac{H^{-1}}{a'H^{-1}a}
\label{eq:pointwise_metric_H}
\end{equation}
satisfies $a'\Omega^{-1}(\delta)a=1$ and generates the direction in \eqref{eq:pointwise_direction_H}. At $\delta=0$, \eqref{eq:pointwise_metric_H} equals $\Sigma$. Since positive definiteness is an open condition and the weights are continuous, $\Omega^{-1}(\delta)$ remains positive definite for all sufficiently small $\delta$. This proves part 4. Notice also that for $\delta>0$,
\[
A(\delta)-B(\delta)
=\delta\{\delta\Phi(-\delta)-\phi(\delta)\}<0,
\]
so the coefficient on the normalized $H^{-1}$ term is generally negative.

We now prove part 2. By part 1, the minimax problem is equivalent to minimizing
\[
M(d)=\max\{R_H(d,0),R_H(d,-\bar\delta)\}
\]
over $a'd=1$. At $\delta=0$, completing squares gives
\begin{equation}
R_H(d,0)
=r_0+\frac12(d-v)'H(d-v)
\label{eq:boundary_complete_square}
\end{equation}
for a constant $r_0$ independent of $d$. Hence $M(d)$ is continuous and coercive on the affine space $\{d:a'd=1\}$ and has a minimizer $d^*$. Suppose, contrary to part 2, that the exact boundary is the unique worst-case point at $d^*$, so that
\[
R_H(d^*,0)>R_H(d^*,-\bar\delta).
\]
The candidate $d^*$ cannot equal $v$, because \eqref{eq:finite_band_derivative} evaluated at $d=v$ gives
\[
\frac{\partial R_H(v,-\delta)}{\partial\delta}
=2v'Hv\,\delta\Phi(-\delta)>0
\qquad(\delta>0),
\]
and therefore $R_H(v,-\bar\delta)>R_H(v,0)$. Since $d^*\neq v$, moving a sufficiently small distance from $d^*$ toward $v$ strictly lowers \eqref{eq:boundary_complete_square}. By continuity, the exact boundary remains the worst-case endpoint for a sufficiently small move, so $M(d)$ falls, contradicting minimaxity. Thus a minimizer can be chosen for which $\delta=\bar\delta$ is a worst-case point, proving part 2.

Finally, consider inverse-covariance projection. Under the normalization $a'\Sigma a=1$, its projection direction is $v=\Sigma a$. As just shown,
\[
R_H(v,-\bar\delta)>R_H(v,0)
\qquad\text{for every }\bar\delta>0.
\]
If $\Sigma a$ is not proportional to $H^{-1}a$, then the two normalized directions $v$ and $H^{-1}a/(a'H^{-1}a)$ differ. Since $B(\bar\delta)/A(\bar\delta)>1$ for $\bar\delta>0$, \eqref{eq:pointwise_direction_H} implies $d_{pt}(\bar\delta)\neq v$. The outer-endpoint risk is a strictly convex quadratic in $d$ centered at $d_{pt}(\bar\delta)$, whereas the boundary risk is minimized at $v$. Moving a sufficiently small distance from $v$ toward $d_{pt}(\bar\delta)$ therefore lowers $R_H(d,-\bar\delta)$ to first order and raises $R_H(d,0)$ only to second order. Because the outer endpoint is strictly worse at $v$, it remains the worst endpoint for a sufficiently small move, and the worst-case risk falls. Thus inverse-covariance projection is not minimax for any $\bar\delta>0$, proving part 3.
\end{proof}

\begin{proof}[Proof of Theorem \ref{thm:one_const}]
Fix $c\in\bar C_1$ and let $j$ denote its unique binding coordinate. All other coordinates of $c$ equal $-\infty$, so the local problem reduces to the halfspace $a_j'u\geq0$. For brevity write $a=a_j$ and
\[
\sigma_a^2=a'\Sigma a,\qquad
d^*=\frac{\Sigma a}{\sigma_a^2},\qquad
d=d_{\Omega,j}.
\]
Conditioning on $X=a'\mathcal V_{ur}$, the conditional covariance of
$\mathcal V_{ur}$ does not depend on the projection geometry and the
conditional mean is $d^*X$. At the exact boundary, the two projected
conditional means differ only on $X<0$, where the difference equals
$(d^*-d)X$. Hence
\[
R_H(\Omega,c)-R_H(\Sigma^{-1},c)
=
E[X^2 1\{X<0\}]\,(d-d^*)'H(d-d^*)
=
\frac{a'\Sigma a}{2}(d-d^*)'H(d-d^*).
\]
Thus every positive-definite metric that generates $d^*$ attains the same
risk at this one-binding configuration, and $\Omega=\Sigma^{-1}$ generates
$d^*$ for every row $a_j$. This proves Part~1. Since Part~1 holds pointwise
for every $c\in\bar C_1$, it also implies
\[
\Sigma^{-1}\in\argmin_{\Omega\succ0}\max_{c\in\bar C_1}R_H(\Omega,c).
\]

For Part~2, fix $c\in C$ with at most one finite coordinate. If all coordinates
are $-\infty$, the restricted and unrestricted limits coincide and the claim is
immediate. Otherwise let $j$ be the unique finite coordinate, suppress the
asymptotically irrelevant inequalities, write $a=a_j$, and denote the scalar
finite slack by $\gamma=c_j\leq0$. By the transformation in
Appendix~\ref{appendix:proj_details}, it is enough to prove the no-harm claim
under Euclidean loss. Let
\[
X=a'\mathcal V_{ur},
\qquad
s^2=a'\Sigma a,
\qquad
d_\Omega=\frac{\Omega^{-1}a}{a'\Omega^{-1}a}.
\]
Conditional normality gives
\[
\mathcal V_{ur}\mid X=x
\sim
N\left(\frac{\Sigma a}{s^2}x,\,
\Sigma-\frac{\Sigma aa'\Sigma}{s^2}\right).
\]
If $x<\gamma$, the projected limit is
\[
\mathcal V_r(\Omega,c)
=
\mathcal V_{ur}-d_\Omega(x-\gamma);
\]
if $x\geq\gamma$, it equals $\mathcal V_{ur}$. The conditional covariance on
the violation region therefore does not depend on $\Omega$, while the
conditional mean is
\[
\frac{\Sigma a}{s^2}x-d_\Omega(x-\gamma).
\]
For the no-harm comparison, set $\Omega=\Sigma^{-1}$. Conditional on a
violation $X=x<\gamma$, the restricted conditional mean becomes
\[
\frac{\Sigma a}{s^2}\gamma,
\]
whereas the unrestricted conditional mean is
\[
\frac{\Sigma a}{s^2}x.
\]
The conditional covariance is the same for the two estimators. Since
$x<\gamma\leq0$ implies $x^2\geq\gamma^2$, the restricted conditional squared
bias is weakly smaller on every violating hyperplane. On the nonviolation
region the two estimators coincide. Integrating over $X$ yields
\[
E\!\left[\|\mathcal V_r(\Sigma^{-1},c)\|^2\right]\leq R_{ur}.
\]
Transforming back from Euclidean loss establishes Part~2 for general $H$.
\end{proof}

\begin{proof}[Proof of Theorem \ref{thm:two_const}]
We first prove Part~1. By the transformation in Appendix~\ref{appendix:proj_details}, it is enough
to prove the boundary-minimax result under Euclidean loss. Consider an attainable configuration
with two binding inequalities and denote their normals by $a_1$ and $a_2$.
All other inequalities are asymptotically irrelevant for this configuration.
Under the full-dimensionality convention in Section~\ref{subsec:setup} and
irredundancy of the maintained inequalities, two distinct simultaneously
active facets cannot have proportional normals; otherwise one inequality would
be redundant or the pair would encode a maintained equality. Maintained
equalities have already been absorbed into the parameterization. We therefore
consider the linearly independent case.

Suppressing the asymptotically irrelevant coordinates, for brevity write
\[
V^{12}=\mathcal V_r(\Sigma^{-1},(0,0)),\qquad
V^1=\mathcal V_r(\Sigma^{-1},(0,-\infty)),\qquad
V^2=\mathcal V_r(\Sigma^{-1},(-\infty,0)),
\]
and $V=\mathcal V_{ur}$. We show
\begin{equation}
E\|V^{12}\|^2
\leq
\max\{E\|V^1\|^2,E\|V^2\|^2\}.
\label{eq:two_constraint_bound}
\end{equation}

\emph{Claim 1: active-set probabilities.}
Partition the sample space according to the active set of the two-binding
projection. Let $B_0$ be the event that neither inequality binds after
projection, $B_1$ and $B_2$ the events that only the first or only the
second inequality binds, and $B_{12}$ the event that both bind. These events
are determined by the two constraint scores $(a_1'V,a_2'V)$ up to
probability-zero boundaries. On $B_0$,
$V^{12}=V$. By Lemma~\ref{lem:subset_constraints}, on $B_1$ the
two-binding and first-only projections coincide, and
\[
B_1=\{a_1'V\leq0,\ a_2'V^1>0\}
\]
up to probability-zero boundaries. Conditional on $a_1'V=x\leq0$, $V^1$ is a
zero-mean normal vector with covariance independent of $x$; this follows
from Corollary~\ref{cor:cond_dist_proj}. The remaining condition
$a_2'V^1>0$ is therefore a centered normal halfspace conditional on
$a_1'V=x$. Since $\|V^1\|^2$ is invariant to the sign change
$V^1\mapsto -V^1$, symmetry implies
\[
E[\|V^{12}\|^2\1(B_1)]
=
\frac12
E[\|V^1\|^2\1(a_1'V\leq0)].
\]
Analogously,
\[
E[\|V^{12}\|^2\1(B_2)]
=
\frac12
E[\|V^2\|^2\1(a_2'V\leq0)].
\]
Define
\[
M_1=E[\|V^1\|^2\1(a_1'V\leq0)],
\qquad
M_2=E[\|V^2\|^2\1(a_2'V\leq0)].
\]
Then
\begin{equation}
E\|V^{12}\|^2
=
E[\|V\|^2\1(B_0)]
+
E[\|V^{12}\|^2\1(B_{12})]
+
\frac12(M_1+M_2).
\label{eq:two_partition}
\end{equation}

Suppose first that $M_1\geq M_2$. It is enough to compare the
$B_{12}$ contribution with a region that appears in the risk of $V^1$.
Let
\[
\bar B_1=\{a_1'V>0,\ a_2'V\leq0\}.
\]
We next compute the probabilities of the one-binding active-set events.
The event $B_1$ can be written, up to probability-zero boundaries, as
\[
B_1=\{a_1'V\leq0,\ a_2'V^1>0\}.
\]
Fix $x\leq0$. Conditional on $a_1'V=x$, the first-only projection imposes
the equality $a_1'p=0$, and Corollary~\ref{cor:cond_dist_proj} implies
that $V^1$ is centered normal with covariance matrix not depending on $x$.
Because the two restrictions are nonredundant, $a_2'V^1$ has positive
conditional variance. Therefore
\[
P(a_2'V^1>0\mid a_1'V=x)=\frac12
\]
for almost every $x\leq0$. Since $a_1'V$ is a centered nondegenerate normal
random variable,
\[
P(B_1)
=
E\!\left[
\1\{a_1'V\leq0\}P(a_2'V^1>0\mid a_1'V)
\right]
=
\frac12P(a_1'V\leq0)
=
\frac14.
\]
The same argument with the roles of the two inequalities reversed gives
$P(B_2)=1/4$.

\emph{Claim 2: equality of comparison-region probabilities.}
Since the four active-set events partition the sample space,
\[
P(B_{12})=\frac12-P(B_0).
\]
Moreover,
\[
P(\bar B_1)
=
P(a_1'V>0)-P(B_0)
=
\frac12-P(B_0),
\]
so $P(B_{12})=P(\bar B_1)$.

\emph{Claim 3: conditional second moments.}
Let $A_2$ denote the $2\times k$ matrix with rows $a_1'$ and $a_2'$, and
set
\[
K
=
\Sigma-\Sigma A_2'(A_2\Sigma A_2')^{-1}A_2\Sigma.
\]
Conditional on $A_2V=x$, the event $B_{12}$ is nonrandom. On this event,
inverse-covariance projection imposes both inequalities as equalities, so
Lemma~\ref{lem:general_proj} and Lemma~\ref{lem:cond_normal} imply that
$V^{12}$ has conditional mean zero and covariance $K$. Hence
\[
E[\|V^{12}\|^2\1(B_{12})]
=
\operatorname{tr}(K)P(B_{12}).
\]
On $\bar B_1$, the first-only projection does nothing, so $V^1=V$.
Conditional on $A_2V=x$,
\[
E[\|V\|^2\mid A_2V=x]
=
\|E[V\mid A_2V=x]\|^2+\operatorname{tr}(K)
\geq \operatorname{tr}(K).
\]
Since $\bar B_1$ is also determined by $A_2V$,
\[
E[\|V^1\|^2\1(\bar B_1)]
\geq
\operatorname{tr}(K)P(\bar B_1)
=
E[\|V^{12}\|^2\1(B_{12})].
\]
Using $M_1\geq M_2$ in \eqref{eq:two_partition} therefore gives
\begin{align*}
E\|V^{12}\|^2
&\leq
E[\|V\|^2\1(B_0)]
+
E[\|V^1\|^2\1(\bar B_1)]
+
M_1\\
&=
E\|V^1\|^2.
\end{align*}
If $M_2\geq M_1$, the symmetric argument yields
$E\|V^{12}\|^2\leq E\|V^2\|^2$. This proves
\eqref{eq:two_constraint_bound}.

By Theorem~\ref{thm:one_const}, inverse covariance is pointwise optimal in
every attainable one-binding configuration. Hence the worst risk over $\bar C_1$ is a
lower bound on the minimax value over $\bar C_2$, while
\eqref{eq:two_constraint_bound} shows that every two-binding risk under
inverse covariance is no larger than that lower bound. Therefore
$\Sigma^{-1}$ is minimax over $\bar C_2$.

For Part~2, let $L=\{j:c_j>-\infty\}$ and omit the asymptotically irrelevant
inequalities. The case $|L|\leq1$ follows from Part~2 of Theorem~\ref{thm:one_const}, so
suppose $|L|=2$. If the two finite restriction normals are linearly independent,
let $A_L$ collect their rows and choose $u_0$ with $A_Lu_0=c_L$. Define
\[
X=\Sigma^{-1/2}(\mathcal V_{ur}-u_0),\qquad
\theta=-\Sigma^{-1/2}u_0,\qquad
B=A_L\Sigma^{1/2}.
\]
Then $X\sim N(\theta,I_k)$ and $B\theta=-c_L\geq0$. Moreover,
inverse-covariance projection of $\mathcal V_{ur}$ onto
$\{u:A_Lu\geq c_L\}$ is equivalent to Euclidean projection of $X$ onto
$\{z:Bz\geq0\}$, the restricted Gaussian maximum likelihood estimator.
The two-restriction result of \citet{RuedaSalvador1995} gives, for every
$q\in\mathbb R^k$,
\[
E\!\left[\{q'(X^*-\theta)\}^2\right]
\leq
E\!\left[\{q'(X-\theta)\}^2\right],
\]
where $X^*$ denotes the restricted estimator. Hence
$E[(X^*-\theta)(X^*-\theta)']\preceq I_k$. Since
\[
\mathcal V_r(\Sigma^{-1},c)=\Sigma^{1/2}(X^*-\theta),
\]
taking the trace against $\Sigma^{1/2}H\Sigma^{1/2}\succeq0$ yields
$R_H(\Sigma^{-1},c)\leq\operatorname{tr}(H\Sigma)=R_{ur}$.

It remains to consider linearly dependent normals. If they point in the same
direction, the two inequalities reduce to a single halfspace and
Theorem~\ref{thm:one_const} applies. If $a_2=\lambda a_1$ with $\lambda<0$,
the two inequalities are equivalent to
\[
c_1\leq a_1'u\leq c_2/\lambda,
\]
and this interval contains zero because $c_1,c_2\leq0$. Let
$Z=a_1'\mathcal V_{ur}$ and
$d^*=\Sigma a_1/(a_1'\Sigma a_1)$. The Gaussian decomposition
$\mathcal V_{ur}=d^*Z+\varepsilon$ has $\varepsilon$ centered and independent
of $Z$, while inverse-covariance projection replaces $Z$ by its scalar
projection $\Pi(Z)$ onto the interval. Since $0$ belongs to the interval,
$|\Pi(Z)|\leq|Z|$ draw by draw. The cross term with $\varepsilon$ has mean
zero, so for every $H\succeq0$ projected risk is no larger than unrestricted
risk. This proves Part~2.
\end{proof}

\begin{proof}[Proof of Theorem \ref{thm:suff_cond}]
Let
\[
M_1=\max_{c\in\bar C_1}R_H(\Sigma^{-1},c).
\]
By Theorem~\ref{thm:one_const}, inverse covariance is pointwise optimal at every attainable $c\in\bar C_1$. Therefore, for every positive-definite $\Omega$,
\[
\max_{c\in\bar C_d}R_H(\Omega,c)
\geq
\max_{c\in\bar C_1}R_H(\Omega,c)
\geq M_1.
\]
Under the sufficient condition in the theorem,
\[
\max_{c\in\bar C_d}R_H(\Sigma^{-1},c)=M_1.
\]
Thus $M_1$ is both a lower bound on the boundary-minimax value and the worst boundary risk attained by $\Sigma^{-1}$. Hence $\Sigma^{-1}$ is boundary minimax.
 \end{proof}

\begin{proof}[Proof of Theorem \ref{thm:plugin_boundary_stability}]
Fix $c\in C$ and write
\[
x=\mathcal V_r(\Omega,c),
\qquad
\hat x=\hat{\mathcal V}_r(\hat\Sigma,\hat\Omega,c),
\qquad
D=\hat x-x.
\]
Let $W=\Omega$, $\hat W=\hat\Omega$, $y=\Sigma^{1/2}Z$, and $\hat y=\hat\Sigma^{1/2}Z$. Work on the event that $\hat\Sigma$ and $\hat W$ are positive definite. The variational inequalities for the two metric projections imply
\[
D'\hat W(\hat x-\hat y)\leq0,
\qquad
D'W(x-y)\geq0.
\]
Using these inequalities,
\begin{align*}
D'\hat W D
&=D'\hat W(\hat x-\hat y)+D'\hat W(\hat y-x)\\
&\leq
D'\{(W-\hat W)x+\hat W\hat y-Wy\}\\
&=
D'\{(W-\hat W)x+(\hat W-W)\hat y+W(\hat y-y)\}.
\end{align*}
Let $\lambda=\lambda_{\min}(W)$ and $\hat\lambda=\lambda_{\min}(\hat W)$. Cauchy--Schwarz therefore gives
\[
\|D\|
\leq
\frac{1}{\hat\lambda}
\left(
\|W-\hat W\|(\|x\|+\|\hat y\|)
+
\|W\|\,\|\hat y-y\|
\right).
\]
Because $0\in U(c)$ for every $c\in C$, the projection inequality gives $\|x\|_W\leq\|y\|_W$ and $\|\hat x\|_{\hat W}\leq\|\hat y\|_{\hat W}$. Hence
\[
\|x\|\leq
\sqrt{\frac{\lambda_{\max}(W)}{\lambda}}\,\|y\|,
\qquad
\|\hat x\|\leq
\sqrt{\frac{\lambda_{\max}(\hat W)}{\hat\lambda}}\,\|\hat y\|.
\]
Since $\hat\Sigma\xrightarrow{p}\Sigma\succ0$ and $\hat W\xrightarrow{p}W\succ0$, there are constants $0<\underline m<\bar m<\infty$ such that the event
\[
E_n=\{\lambda_{\min}(\hat\Sigma)\geq\underline m,\ 
\lambda_{\max}(\hat\Sigma)\leq\bar m,\ 
\lambda_{\min}(\hat W)\geq\underline m,\ 
\lambda_{\max}(\hat W)\leq\bar m\}
\]
has probability approaching one. On $E_n$, the preceding inequalities imply that there is a constant $C$, not depending on $c$ or $n$, such that
\[
\|D\|
\leq
C\|Z\|
\left(
\|\hat W-W\|
+
\|\hat\Sigma^{1/2}-\Sigma^{1/2}\|
\right)
\]
and
\[
\|x\|+\|\hat x\|\leq C\|Z\|.
\]
Now
\begin{align*}
\left|\|\hat x\|^2-\|x\|^2\right|
&\leq \|D\|(\|\hat x\|+\|x\|)\\
&\leq
C\|Z\|^2
\left(
\|\hat W-W\|
+
\|\hat\Sigma^{1/2}-\Sigma^{1/2}\|
\right).
\end{align*}
Taking conditional expectations on $E_n$ and using $E\|Z\|^2=k$ yields a bound, conditional on $(\hat\Sigma,\hat W)$, that is $o_p(1)$. The bound does not depend on $c$, so it holds uniformly over $C$. Consistency of $\hat\Sigma$ and $\hat W$, together with continuity of the symmetric square root on matrices with eigenvalues bounded away from zero, imply the desired convergence. The complement of $E_n$ and the zero convention outside the positive-definite event are immaterial because both events have probability approaching zero.
\end{proof}

The same proof yields the path-uniform version used in Section~\ref{subsec:cov_estimation}. In the same normalized coordinates, let
\begin{align*}
\bar K_\Sigma&=\frac{\Sigma}{\operatorname{tr}(\Sigma)},
&
\widehat{\bar K}_\Sigma&=\frac{\hat\Sigma}{\operatorname{tr}(\hat\Sigma)},\\
K_s&=(1-s)\bar K_H+s\bar K_\Sigma,
&
\hat K_s&=(1-s)\bar K_H+s\widehat{\bar K}_\Sigma,
\qquad
\hat\Omega_s=\hat K_s^{-1}.
\end{align*}
Since $\hat\Sigma\xrightarrow{p}\Sigma\succ0$, $\sup_{s\in[0,1]}\|\hat K_s-K_s\|=o_p(1)$. The population path is uniformly positive definite, and the feasible path is uniformly positive definite with probability approaching one. Uniform continuity of matrix inversion on sets with eigenvalues bounded away from zero gives
\[
\sup_{s\in[0,1]}\|\hat\Omega_s-\Omega_s\|=o_p(1).
\]
Replacing $W$ and $\hat W$ in the proof above by $\Omega_s$ and $\hat\Omega_s$ therefore makes the same random constant uniform in $s$, and gives
\[
\sup_{s\in[0,1]}\sup_{c\in C}
\left|\hat R_n(s,c)-R_{\Sigma,\Omega_s}(c)\right|
=o_p(1),
\]
where $\hat R_n(s,c)$ denotes the plug-in risk computed with $\hat\Sigma$ and $\hat\Omega_s$.

\begin{proof}[Proof of Corollary \ref{cor:plugin_boundary_stability}]
Since $\hat\Sigma\xrightarrow{p}\Sigma\succ0$, continuity of matrix inversion on the positive-definite cone gives $\hat\Sigma^{-1}\xrightarrow{p}\Sigma^{-1}$ on an event whose probability approaches one. Thus Theorem~\ref{thm:plugin_boundary_stability} applies with $\Omega=\Sigma^{-1}$ and $\hat\Omega=\hat\Sigma^{-1}$.
Because $\bar C_d$ is finite, Theorem~\ref{thm:plugin_boundary_stability} implies
\[
\max_{c\in\bar C_d}|\hat R_n(c)-R_{\Sigma,\Sigma^{-1}}(c)|=o_p(1).
\]
For any finite set $S$,
\[
\left|\max_{c\in S}\hat R_n(c)-\max_{c\in S}R_{\Sigma,\Sigma^{-1}}(c)\right|
\leq \max_{c\in S}|\hat R_n(c)-R_{\Sigma,\Sigma^{-1}}(c)|.
\]
Applying this inequality to $S=\bar C_d$ and $S=\bar C_1$ yields
\[
|\hat\Delta_n-\Delta|
\leq
2\max_{c\in\bar C_d}|\hat R_n(c)-R_{\Sigma,\Sigma^{-1}}(c)|
=o_p(1).
\]
Thus $\hat\Delta_n\xrightarrow{p}\Delta$. In particular, if $\Delta>0$, the strict failure of the sufficient condition is detected with probability approaching one.
\end{proof}

\section{Additional simulations and computational details}
\label{appendix:sim_additional}

\subsection{Three-constraint counterexample details}
\label{appendix:counterexample_details}

This subsection reports the complete set of boundary risks for the numerical counterexample in Section~\ref{subsec:three_counterexample}. Here, $\mathcal V_{ur}\sim N(0,I_3)$, so the inverse-covariance metric is $I_3$. The three inequality normals are
\[
a_1=(\tau,0,\rho)',\qquad
a_2=(-\tau/2,\sqrt{3}\tau/2,\rho)',\qquad
a_3=(-\tau/2,-\sqrt{3}\tau/2,\rho)',
\]
where $\rho=0.9$ and $\tau=\sqrt{0.19}$. The positive-definite loss used in the risk calculation is
\[
H_\varepsilon=\operatorname{diag}(10^{-4},10^{-4},1),
\]
and the alternative projection metric is $\operatorname{diag}(1,1,1.35)$.

For a single halfspace with normal $a$, define
\[
q_\Omega(a)=\frac{\Omega^{-1}a}{a'\Omega^{-1}a}.
\]
Symmetry of the standard Gaussian distribution implies the closed-form single-halfspace risk
\[
R(a;\Omega,H)=\operatorname{tr}(H)-a'Hq_\Omega(a)
+\frac12(a'a)q_\Omega(a)'Hq_\Omega(a).
\]
This expression verifies the single-halfspace comparison analytically. The risks below are ordinary Monte Carlo estimates based on 300,000 iid Gaussian draws. Parentheses report standard errors. The same draws are used for both projection metrics, so the standard error in the difference column is computed from the paired draw-by-draw risk differences.

\begin{table}[!tbp]
\centering
\caption{Complete boundary risks in the three-constraint counterexample}
\label{tab:counterexample_full}
\footnotesize
\begin{tabular}{lccc}
\toprule
Binding set $S$ & \makecell{Inverse covariance\\$R_S(I_3)$} & \makecell{Alternative metric\\$R_S(\operatorname{diag}(1,1,1.35))$} & Difference\\
\midrule
$\{1\}$ & \makecell{0.592877\\(0.001995)} & \makecell{0.594519\\(0.001995)} & \makecell{-0.001642\\(0.000064)}\\
$\{2\}$ & \makecell{0.592565\\(0.001995)} & \makecell{0.594183\\(0.001995)} & \makecell{-0.001618\\(0.000063)}\\
$\{3\}$ & \makecell{0.592656\\(0.001996)} & \makecell{0.594258\\(0.001995)} & \makecell{-0.001601\\(0.000064)}\\
$\{1,2\}$ & \makecell{0.581040\\(0.002000)} & \makecell{0.569528\\(0.002003)} & \makecell{0.011512\\(0.000055)}\\
$\{1,3\}$ & \makecell{0.581473\\(0.002000)} & \makecell{0.569914\\(0.002003)} & \makecell{0.011559\\(0.000055)}\\
$\{2,3\}$ & \makecell{0.581129\\(0.002000)} & \makecell{0.569577\\(0.002003)} & \makecell{0.011552\\(0.000055)}\\
$\{1,2,3\}$ & \makecell{0.612548\\(0.001987)} & \makecell{0.593276\\(0.001994)} & \makecell{0.019272\\(0.000059)}\\
\bottomrule
\end{tabular}
\end{table}
The identity metric is optimal for each one-binding problem, but the three-binding risk becomes its worst boundary risk. The alternative metric accepts a small increase in each one-binding risk in exchange for a substantially lower risk when several inequalities bind. This lowers the maximum over boundary configurations and establishes the failure of inverse-covariance boundary minimaxity in this example.

\subsection{Computational details}
\label{appendix:replication_details}

The selector and application calculations use scrambled Sobol Gaussian integration, the conservative screen $\kappa=5$, and finite slack box $M=8$.

For modest retained sets, the finite-box diagnostic searches every nonempty retained subset and includes the all-interior value one. The illustrative and ordered-treatment results use the common grid $\mathcal S=\{0,0.05,\ldots,1\}$ and the numerical feasible set
\[
\widehat{\mathcal S}_{\mathrm{num}}
=\{0\}\cup\{s\in\mathcal S\setminus\{0\}:\hat G_8(s)\leq1.004\}.
\]
We minimize $\hat B(s)$ over this set separately in each replication, breaking ties in favor of larger $s$, and then average the selected criteria and losses. Loss projection is admitted analytically and its diagnostic is set to one, so the set is never empty. The $0.004$ allowance accommodates an observed integration offset. In all 831 ordered-treatment replications where the inverse-covariance diagnostic exceeds $1.001$, it equals the unadjusted diagnostic for the analytically safe loss projection. The largest such value is $1.003055$. The allowance is a numerical convention, not a confidence bound for population no-harm. All other displayed diagnostic values are direct numerical estimates. The application separately reports its paired subset-maximized diagnostic. Table~\ref{tab:three_constraint_counterexample} uses 
300,000 iid Gaussian draws. The main computational settings are:
\begin{center}
\footnotesize
\begin{tabular}{p{0.38\linewidth}ccc}
\toprule
Exercise & Risk draws & \makecell{Slack candidates/\\starts/iterations} & Replications\\
\midrule
Illustrative selector & 4,096 & 60 / 2 / 20 & 1,000\\
Ordered-treatment selector & 4,096 & 60 / 2 / 20 & 1,000\\
Gasoline-demand application & 8,192 & 80 / 3 / 30 & one plug-in application\\
\bottomrule
\end{tabular}
\end{center}
The flexible-IV simulations use 1,000 Monte Carlo replications and 10,000 auxiliary draws to evaluate rearrangement loss.

The gasoline-demand application uses the nonredundant low/high-income endpoint diagnostic with the empirical-control loss matrix $H_{\mathrm{all}}$ that averages over the displayed low-, medium-, and high-income levels. Its finite-box diagnostic maximizes over all 1023 nonempty retained subsets. The empirical $H_{\mathrm{all}}$ is treated as the plug-in loss matrix, and Section~\ref{subsec:cov_estimation} does not separately analyze its estimation.

\subsection{Implementation details for the boundary-risk selector}
\label{appendix:selector_details}

The feasible selector estimates the root-$n$ covariance matrix, conservatively screens restrictions, and standardizes retained local slacks by their estimated standard deviations. The population no-harm criterion uses $[0,\infty]^m$. Numerically, finite slacks are searched in $[0,8]^{|L|}$ for each retained subset $L$, while infinite slacks are represented by omitted inequalities. The all-interior case contributes one. Thus $M=8$ is a numerical approximation, not part of the population criterion. Other values of $M$ can be checked if the maximizer is close to the finite-box boundary.

For fixed $s$ and standardized slack vector $\mu$, the projection problem has the dual representation
\[
\lambda_s(z,\mu)=\argmin_{\lambda\geq0}
\left\{\frac12\lambda'Q_s\lambda+\lambda'(z+D\mu)\right\},
\qquad Q_s=A K_s A',
\]
where $K_s=\Omega_s^{-1}$ and $z=A\mathcal V_{ur}$. The path uses the normalized endpoint matrices
\[
K_s=(1-s)\bar K_H+s\bar K_\Sigma,\qquad
\bar K_H=\frac{H^{-1}}{\operatorname{tr}(H^{-1})},
\qquad
\bar K_\Sigma=\frac{\Sigma}{\operatorname{tr}(\Sigma)}.
\]
The primal projected error is $\mathcal V_{ur}+K_sA'\lambda_s$. Conditional Gaussian moments integrate out components not determined by the retained constraint score. For $m\leq5$, we enumerate dual active sets. The application uses the same routine at $s=1$ for its 10-restriction system. The finite-slack search uses space-filling candidates and local refinement on each subset, while $\hat B(s)$ enumerates all attainable nonempty exact-boundary subsets. Within each integration, Gaussian draws are held fixed across candidate slacks. Boundary-risk and full-retained-system safety calculations also reuse draws across $s$. Proper-subset safety calculations use path-dependent scrambles. Reported selections and average curves use the common grid in Appendix~\ref{appendix:replication_details}, and plotted losses come from the same saved replications as the tables. The application instead evaluates paired excess losses and adds back exact unrestricted risk.

The illustrative design in Section~\ref{subsec:sim_selector} uses $X_i\sim N(\theta_0,I_{10})$, $n=250$, coordinatewise nonnegativity, and
\[
H=\begin{pmatrix}0.05I_5+\mathbf1\mathbf1'&0\\0&0.05I_5\end{pmatrix}.
\]
The first $m\in\{1,3,5\}$ coordinates are locally relevant. Section~\ref{subsec:sim_selector} considers exact-boundary profiles, and Table~\ref{tab:selector_appendix} also reports mixed profiles with standardized slacks $(0,0.5,1,1.5,2)$ truncated to length $m$. All binding constraints are retained. The screen retains every locally relevant constraint in all exact-boundary replications and in 99.6\% of the five-constraint mixed-profile replications. There are no false positives.

\begin{table}[!htbp]
\centering
\caption{Additional results for the illustrative selector design}
\label{tab:selector_appendix}
\footnotesize
\begin{tabular}{llrrrrr}
\toprule
$m$ & Profile & $\bar s_{BR}$ & $L_0$ & $L_1$ & $L_{BR}$ & $\overline{\hat B(\hat s_{BR})}$\\
\midrule
1 & Boundary & 1.000 & 0.02126 & 0.01983 & 0.01983 & 0.903\\
1 & Mixed & 1.000 & 0.02126 & 0.01983 & 0.01983 & 0.903\\
3 & Boundary & 1.000 & 0.02090 & 0.01947 & 0.01947 & 0.922\\
3 & Mixed & 1.000 & 0.02103 & 0.01803 & 0.01803 & 0.922\\
5 & Boundary & 0.722 & 0.01109 & 0.02426 & 0.02029 & 0.933\\
5 & Mixed & 0.723 & 0.02043 & 0.01756 & 0.01734 & 0.933\\
\bottomrule
\end{tabular}
\begin{minipage}{0.92\linewidth}
\vspace{2mm}
{\footnotesize\begin{singlespace}
Notes: Results use 1,000 replications, estimated covariance matrices, and the conservative screen. $L_0$, $L_1$, and $L_{BR}$ are mean finite-sample $H$-losses for loss projection, inverse covariance, and the replication-specific boundary-risk selector. $\overline{\hat B(\hat s_{BR})}$ averages the normalized criterion at each selected value. Selection uses the common grid, $\hat G_8(s)\leq1.004$ for $s>0$, and the analytical guarantee at $s=0$, as detailed in Appendix~\ref{appendix:replication_details}.
\end{singlespace}}
\end{minipage}
\end{table}

\subsection{Ordered-treatment simulation details}
\label{appendix:ordered_treatment_details}

Section~\ref{subsec:sim_treatment} defines the ordered-treatment-effects estimator, its covariance matrix, and the monotonicity restrictions. This appendix records the local configurations and sampling designs used in the simulation grid. Let $d=0$ denote the common control group and $d=1,\ldots,5$ the ordered treatment intensities. The maintained restrictions are
\[
0\leq\theta_1\leq\theta_2\leq\cdots\leq\theta_5.
\]
Writing $\delta_1=\theta_1$ and $\delta_j=\theta_j-\theta_{j-1}$ for $j\geq2$, the local designs parameterize increments by
\[
\delta_j=\frac{\mu_j\tau_{j,n}}{\sqrt n},
\qquad \tau_{j,n}^2=a_j'\Sigma_n a_j,
\]
whenever the increment is not held fixed away from zero. Table~\ref{tab:ordered_shapes_app} gives the designs.

\begin{table}[!htbp]
\centering
\caption{Ordered-treatment shape configurations}
\label{tab:ordered_shapes_app}
\footnotesize
\begin{tabular}{lll}
\toprule
Shape configuration & Fixed positive increments & Local standardized slacks\\
\midrule
Interior & $(\delta_1,\ldots,\delta_5)=(0.20,0.15,0.12,0.10,0.08)$ & --\\
One binding & $(\delta_1,\ldots,\delta_4)=(0.20,0.15,0.12,0.10)$ & $\mu_5=0$\\
Three binding & $(\delta_1,\delta_2)=(0.20,0.15)$ & $(\mu_3,\mu_4,\mu_5)=(0,0,0)$\\
Three local mixed & $(\delta_1,\delta_2)=(0.20,0.15)$ & $(\mu_3,\mu_4,\mu_5)=(0,0.5,2)$\\
All local mixed & -- & $(\mu_1,\ldots,\mu_5)=(0,0.5,1,1.5,2)$\\
Stress A/B & -- & $(\mu_1,\ldots,\mu_5)=(0,8,8,8,8)$\\
\bottomrule
\end{tabular}
\begin{minipage}{0.92\linewidth}
\vspace{2mm}
{\footnotesize\begin{singlespace}
Notes: Local standardized slacks are reported only for increments placed at the $n^{-1/2}$ scale. Fixed positive increments are asymptotically interior and are not assigned local slack values.
\end{singlespace}}
\end{minipage}
\end{table}

One baseline sampling design is balanced and homoskedastic. Unbalanced homoskedastic sampling uses shares $(0.35,0.10,0.10,0.15,0.15,0.15)$. The unbalanced heteroskedastic design uses the same shares with standard deviations $(1,1.4,1.2,1,0.8,1.5)$. Stress Case A uses shares $(0.05,0.19,0.19,0.19,0.19,0.19)$ and unit standard deviations. Stress Case B uses balanced shares and standard deviations $(3,1,1,1,1,1)$. Sample sizes are $n\in\{100,400,1600\}$. The feasible selector uses 1,000 replications in each cell, estimated covariance matrices, and the conservative screen. Table~\ref{tab:ordered_baseline_selector_app} reports the cell-level results.

\begin{table}[!htbp]
\centering
\caption{Ordered-treatment selector results by design cell}
\label{tab:ordered_baseline_selector_app}
\scriptsize
\begin{tabular}{llrrrrrr}
\toprule
Sampling & Shape configuration & \multicolumn{2}{c}{$n=100$} & \multicolumn{2}{c}{$n=400$} & \multicolumn{2}{c}{$n=1600$}\\
 & & $\overline{\hat B(\hat s_{BR})}$ & $L_{BR}/L_0$ & $\overline{\hat B(\hat s_{BR})}$ & $L_{BR}/L_0$ & $\overline{\hat B(\hat s_{BR})}$ & $L_{BR}/L_0$\\
\midrule
\multicolumn{8}{l}{Panel A: Baseline sampling designs}\\
Balanced homosk. & Interior & 0.957 & 0.930 & 0.954 & 0.952 & 0.952 & 0.997\\
 & One binding & 0.957 & 0.930 & 0.954 & 0.950 & 0.952 & 0.998\\
 & Three binding & 0.957 & 0.976 & 0.955 & 0.955 & 0.952 & 0.995\\
 & Three local mixed & 0.958 & 0.923 & 0.954 & 0.968 & 0.953 & 0.998\\
 & All local mixed & 0.957 & 0.951 & 0.954 & 0.949 & 0.952 & 0.920\\
\addlinespace[2pt]
Unbalanced homosk. & Interior & 0.945 & 0.996 & 0.941 & 0.988 & 0.940 & 0.997\\
 & One binding & 0.945 & 0.976 & 0.941 & 0.985 & 0.940 & 1.000\\
 & Three binding & 0.945 & 0.994 & 0.941 & 0.996 & 0.940 & 1.006\\
 & Three local mixed & 0.945 & 0.984 & 0.942 & 0.992 & 0.940 & 0.999\\
 & All local mixed & 0.944 & 0.978 & 0.941 & 0.977 & 0.940 & 0.985\\
\addlinespace[2pt]
Unbalanced heterosk. & Interior & 0.959 & 0.982 & 0.961 & 0.954 & 0.961 & 0.977\\
 & One binding & 0.960 & 0.988 & 0.960 & 0.974 & 0.961 & 0.986\\
 & Three binding & 0.959 & 1.025 & 0.961 & 1.002 & 0.961 & 1.008\\
 & Three local mixed & 0.960 & 0.957 & 0.961 & 0.967 & 0.961 & 0.982\\
 & All local mixed & 0.960 & 0.988 & 0.961 & 0.979 & 0.961 & 0.971\\
\midrule
\multicolumn{8}{l}{Panel B: Stress sampling designs}\\
Stress: small control & First binding, others far & 0.682 & 0.747 & 0.660 & 0.725 & 0.654 & 0.759\\
Stress: noisy control & First binding, others far & 0.594 & 0.617 & 0.580 & 0.626 & 0.578 & 0.627\\
\bottomrule
\end{tabular}
\begin{minipage}{0.94\linewidth}
\vspace{2mm}
{\footnotesize\begin{singlespace}
Notes: Each cell uses 1,000 replications. $\overline{\hat B(\hat s_{BR})}$ averages normalized boundary risk at the replication-specific selected values. $L_{BR}/L_0$ compares the selector's mean Euclidean loss with loss projection. The plug-in sufficient condition holds throughout. Selection applies the common-grid rule with the $0.004$ allowance in Appendix~\ref{appendix:replication_details} and chooses inverse covariance in every replication. The maximum cell-average endpoint diagnostic exceeds one by about $0.00021$, and all truly binding inequalities are retained.
\end{singlespace}}
\end{minipage}
\end{table}

The feasible covariance matrix replaces group variances by within-group sample variances. In a separate endpoint benchmark implemented in R, we compare the two endpoint projections using the true covariance matrix (oracle) and its estimate (feasible). Across the 45 baseline endpoint-risk comparisons, oracle inverse-covariance projection has lower point-estimated risk in 44 cells and feasible inverse-covariance projection in 40 cells. The mean feasible-to-oracle risk penalty for inverse-covariance projection is 1.83\% at $n=100$, 0.30\% at $n=400$, and 0.06\% at $n=1600$. Thus the conclusion from Table~\ref{tab:ordered_baseline_selector_app} is not driven by using estimated covariance matrices.

Figure~\ref{fig:treatment_selector} shows the path for $n=400$, balanced homoskedastic sampling, and standardized slacks $(0,0.5,1,1.5,2)$. The average finite-box diagnostic is close to one and the boundary criterion favors inverse covariance. Mean finite-sample loss has a shallow interior minimum, with loss at $s=1$ close to that minimum. The difference reflects the distinction between worst-face selection and risk at a particular DGP.

\begin{figure}[!htbp]
\centering
\caption{Boundary-risk selector in the ordered-treatment design}
\label{fig:treatment_selector}
\includegraphics[width=\textwidth]{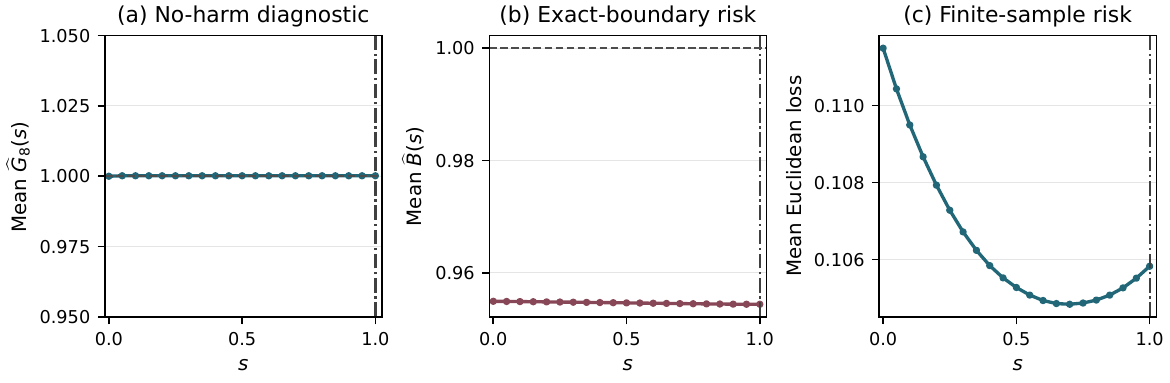}
\begin{minipage}{0.92\linewidth}
\vspace{1mm}
{\footnotesize\begin{singlespace}
Notes: Average finite-box diagnostic $\hat G_8(s)$, boundary-risk criterion $\hat B(s)$, and mean Euclidean loss for $n=400$, balanced homoskedastic sampling, and standardized slacks $(0,0.5,1,1.5,2)$. All panels use the same 1,000 replications as the tables. The vertical marker is $\bar s_{BR}$, and the horizontal dashed line is one. The value $\hat G_8(0)=1$ is imposed analytically.
\end{singlespace}}
\end{minipage}
\end{figure}

\FloatBarrier
\subsection{Flexible-IV simulation details}
\label{appendix:flexible_iv_details}

This subsection gives the data-generating process used in the flexible-IV simulations in Section~\ref{subsec:sim_npiv}. All functions have compact domain $[0,1]$. The structural equation is
\[
Y_i=g(X_i)+\varepsilon_i,
\qquad
E[\varepsilon_i\mid Z_i]=0.
\]
Let $\xi_i,e_i,\nu_i,\kappa_i$ be independent standard normal variables. The instrument and endogenous covariate are
\[
Z_i=\Phi(\xi_i),
\qquad
X_i=\Phi\!\left(\rho \xi_i+\sqrt{1-\rho^2}\,e_i\right),
\]
so that both variables are uniformly distributed on $[0,1]$ and $\rho$ controls instrument strength. The error is
\[
\varepsilon_i
=
\sigma\left(\eta e_i+\sqrt{1-\eta^2}\nu_i\right)
+4\kappa_i 1\{Z_i>0.75\}.
\]
The second term makes the design heteroskedastic while preserving the conditional moment restriction. We set $\rho=0.7$, $\sigma=0.5$, and $\eta=0.3$.

The function $g:[0,1]\rightarrow \mathbb{R}$ is approximated by the basis $(1,x,x^2)$ and the instrument basis is $(1,z,z^2)$. The two designs in Section~\ref{subsec:sim_npiv} set $g(x)=x^2$ and $g(x)=0$, respectively. The monotonicity restriction on $[0,1]$ is imposed through $\theta_2\geq0$ and $\theta_2+2\theta_3\geq0$. Integrated squared error corresponds to the coefficient loss matrix
\[
H=
\begin{pmatrix}
1 & 1/2 & 1/3\\
1/2 & 1/3 & 1/4\\
1/3 & 1/4 & 1/5
\end{pmatrix}.
\]
The reported MISE is $E[(\hat\theta-\theta)'H(\hat\theta-\theta)]$, scaled by 1,000.

\begin{figure}[H]
\centering
\caption{Flexible IV with one binding monotonicity constraint}
\captionsetup[subfigure]{position=top,justification=centering,singlelinecheck=true,skip=1pt}
\begin{subfigure}[t]{0.30\textwidth}
\centering
\caption{2SLS}
{\footnotesize MISE $\times 1000$ = 24.86\par}
\vspace{0.25em}
\includegraphics[width=\linewidth]{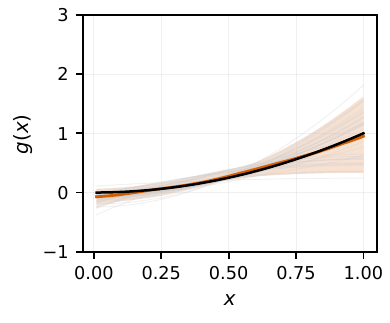}
\end{subfigure}\hfill
\begin{subfigure}[t]{0.30\textwidth}
\centering
\caption{Eff. IV / inv.-cov.}
{\footnotesize MISE $\times 1000$ = 15.69\par}
\vspace{0.25em}
\includegraphics[width=\linewidth]{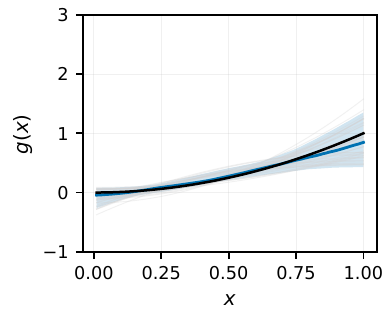}
\end{subfigure}\hfill
\begin{subfigure}[t]{0.30\textwidth}
\centering
\caption{Rearrangement}
{\footnotesize MISE $\times 1000$ = 32.42\par}
\vspace{0.25em}
\includegraphics[width=\linewidth]{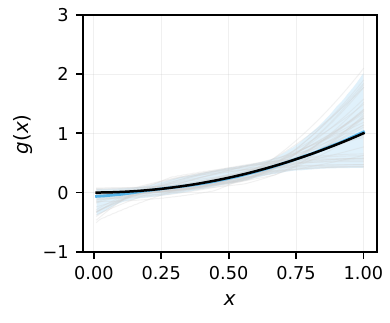}
\end{subfigure}
\vspace{0.8em}

\begin{subfigure}[t]{0.30\textwidth}
\centering
\caption{Euclidean projection}
{\footnotesize MISE $\times 1000$ = 757.91\par}
\vspace{0.25em}
\includegraphics[width=\linewidth]{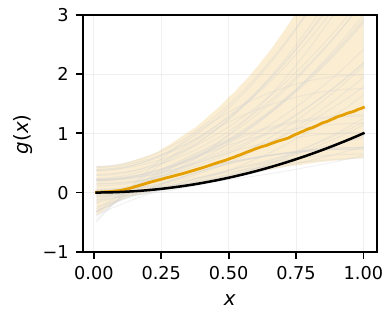}
\end{subfigure}\hfill
\begin{subfigure}[t]{0.30\textwidth}
\centering
\caption{Loss projection}
{\footnotesize MISE $\times 1000$ = 26.69\par}
\vspace{0.25em}
\includegraphics[width=\linewidth]{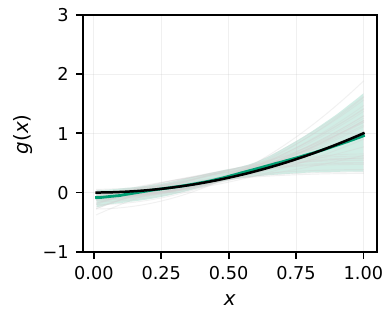}
\end{subfigure}\hfill
\begin{subfigure}[t]{0.30\textwidth}
\centering
\caption{Unrestricted}
{\footnotesize MISE $\times 1000$ = 47.05\par}
\vspace{0.25em}
\includegraphics[width=\linewidth]{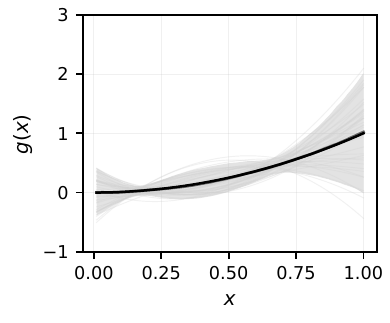}
\end{subfigure}
\begin{minipage}{0.90\linewidth}
\vspace{1mm}
{\footnotesize\begin{singlespace}
Notes: The panels use the common vertical scale $[-1,3]$. Colored lines and shading show pointwise medians and 5th--95th percentiles across all 1,000 replications. Black lines show the true function and faint lines the first 40 replications. The Euclidean projection panel is clipped because its dispersion is much larger. The identity GMM estimator is reported in Table~\ref{tab:npiv_main}; visually, it is close to 2SLS in this design.
\end{singlespace}}
\end{minipage}
\label{fig:npiv_quadratic_panels}
\end{figure}

\begin{figure}[!t]
\centering
\caption{Flexible IV with two binding monotonicity constraints}
\captionsetup[subfigure]{position=top,justification=centering,singlelinecheck=true,skip=1pt}
\begin{subfigure}[t]{0.30\textwidth}
\centering
\caption{2SLS}
{\footnotesize MISE $\times 1000$ = 14.46\par}
\vspace{0.25em}
\includegraphics[width=\linewidth]{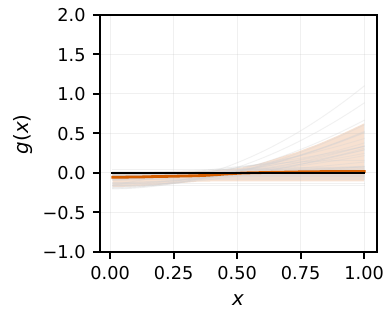}
\end{subfigure}\hfill
\begin{subfigure}[t]{0.30\textwidth}
\centering
\caption{Eff. IV / inv.-cov.}
{\footnotesize MISE $\times 1000$ = 4.09\par}
\vspace{0.25em}
\includegraphics[width=\linewidth]{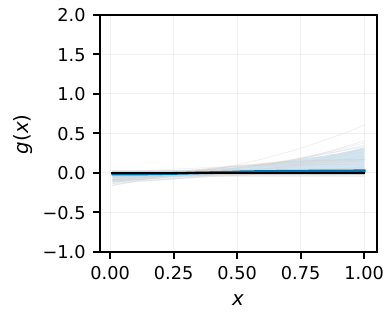}
\end{subfigure}\hfill
\begin{subfigure}[t]{0.30\textwidth}
\centering
\caption{Rearrangement}
{\footnotesize MISE $\times 1000$ = 49.41\par}
\vspace{0.25em}
\includegraphics[width=\linewidth]{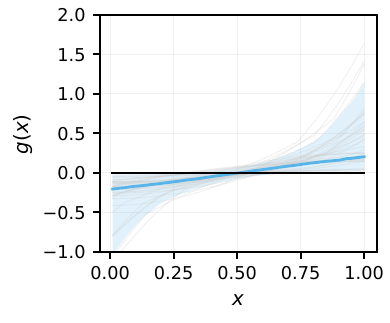}
\end{subfigure}
\vspace{0.8em}

\begin{subfigure}[t]{0.30\textwidth}
\centering
\caption{Euclidean projection}
{\footnotesize MISE $\times 1000$ = 1041.56\par}
\vspace{0.25em}
\includegraphics[width=\linewidth]{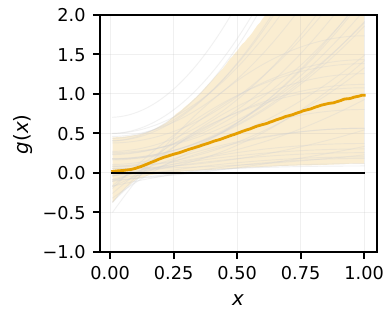}
\end{subfigure}\hfill
\begin{subfigure}[t]{0.30\textwidth}
\centering
\caption{Loss projection}
{\footnotesize MISE $\times 1000$ = 17.75\par}
\vspace{0.25em}
\includegraphics[width=\linewidth]{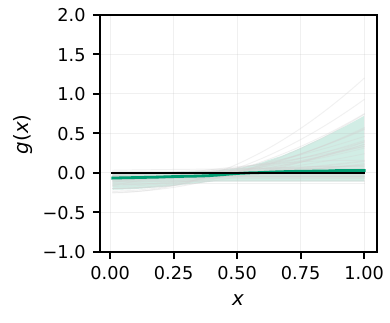}
\end{subfigure}\hfill
\begin{subfigure}[t]{0.30\textwidth}
\centering
\caption{Unrestricted}
{\footnotesize MISE $\times 1000$ = 49.42\par}
\vspace{0.25em}
\includegraphics[width=\linewidth]{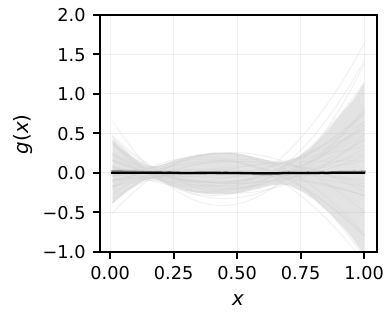}
\end{subfigure}
\begin{minipage}{0.90\linewidth}
\vspace{1mm}
{\footnotesize\begin{singlespace}
Notes: The panels use the common vertical scale $[-1,2]$. Colored lines and shading show pointwise medians and 5th--95th percentiles across all 1,000 replications. Black lines show the true function and faint lines the first 40 replications. The Euclidean projection panel is clipped because its dispersion is much larger. The constant function has two binding inequalities.
\end{singlespace}}
\end{minipage}
\label{fig:npiv_constant_panels}
\end{figure}

\FloatBarrier
\subsection{Illustrative geometry in two dimensions}
\label{appendix:geometry_illustrations}

This subsection complements the main-text illustration in Figure~\ref{fig:linear_iv_projection_geometries}. The purpose is not to add Monte Carlo evidence for Section~\ref{sec:simulations}, but to make the geometry of projection visible. Figure~\ref{fig:linear_iv_projection_geometries} uses a just-identified linear IV design with two regressors and two instruments. The raw instruments are binary with \(P(Z_1=1)=1/2\) and \(P(Z_2=1\mid Z_1)=1/4+Z_1/2\), recentered before estimation. The structural error has standard deviation one when the two raw instruments agree and 0.2 otherwise, the two regressors are \(X_j=0.1\varepsilon+0.9Z_j\), and the true coefficient is \(\theta_0=(0,0)'\). The maintained restriction is \(\theta_2\geq0\). The figure uses \(n=10{,}000\), 1,000 Monte Carlo replications, and compares the projection geometries generated by 2SLS, efficient IV, the identity GMM weight, and \(\Omega=I\). For this illustration, the efficient weight is computed using the simulated structural errors.

Figure~\ref{fig:active_face_geometry} abstracts further from IV and uses a Gaussian projection experiment to show that pointwise rankings of projection geometries can change with the active face of the constraint set.

\begin{figure}[!htbp]
\centering
\caption{Active faces and pointwise projection risk}
\label{fig:active_face_geometry}
\includegraphics[width=0.86\textwidth]{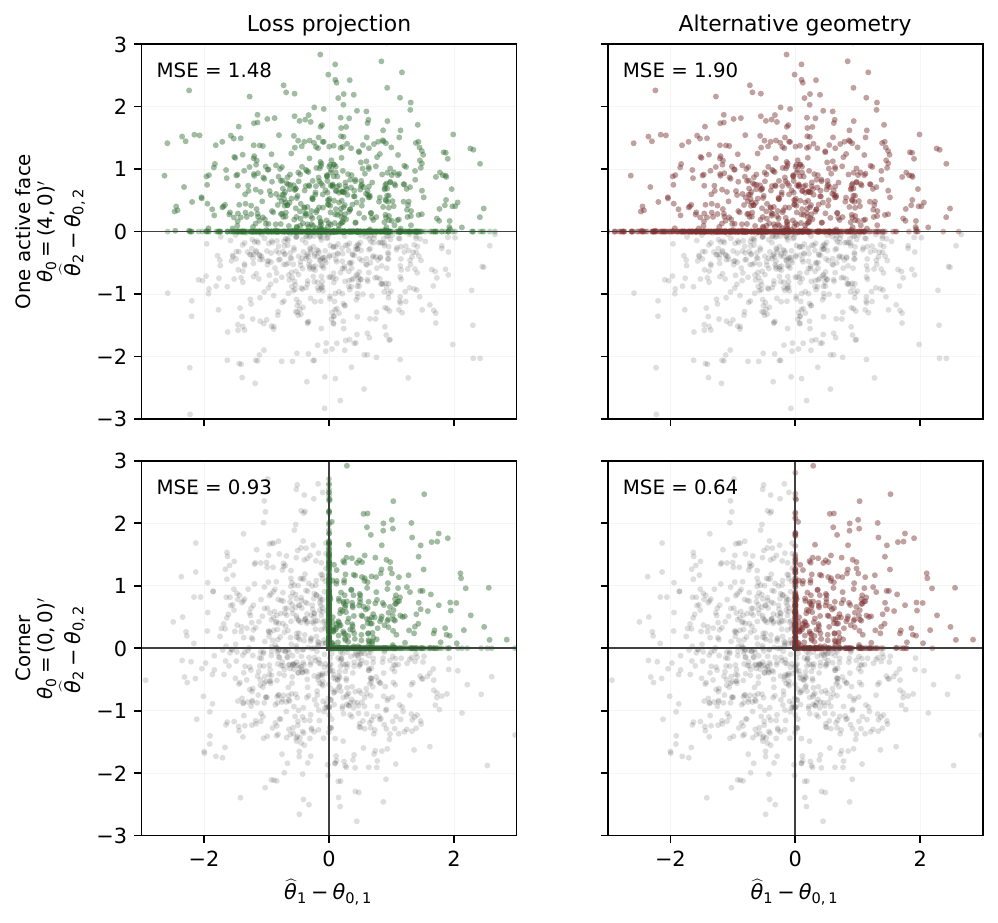}
\begin{minipage}{0.90\linewidth}
\vspace{1mm}
{\footnotesize\begin{singlespace}
Notes: The feasible set is the nonnegative orthant and the panels plot estimation errors relative to the true value. The top row considers a point on one active face, while the bottom row considers the corner. The same two projection geometries are compared in both rows. The example illustrates why a projection geometry that is pointwise attractive for one active set need not remain pointwise attractive for another active set.
\end{singlespace}}
\end{minipage}
\end{figure}

The feasible set is $\mathbb R^2_+$, the loss and sampling covariance are both the identity, and the two geometries are $\Omega=I_2$ and $\Omega=\begin{pmatrix}1&-0.9\\-0.9&1\end{pmatrix}^{-1}$. The two true values are $(4,0)'$ and $(0,0)'$. The figure plots estimation errors relative to these true values, and the reported MSEs use 4,000 Gaussian draws.
\FloatBarrier

\section{Additional application results}
\label{appendix:app_additional}

\begin{table}[H]
\centering
\caption{Boundary risk by number of binding inequalities}
\label{tab:app_test_minimax}
\footnotesize
\renewcommand{\arraystretch}{0.9}
\begin{tabular}{rrr}
\toprule
Binding inequalities & Attainable faces & Maximum relative risk \\
\midrule
1 & 10 & 0.892\\
2 & 45 & 0.866\\
3 & 120 & 0.797\\
4 & 210 & 0.785\\
5 & 252 & 0.770\\
6 & 210 & 0.770\\
7 & 120 & 0.540\\
8 & 45 & 0.280\\
9 & 10 & 0.098\\
10 & 1 & 0.050\\
\bottomrule
\end{tabular}
\begin{minipage}{0.90\linewidth}
\vspace{1mm}
{\footnotesize\begin{singlespace}
Notes: For each number of binding inequalities, the last column reports the largest estimated exact-boundary risk of efficient IV/inverse-covariance projection among faces of that size, normalized by $\operatorname{tr}(H_{\mathrm{all}}\hat\Sigma)$. All 1023 attainable nonempty faces are evaluated, using 8,192 scrambled Sobol draws per face with subset-specific scrambles.
\end{singlespace}}
\end{minipage}
\end{table}

The price and instrument bases use internal knots at the empirical quartiles
and repeated boundary knots at the sample extrema. Each six-function quadratic
spline basis is evaluated on its closed support, including the upper endpoint,
and sums to one throughout. Derivative restrictions use the five distinct price
knots, with one-sided derivatives at the endpoints and no inward displacement.
The covariance estimate is the heteroskedasticity-robust sandwich covariance of
the just-identified IV estimator, whose inverse gives the efficient-IV/inverse-covariance geometry.

The empirical loss matrix $H_{\mathrm{all}}$ is positive definite in the
16-dimensional parameterization, with smallest eigenvalue $1.31\times10^{-5}$
and condition number $3.87\times10^6$, so loss projection is well defined.
The low/high-income restriction matrix has rank 10, making all 1023 nonempty
subsets attainable. As explained in Section~\ref{sec:application}, this system
also enforces monotonicity at every intervening income.

\end{document}